\providecommand{\tabularnewline}{\\}
\theoremstyle{definition}
\newtheorem{defn}{\protect\definitionname}
\theoremstyle{plain}
\newtheorem{prop}{\protect\propositionname}
\theoremstyle{plain}
\newtheorem{thm}{\protect\theoremname}
\theoremstyle{definition}
 \newtheorem{example}{\protect\examplename}
\theoremstyle{plain}
\newtheorem{lem}{\protect\lemmaname}
\theoremstyle{plain}
\newtheorem{fact}{\protect\factname}
\providecommand{\definitionname}{Definition}
\providecommand{\examplename}{Example}
\providecommand{\factname}{Fact}
\providecommand{\lemmaname}{Lemma}
\providecommand{\propositionname}{Proposition}
\providecommand{\theoremname}{Theorem}
\begin{document}
\title{Player-Compatible Learning and Player-Compatible Equilibrium\thanks{We thank Alessandro Bonatti, Dan Clark, Glenn Ellison, Ben Golub,
Shengwu Li, Dave Rand, Alex Wolitzky, and Muhamet Yildiz for valuable
conversations and comments. We thank National Science Foundation grant
SES 1643517 for financial support, and Giacomo Lanzani for research
assistance.}}
\author{Drew Fudenberg\thanks{Department of Economics, MIT. Email: \texttt{\protect\href{mailto:drew.fudenberg\%40gmail.com}{drew.fudenberg@gmail.com}}}
\and Kevin He\thanks{California Institute of Technology and University of Pennsylvania.
Email: \texttt{\protect\href{mailto:hesichao\%40gmail.com}{hesichao@gmail.com}}}}
\date{{\normalsize{}}%
\begin{tabular}{rl}
First version: & September 23, 2017\tabularnewline
This version: & May 27, 2020\tabularnewline
\end{tabular}}
\maketitle
\begin{abstract}
\emph{Player-Compatible Equilibrium} (PCE) imposes cross-player restrictions
on the magnitudes of the players' \textquotedblleft trembles\textquotedblright{}
onto different strategies. These restrictions capture the idea that
trembles correspond to deliberate experiments by agents who are unsure
of the prevailing distribution of play. PCE selects intuitive equilibria
in a number of examples where trembling-hand perfect equilibrium \citep{selten1975reexamination}
and proper equilibrium \citep{myerson1978refinements} have no bite.
We show that rational learning and weighted fictitious play imply
our compatibility restrictions in a steady-state setting.
\end{abstract}
\emph{Keywords: }non-equilibrium learning, equilibrium refinements,
trembling-hand perfect equilibrium, weighted fictitious play.

\thispagestyle{empty}
\setcounter{page}{0}
\begin{flushleft}
{\small{}\newpage}{\small\par}
\par\end{flushleft}

\section{Introduction}

Starting with \citet{selten1975reexamination}, a number of papers
have used the device of vanishingly small ``trembles'' to refine
the set of Nash equilibria. This paper introduces \emph{player-compatible
equilibrium }(PCE), which extends the tremble-based approach. PCE
requires that the trembles used to support an equilibrium satisfy
\emph{player compatibility}, which imposes restrictions on how one
player's trembles compare to those of another. We say player $i$
is more player-compatible with strategy $s_{i}^{*}$ than player $j$
is with strategy $s_{j}^{*}$ if whenever $s_{j}^{*}$ is optimal
for $j$ against some correlated profile $\sigma$, $s_{i}^{*}$ is
optimal for $i$ against any profile $\hat{\sigma}$ matching $\sigma$
in terms of the strategies of players other than $i$ and $j$. PCE
is invariant to the utility representations of players' preferences
over game outcomes, and provides a link between tremble-based refinements
and learning-in-games. As we will explain, PCE interprets ``trembles''
not as errors, but as players' deliberate experiments to learn how
others play. Its cross-player tremble restrictions derive from an
analysis of the relative frequencies of experiments that different
players choose to undertake over time under a number of commonly used
learning policies.

Section \ref{sec:PCE} defines player compatibility and PCE, studies
their basic properties, and proves that PCE exist in all finite games.
The compatibility relation is easiest to satisfy when $i$ and $j$
are ``non-interacting,'' meaning that their payoffs do not depend
on each other's play. But PCE can have bite even when all players
interact with each other, provided that the interactions are not too
strong. Moreover, as shown by the examples in Section \ref{sec:Examples-of-PCE},
PCE can rule out seemingly implausible equilibria that other tremble-based
refinements such as trembling-hand perfect equilibrium \citep{selten1975reexamination}
and proper equilibrium \citep{myerson1978refinements} cannot eliminate.

One of these examples is a ``link-formation game,'' where players
are split into two sides, and each player decides whether or not to
pay a cost to be \textbf{Active} and form links with all of the active
players on the other side. Players with lower costs are more compatible
with \textbf{Active} and so experiment with it more. In the ``anti-monotonic''
version of the game, players who incur a higher private cost of link
formation give lower benefits to their linked partners; in the ``co-monotonic''
version, higher cost players give others higher benefits. In the anti-monotonic
version the only PCE outcome is for all players to choose \textbf{Active},
because the experimentation of the low-cost players induces all players
on the other side to be \textbf{Active} as well. On the other hand,
both ``all \textbf{Active}'' and ``all \textbf{Inactive''} are
PCE outcomes in the co-monotonic case. In contrast to PCE making different
predictions in the two versions of the game, other equilibrium refinements
make the same predictions whether payoffs are anti-monotonic or co-monotonic.

To provide motivation for player-compatible trembles, we study a learning
framework where agents are born into different player roles and repeatedly
play the stage game. They face some time-invariant distribution of
opponents' play, as they would in a steady state of a model where
a continuum of anonymous agents are randomly matched each period.
We compare the experimentation behavior of agents in different player
roles who follow ``index learning policies.'' These policies assign
a numerical index to each strategy that only depends on data from
periods when that strategy was used, and play the strategy with the
highest index. We formulate an \emph{index compatibility} condition
for index policies, and use a coupling argument to show that any index
policies for $i$ and $j$ satisfying this index-compatibility condition
for strategies $s_{i}^{*}$ and $s_{j}^{*}$ will lead to $i$ experimenting
relatively more with $s_{i}^{*}$ than $j$ with $s_{j}^{*}$ over
their lifetimes against any distribution of opponents' play.

Index compatibility provides a general condition for $i$ to choose
$s_{i}^{*}$ more often than $j$ chooses $s_{j}^{*}$. This condition
applies across a range of observation structures and (not necessarily
optimal) learning policies. We link player compatibility with index
compatibility for two canonical learning policies in a class of ``factorable
games.'' In these games, playing a strategy $s_{i}$ reveals how
opponents played at all the information sets that are relevant for
$i's$ payoff when they play $s_{i},$ but gives no information about
the payoffs of $i's$ other strategies. We show that player compatibility
implies index compatibility for the rational learning policy given
by the Gittins index, and for the weighted fictitious play heuristic
\citep{cheung1997individual}. Interpreting trembles as play frequencies
during a learning process, our analysis provides a learning foundation
for PCE's cross-player tremble restrictions --- in the link-formation
game, for example, it justifies the idea that low-cost agents ``tremble
onto'' \textbf{Active} more frequently than high-cost ones do.

\subsection{Related Work}

\subsubsection{Tremble-Based Refinements}

Tremble-based solution concepts date back to \citet{selten1975reexamination},
who thanks Harsanyi for suggesting them. These solution concepts consider
totally mixed strategy profiles where players do not play an exact
best reply to the strategies of others, but may assign positive probability
to some or all strategies that are not best replies. Different solution
concepts in this class consider different kinds of ``trembles,''
but they all make predictions based on the limits of these non-equilibrium
strategy profiles as the probability of trembling tends to zero. Since
we compare PCE to these refinements below, we summarize them here
for the reader's convenience.

An \emph{$\epsilon$-perfect equilibrium} is a totally mixed strategy
profile where every non-best reply has weight less than $\epsilon$.
A limit of $\epsilon_{t}$-perfect equilibria where $\epsilon_{t}\to0$
is called a \emph{trembling-hand perfect equilibrium}. An \emph{$\epsilon$-proper
equilibrium} is a totally mixed strategy profile $\sigma$ where for
every player $i$ and strategies $s_{i}$ and $s_{i}'$ if $u_{i}(s_{i},\sigma_{-i})<u_{i}(s_{i}',\sigma_{-i})$
then $\sigma_{i}(s_{i})<\epsilon\cdot\sigma_{i}(s_{i}^{'})$. A limit
of $\epsilon_{t}$-proper equilibria where $\epsilon_{t}\to0$ is
called a \emph{proper equilibrium; }in this limit a more costly tremble
is infinitely less likely than a less costly one, regardless of the
cost difference.\emph{ Approachable equilibrium} \citep{van1987stability}
is also based on the idea that strategies with worse payoffs are played
less often. It too is the limit of $\epsilon_{t}$-perfect equilibria,
but where the players pay control costs to reduce their tremble probabilities.
When these costs are ``regular,'' all of the trembles are of the
same order. Because PCE does not require that the less likely trembles
are infinitely less likely than more likely ones, it is closer to
approachable equilibrium than to proper equilibrium. The \emph{strategic
stability }concept of \citet{kohlberg_strategic_1986} is also defined
using trembles, but applies to components of Nash equilibria as opposed
to single strategy profiles.

Unlike PCE, proper equilibrium and approachable equilibrium do not
impose cross-player restrictions on the relative probabilities of
various trembles. For this reason, these equilibrium concepts reduce
to perfect Bayesian equilibrium in signaling games with two possible
signals, such as the beer-quiche game of \citet{cho_signaling_1987},
when each type of the sender is viewed as a different player. They
do impose restrictions when applied to the ex-ante form of the game,
i.e., at the stage before the sender has learned their type. However,
as \citet{cho_signaling_1987} point out, evaluating the cost of mistakes
at the ex-ante stage of a signaling game means that the interim losses
are weighted by the prior distribution over sender types, so that
less likely types are more likely to tremble. In addition, applying
a different positive linear rescaling to each type's utility function
preserves every type's preference over lotteries on outcomes, but
changes the sets of proper and approachable equilibria, while such
utility rescalings have no effect on the set of PCE. In light of these
issues, we always apply tremble-based refinements at the interim stage
in Bayesian games.

Like PCE, \emph{extended proper equilibrium} \citep{milgrom_mollner_EP}
places restrictions on the relative probabilities of tremble by different
players, but it does so in a different way: An extended proper equilibrium
is the limit of $(\boldsymbol{\beta}$,$\epsilon_{t})-$proper equilibria,
where $\boldsymbol{\beta}=(\beta_{1},...\beta_{I})$ is a strictly
positive vector of utility re-scaling, and $\sigma_{i}(s_{i})<\epsilon_{t}\cdot\sigma_{j}(s_{j})$
if player $i$'s rescaled loss from $s_{i}$ (compared to the best
response) is less than $j$'s loss from $s_{j}$. In a signaling game
with only two possible signals, every Nash equilibrium where each
sender type strictly prefers not to deviate from their equilibrium
signal is an extended proper equilibrium at the interim stage, because
suitable utility rescalings for the types can lead to any ranking
of their utility costs of deviating to the off-path signal. By contrast,
Proposition \ref{prop:compare_CC} shows every PCE must satisfy the
compatibility criterion of \citet{fudenberg_he_2017}, which has bite
even in binary signaling games such as the beer-quiche example of
\citet{cho_signaling_1987}. So an extended proper equilibrium need
not be a PCE, a fact that Examples \ref{exa: restaurant} and \ref{exa:link4}
further demonstrate. Conversely, because extended proper equilibrium
makes some trembles infinitely less likely than others, it can eliminate
some PCE.\footnote{Example available on request.}

\subsubsection{The Learning Foundations of Equilibrium}

This paper builds on the work of \citet{fudenberg_steady_1993} and
\citet{fudenberg_learning_1995,fudenbergKreps1994learning} on learning
foundations for self-confirming and Nash equilibrium. It is also related
to recent work that that provides explicit learning foundations for
various equilibrium concepts that reflect ambiguity aversion, misspecified
priors, or model uncertainty, such as \citet*{battigalli2016analysis},
\citet*{battigalli2017LRBiases}, \citet{esponda_berknash_2016},
\citet*{fudenberg2020limits} and \citet{lehrer2012partially}. Unlike
those papers, we focus on characterizing the relative rates with which
different players experiment with strategies that are not myopically
optimal. For this reason our analysis of learning is closer to \citet{fudenberg_superstition_2006}
and \citet{fudenberg_he_2017}.

Our investigation of learning dynamics significantly expands on that
of \citet{fudenberg_he_2017}, which focused on a particular learning
policy (rational Bayesians) in a restricted set of games (signaling
games). In contrast, our analysis applies more broadly to any index
policies that satisfy an \emph{index compatibility} condition. We
show that two strategies of $i$ and $j$ ranked by player compatibility
lead to index-compatible learning policies in the class of ``factorable
games'' defined in Section \ref{sec:Factorability-and-Isomorphic},
under both rational learning and weighted fictitious play. We develop
new tools to deal with new issues that arise in these more general
games. For instance, \citet{fudenberg_he_2017} compare the Gittins
indices of different sender types in signaling games using the fact
that any stopping time (for the auxiliary optimal-stopping problem
defining the index) of the less-compatible type is also feasible for
the more-compatible type. But our general setting allows player roles
to interact, so it is not always valid to exchange the stopping times
of two different roles. A feasible stopping time for $i$ in the auxiliary
problem only conditions on past observations of $-i$'s play, but
the optimal stopping time for $j\ne i$ may condition on past observations
of $i$'s play in environments where $i$ and $j$ interact.

We deal with this problem by showing how $i$ can nevertheless construct
a feasible stopping time that \emph{mimics} an infeasible one of $j.$
Moreover, when a player faces more than one opponent, their optimal
experimentation policy may lead them to observe a correlated distribution
of opponents' play, even though the opponents do no actually play
correlated strategies. We discuss this issue of \emph{endogenous correlation}
in Section \ref{subsec:The-Heuristic-Gittins}; it is the reason we
define PCE in terms of correlated play.

In methodology the paper is related to other work on active learning
and experimentation. In single-agent settings, these include \citet{doval2018whether},
\citet{francetich2020choosing1,francetich2020choosing2}, and \citet{fryer2017two}.
In multi-agent settings additional issues arise such as free-riding
and encouraging others to learn, see e.g., \citet{bolton1999strategic},
\citet{keller2005strategic}, \citet{klein2011negatively}, \citet*{heidhues2015strategic},
\citet{frick2015innovation}, \citet*{halac2016optimal}, \citet{strulovici2010learning},
and the survey by \citet{horner2016learning}. Unlike most models
of multi-agent bandit problems, our agents only learn from personal
histories, not from the actions or histories of others. Our focus
is the comparison of experimentation policies under different payoff
parameters, which is central to PCE's cross-player tremble restrictions.

\section{\label{sec:PCE}Player-Compatible Equilibrium\protect 
}In this section, we develop a concept of the relative ``compatibility''
between two player-strategy pairs and discuss its properties. We then
introduce PCE, which builds cross-player tremble restrictions based
on this compatibility relation into an equilibrium concept.

Like proper equilibrium, PCE is defined on the strategic form of a
game. Of course many extensive forms can have the same strategic form,
and the learning motivation for PCE and player-compatible trembles
does depend on the underlying extensive form and the feedback structure,
but we postpone these issues until Section \ref{sec:Indexable-Learning-Rules}.

\subsection{Player Compatibility}

Consider a game in its strategic form with a finite set of players
$i\in\mathbb{I}$, finite strategy sets $|\mathbb{S}_{i}|\ge2$,\footnote{If $\mathbb{S}_{i}=\{s_{i}^{*}\}$ is a singleton, we would have $s_{i}^{*}\succsim s_{j}$
and $s_{j}\succsim s_{i}^{*}$ for any strategy $s_{j}$ of any player
$j$ if we follow the convention that the maximum over an empty set
is $-\infty$.} and utility functions $u_{i}:\mathbb{S}\to\mathbb{R}$, where $\mathbb{S}:=\times_{i}\mathbb{S}_{i}$.
For each $i,$ let $\Delta(\mathbb{S}_{i})$ denote the set of mixed
strategies for $i$. Let $\Delta^{\circ}(\mathbb{S})$ represent the
interior of $\Delta(\mathbb{S})$, the set of full-support correlated
strategy profiles. We now define an incomplete or partial order on
strategy-player pairs.
\begin{defn}
\label{def:compatible_with}For player $i\ne j$ and strategies $s_{i}^{*}\in\mathbb{S}_{i}$,
$s_{j}^{*}\in\mathbb{S}_{j}$, $i$ \emph{is more player-compatible
with $s_{i}^{*}$ than $j$ is with $s_{j}^{*}$,} abbreviated as
$s_{i}^{*}\succsim s_{j}^{*}$,\footnote{This notation is unambiguous provided $i$ and $j$ have disjoint
strategy sets. In the event that $i$ and $j$ share some strategies,
we will clarify this notation by attaching player subscripts.} if for every totally mixed correlated strategy profiles $\sigma\in\Delta^{\circ}(\mathbb{S})$
with 
\[
\sum_{s\in\mathbb{S}}u_{j}(s_{j}^{*},s_{-j})\cdot\sigma(s)=\max_{s_{j}^{'}\in\mathbb{S}_{j}}\sum_{s\in\mathbb{S}}u_{j}(s_{j}^{'},s_{-j})\cdot\sigma(s),
\]
we get 
\[
\sum_{s\in\mathbb{S}}u_{i}(s_{i}^{*},s_{-i})\cdot\tilde{\sigma}(s)>\max_{s_{i}^{''}\in\mathbb{S}_{i}\backslash\{s_{i}^{*}\}}\sum_{s\in\mathbb{S}}u_{i}(s_{i}^{''},s_{-i})\cdot\tilde{\sigma}(s)
\]
 for every totally mixed correlated strategy profile $\tilde{\sigma}\in\Delta^{\circ}(\mathbb{S})$
satisfying $\text{marg}_{-ij}(\sigma)=\text{marg}_{-ij}(\tilde{\sigma})$.
\end{defn}
In words, if $s_{j}^{*}$ is weakly optimal for the less-compatible
$j$ against $\sigma$, then $s_{i}^{*}$ is strictly optimal for
the more-compatible $i$ against any $\tilde{\sigma}$ whose marginal
on $-ij$'s play agrees with the marginal of $\sigma$. The compatibility
condition does not depend on the particular expected utility functions
used to represent the players' preferences over probability distributions
on $\mathbb{S}$.

We argue below in Theorem \ref{thm:crossplayer_tremble_foundation}
that play in the learning model is constrained by the compatibility
relation, and that result is stronger when the compatibility relation
is more complete. Since $\Delta^{\circ}(\mathbb{S})\subseteq\Delta(\mathbb{S})$,
the compatibility relation is more complete than an alternative definition
that replaces totally mixed strategy profiles with any correlated
strategy profile, and Theorem \ref{thm:crossplayer_tremble_foundation}
would continue to hold with this alternative definition. We restrict
to totally mixed strategies here to get a sharper conclusion. The
restriction fits with our assumptions in the learning model that all
agents have full-support prior beliefs about opponents' strategies
(for rational Bayesians) or strictly positive initial counts (for
weighted fictitious play). Conversely, since $\times_{i}\Delta^{\circ}(\mathbb{S}_{i})\subseteq\Delta^{\circ}(\mathbb{S}),$
our definition of compatibility ranks fewer strategy-player pairs
than an alternative definition that only considers mixed strategy
profiles with independent mixing between different opponents.\footnote{Formally, this alternative definition would replace ``totally mixed
correlated strategy profiles'' with ``independently and totally
mixed strategy profiles'' in the definition of $s_{i}^{*}\succsim s_{j}^{*}$.} We need to use the more stringent definition to match the microfoundations
of our compatibility-based cross-player restrictions: the definition
that only considers independent mixing imposes restrictions that the
learning model does not imply.\footnote{One form of our microfoundation for player-compatible trembles considers
rational learners who choose strategies based on their Gittins index.
Even for learners who hold independent beliefs about opponents' play
at different information sets, a strategy's Gittins index need not
be its expected payoff against independent randomizations by the opponents,
but we show that the index is always the expected payoff against some
correlated strategy profile.}

The compatibility relation is transitive, as the next proposition
shows.
\begin{prop}
\label{prop:transitive} Suppose $s_{i}^{*}\succsim s_{j}^{*}\succsim s_{k}^{*}$
where $s_{i}^{*},s_{j}^{*},s_{k}^{*}$ are strategies of $i,j,k$.
Then $s_{i}^{*}\succsim s_{k}^{*}$.
\end{prop}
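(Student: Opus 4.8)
The plan is to chain the two given relations through a carefully built intermediate correlated profile. Throughout I take $i,j,k$ to be distinct players, as they must be for all three instances of $\succsim$ to be defined, and I write $L:=\mathbb{I}\setminus\{i,j,k\}$, so that $-ij=\{k\}\cup L$, $-jk=\{i\}\cup L$, and $-ik=\{j\}\cup L$. To prove $s_i^{*}\succsim s_k^{*}$, fix any $\sigma\in\Delta^{\circ}(\mathbb{S})$ for which $s_k^{*}$ is weakly optimal for $k$, and any $\tilde{\sigma}\in\Delta^{\circ}(\mathbb{S})$ with $\text{marg}_{-ik}(\sigma)=\text{marg}_{-ik}(\tilde{\sigma})$; the goal is to show $s_i^{*}$ is strictly optimal for $i$ against $\tilde{\sigma}$.

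The naive route --- feed $\tilde\sigma$ directly into $s_i^{*}\succsim s_j^{*}$ --- fails: that relation only delivers strict optimality for $i$ against profiles whose marginal on $-ij$ (which records $k$'s play) agrees with $\sigma$, whereas $\tilde\sigma$ is only required to agree with $\sigma$ on $-ik$ (which records $j$'s play, not $k$'s). I would bridge the gap with a profile $\sigma^{\dagger}\in\Delta^{\circ}(\mathbb{S})$ chosen so that $\text{marg}_{\{i\}\cup L}(\sigma^{\dagger})=\text{marg}_{\{i\}\cup L}(\sigma)$ and $\text{marg}_{\{k\}\cup L}(\sigma^{\dagger})=\text{marg}_{\{k\}\cup L}(\tilde{\sigma})$. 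These two marginal prescriptions are mutually consistent precisely because $j\in-ik$: agreement of $\sigma$ and $\tilde\sigma$ on $\text{marg}_{-ik}$ forces agreement on $\text{marg}_{L}$, which is the common restriction of the two prescribed marginals; and since $\sigma,\tilde\sigma$ have full support, both prescribed marginals have full support. Concretely, writing $\mu:=\text{marg}_{\{i\}\cup L}(\sigma)$, $\nu:=\text{marg}_{\{k\}\cup L}(\tilde{\sigma})$, $\rho:=\text{marg}_{L}(\sigma)=\text{marg}_{L}(\tilde{\sigma})$, and picking any $\lambda_{j}\in\Delta^{\circ}(\mathbb{S}_{j})$, one may take the conditional-independence coupling
\[
\sigma^{\dagger}(s_i,s_j,s_k,s_L)\ :=\ \frac{\mu(s_i,s_L)\,\nu(s_k,s_L)}{\rho(s_L)}\,\lambda_{j}(s_j),
\]
and a one-line computation confirms $\sigma^{\dagger}$ is a full-support probability distribution on $\mathbb{S}$ with the stated marginals on $\{i\}\cup L$ and $\{k\}\cup L$.

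With $\sigma^{\dagger}$ in hand the rest is bookkeeping. Since $\text{marg}_{-jk}(\sigma^{\dagger})=\text{marg}_{\{i\}\cup L}(\sigma^{\dagger})=\text{marg}_{-jk}(\sigma)$, and $s_k^{*}$ is weakly optimal for $k$ against $\sigma$, the relation $s_j^{*}\succsim s_k^{*}$ yields that $s_j^{*}$ is strictly --- hence weakly --- optimal for $j$ against $\sigma^{\dagger}$. Then, since $\text{marg}_{-ij}(\tilde{\sigma})=\text{marg}_{\{k\}\cup L}(\tilde{\sigma})=\text{marg}_{-ij}(\sigma^{\dagger})$, and $s_j^{*}$ is weakly optimal for $j$ against $\sigma^{\dagger}$, the relation $s_i^{*}\succsim s_j^{*}$ (applied with $\sigma^{\dagger}$ in the role of ``$\sigma$'' and $\tilde\sigma$ in the role of ``$\tilde\sigma$'') yields that $s_i^{*}$ is strictly optimal for $i$ against $\tilde{\sigma}$. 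As $\sigma$ and $\tilde\sigma$ were arbitrary subject to the hypotheses, this establishes $s_i^{*}\succsim s_k^{*}$.

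I expect the only genuine obstacle to be the construction of $\sigma^{\dagger}$: one must notice that the $-ik$-agreement of $\sigma$ and $\tilde\sigma$ is exactly what forces agreement on $\text{marg}_{L}$, which in turn is what makes the two marginal constraints on $\sigma^{\dagger}$ simultaneously satisfiable by a full-support profile. Everything else --- that strict optimality implies weak optimality, and the identifications $\{i\}\cup L=-jk$, $\{k\}\cup L=-ij$, $\{j\}\cup L=-ik$ --- is routine.
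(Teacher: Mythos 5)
Your proof is correct and follows essentially the same route as the paper's: both chain $s_j^{*}\succsim s_k^{*}$ and $s_i^{*}\succsim s_j^{*}$ through an intermediate totally mixed profile that agrees with $\sigma$ on $-jk$ and with $\tilde{\sigma}$ on $-ij$. The only difference is cosmetic --- you build the bridge profile as an explicit conditional-independence coupling given $-ijk$ with $j$'s play drawn independently, whereas the paper pastes the conditional law of $i$'s play given $-ik$ (taken from $\sigma$) onto the $-i$ marginal of $\tilde{\sigma}$ --- and both couplings deliver the same marginal agreements.
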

The compatibility relation is also asymmetric, except in some ``corner
cases.'' Say that a strategy is \emph{strictly interior dominant}
if it is strictly better than any other strategy versus any totally
mixed strategy profile of the opponents, and similarly say that it
is \emph{strictly interior dominated}\footnote{Recall that a strategy can be strictly dominated even though it is
not strictly dominated by any pure strategy.}\emph{ }if it is strictly dominated versus totally mixed opponent
strategy profiles.
\begin{prop}
\label{prop:asymm} If $s_{i}^{*}\succsim s_{j}^{*}$, then at least
one of the following is true: (i) $s_{j}^{*}\not\succsim s_{i}^{*}$;
(ii) $s_{i}^{*}$ is strictly interior dominated for $i$ and $s_{j}^{*}$
is strictly interior dominated for $j$; (iii) $s_{i}^{*}$ is strictly
interior dominant for $i$ and $s_{j}^{*}$ is strictly interior dominant
for $j.$\footnote{The converse of this statement is not true since the relation $\succsim$
is not in general complete: we could have neither $s_{i}^{*}\succsim s_{j}^{*}$
nor $s_{j}^{*}\succsim s_{i}^{*}$.}
\end{prop}
The proofs of Propositions \ref{prop:transitive} and \ref{prop:asymm}
are straightforward; they can be found in the Online Appendix.

The definition of player compatibility simplifies in the following
special case. A game has a \emph{multipartite structure} if the set
of players $\mathbb{I}$ can be divided into $C$ mutually exclusive
classes, $\mathbb{I}=\mathbb{I}_{1}\cup...\cup\mathbb{I}_{C}$, in
such a way that whenever $i$ and $j$ belong to the same class $i,j\in\mathbb{I}_{c}$,
(1) they are \emph{non-interacting}, meaning neither player's payoff
depends on the other's strategy; and (2) they have the same strategy
set, $\mathbb{S}_{i}=\mathbb{S}_{j}$. Every Bayesian game has a multipartite
structure when each type is viewed as a different player. As another
example, we will later use a complete-information game with a multipartite
structure, the link-formation game (Example \ref{exa:link4}), to
illustrate both PCE and the learning motivation for player-compatible
trembles.

In a game with multipartite structure with $i,j\in\mathbb{I}_{c}$,
we can write $u_{i}(s_{c},s_{-ij})$ without ambiguity for $s_{c}\in\mathbb{S}_{i}$,
since all augmentations of the strategy profile $s_{-ij}$ with a
strategy by player $j$ lead to the same payoff for $i$. For $s_{c}^{*}\in\mathbb{S}_{i}=\mathbb{S}_{j}$,
the definition of $s_{ic}^{*}\succsim s_{jc}^{*}$ reduces\footnote{We use $s_{ic}^{*}$ to refer to $i$'s copy of $s_{c}^{*}$ and $s_{jc}^{*}$
to refer to $j$'s copy.} to: For every totally mixed correlated $\sigma$ with $\sigma_{-ij}\in\Delta^{\circ}(\mathbb{S}_{-ij})$,
\[
\sum_{s\in\mathbb{S}}u_{j}(s_{jc}^{*},s_{-ij})\cdot\sigma(s)=\max_{s_{j}^{'}\in\mathbb{S}_{j}}\sum_{s\in\mathbb{S}}u_{j}(s_{j}^{'},s_{-ij})\cdot\sigma(s)
\]
 implies 
\[
\sum_{s\in\mathbb{S}}u_{i}(s_{ic}^{*},s_{-ij})\cdot\sigma(s)>\max_{s_{i}^{''}\in\mathbb{S}_{i}\backslash\{s_{ic}^{*}\}}\sum_{s\in\mathbb{S}}u_{i}(s_{i}^{''},s_{-ij})\cdot\sigma(s).
\]

\subsection{Player-Compatible Trembles and PCE}

PCE is a tremble-based solution concept. It builds on and modifies
\citet{selten1975reexamination}'s definition of trembling-hand perfect
equilibrium (in the strategic form) as the limit of equilibria of
perturbed games in which agents are constrained to tremble, so we
begin by defining our notation for the trembles and the associated
constrained equilibria.
\begin{defn}
A \emph{tremble profile} $\boldsymbol{\epsilon}$ assigns a positive
number $\boldsymbol{\epsilon}(s_{i})>0$ to every player $i$ and
every pure strategy $s_{i}\in\mathbb{S}_{i}$. Given a tremble profile
$\boldsymbol{\epsilon}$, write $\Pi_{i}^{\boldsymbol{\epsilon}}$
for the set of\emph{ $\boldsymbol{\epsilon}$-strategies} of player
$i$, namely: 
\[
\Pi_{i}^{\boldsymbol{\epsilon}}\coloneqq\left\{ \sigma_{i}\in\Delta(\mathbb{S}_{i})\text{ s.t. }\sigma_{i}(s_{i})\ge\boldsymbol{\epsilon}(s_{i})\ \forall s_{i}\in\mathbb{S}_{i}\right\} .
\]
We call $\sigma^{\circ}$ an $\boldsymbol{\epsilon}$\emph{-equilibrium}
if for each $i$, 
\[
\sigma_{i}^{\circ}\in\underset{\sigma_{i}\in\Pi_{i}^{\boldsymbol{\epsilon}}}{\arg\max}\ u_{i}(\sigma_{i},\sigma_{-i}^{\circ}).
\]
\end{defn}
Note that $\Pi_{i}^{\boldsymbol{\epsilon}}$ is compact and convex.
It is also non-empty when $\boldsymbol{\epsilon}$ is close enough
to $\boldsymbol{0}$. By standard results, whenever $\boldsymbol{\epsilon}$
is small enough so that $\Pi_{i}^{\boldsymbol{\epsilon}}$ is non-empty
for each $i$, an $\boldsymbol{\epsilon}$-equilibrium exists.

The key building block for PCE is \emph{$\boldsymbol{\epsilon}$-}PCE,
which is an $\boldsymbol{\epsilon}$-equilibrium where the tremble
profile is ``co-monotonic'' with $\succsim$ in the following sense:
\begin{defn}
Tremble profile $\boldsymbol{\epsilon}$ is \emph{player-compatible}
if $\boldsymbol{\epsilon}(s_{i}^{*})\ge\boldsymbol{\epsilon}(s_{j}^{*})$
for all $i,j,s_{i}^{*},s_{j}^{*}$ such that $s_{i}^{*}\succsim s_{j}^{*}$.
An $\boldsymbol{\epsilon}$-equilibrium where $\boldsymbol{\epsilon}$
is player-compatible is called a \emph{player-compatible $\boldsymbol{\epsilon}$-equilibrium
}(or\emph{ $\boldsymbol{\epsilon}$-PCE}).
\end{defn}
The condition on $\boldsymbol{\epsilon}$ says the minimum weight
$i$ could assign to $s_{i}^{*}$ is no smaller than the minimum weight
$j$ could assign to $s_{j}^{*}$ in the constrained game,
\[
\min_{\sigma_{i}\in\Pi_{i}^{\boldsymbol{\epsilon}}}\sigma_{i}(s_{i}^{*})\ge\min_{\sigma_{j}\in\Pi_{j}^{\boldsymbol{\epsilon}}}\sigma_{j}(s_{j}^{*}).
\]
This is a ``cross-player tremble restriction,'' that is, a restriction
on the relative probabilities of trembles by different players. Note
that this restriction, like the player compatibility relation, depends
on the players' preferences over distributions on $\mathbb{S}$ but
not on the particular utility representation. This invariance property
distinguishes player-compatible trembles from other models of stochastic
behavior such as the stochastic terms in logit best responses. Our
learning foundation will interpret these trembles not as mistakes,
but as deliberate experiments by agents trying to learn how others
play.

As is usual for tremble-based equilibrium refinements, we now define
PCE as the limit of a sequence of $\boldsymbol{\epsilon}$-PCE where
$\boldsymbol{\epsilon}\to\boldsymbol{0}$.
\begin{defn}
A strategy profile $\sigma^{*}$ is a\emph{ player-compatible equilibrium
(PCE)} if there exists a sequence of player-compatible tremble profiles
$\boldsymbol{\epsilon}^{(t)}\to\boldsymbol{0}$ and an associated
sequence of strategy profiles $\sigma^{(t)},$ where each $\sigma^{(t)}$
is an $\boldsymbol{\epsilon}^{(t)}$-PCE, such that $\sigma^{(t)}\to\sigma^{*}$.
\end{defn}
We think of PCE as primarily a solution concept for games with three
or more players, where the relative tremble probabilities of $i\ne j$
affect some third party's best response. In two-player games, $s_{i}^{*}\succsim s_{j}^{*}$
only when $s_{j}^{*}$ is strictly interior dominated or $s_{i}^{*}$
is strictly interior dominant.

The cross-player restrictions embodied in player-compatible trembles
translate into analogous restrictions on PCE, as shown in the next
result.
\begin{prop}
\label{prop:PCE_compatible} For any PCE $\sigma^{*}$, player $k$,
and strategy $\bar{s}_{k}$ such that $\sigma_{k}^{*}(\bar{s}_{k})>0,$
there exists a sequence of totally mixed strategy profiles $\sigma_{-k}^{(t)}\to\sigma_{-k}^{*}$
such that

(i) for every pair $i,j\ne k$ with $s_{i}^{*}\succsim s_{j}^{*}$,

\[
\liminf_{t\to\infty}\frac{\sigma_{i}^{(t)}(s_{i}^{*})}{\sigma_{j}^{(t)}(s_{j}^{*})}\ge1;
\]
 and (ii) $\bar{s}_{k}$ is a best response for $k$ against every
$\sigma_{-k}^{(t)}$ .
\end{prop}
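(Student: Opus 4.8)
The plan is to run the whole argument off the sequence supplied by the \emph{definition} of PCE, after throwing away finitely many initial terms. So let $\boldsymbol{\epsilon}^{(t)}\to\boldsymbol{0}$ be player-compatible tremble profiles and let $\sigma^{(t)}\to\sigma^*$ be the associated $\boldsymbol{\epsilon}^{(t)}$-PCE. For each $t$ take $\sigma_{-k}^{(t)}$ to be the restriction of $\sigma^{(t)}$ to the players $-k$; since each $\sigma_i^{(t)}\in\Pi_i^{\boldsymbol{\epsilon}^{(t)}}$ has full support, $\sigma_{-k}^{(t)}$ is a totally mixed profile, and $\sigma_{-k}^{(t)}\to\sigma_{-k}^*$ is immediate from $\sigma^{(t)}\to\sigma^*$. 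Everything then reduces to checking (i) and (ii) along a tail of this sequence.

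First I would record the standard fact about constrained best responses: in any $\boldsymbol{\epsilon}$-equilibrium $\sigma^{\circ}$, a strategy $s_i$ with $\sigma_i^{\circ}(s_i)>\boldsymbol{\epsilon}(s_i\mid i)$ must be a best response to $\sigma_{-i}^{\circ}$ — otherwise shifting a small mass off $s_i$ onto a strictly better strategy keeps the profile in $\Pi_i^{\boldsymbol{\epsilon}}$ and raises $U_i$, contradicting optimality — and consequently, if $s_i^{*}$ is the \emph{unique} best response for $i$ against $\sigma_{-i}^{\circ}$, then $\sigma_i^{\circ}(s_i)=\boldsymbol{\epsilon}(s_i\mid i)$ for every $s_i\ne s_i^{*}$, so $\sigma_i^{\circ}(s_i^{*})=1-\sum_{s_i\ne s_i^{*}}\boldsymbol{\epsilon}(s_i\mid i)$. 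Part (ii) then falls out directly: since $\sigma_k^{(t)}(\bar s_k)\to\sigma_k^{*}(\bar s_k)>0$ while $\boldsymbol{\epsilon}^{(t)}(\bar s_k\mid k)\to0$, we have $\sigma_k^{(t)}(\bar s_k)>\boldsymbol{\epsilon}^{(t)}(\bar s_k\mid k)$ for all large $t$, hence $\bar s_k$ is a best response for $k$ against $\sigma_{-k}^{(t)}$ for all large $t$; discard the finitely many earlier terms and re-index.

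For part (i), fix $i,j\ne k$ with $s_i^{*}\succsim s_j^{*}$ and split the indices $t$ according to whether $\sigma_j^{(t)}(s_j^{*})=\boldsymbol{\epsilon}^{(t)}(s_j^{*}\mid j)$ or $\sigma_j^{(t)}(s_j^{*})>\boldsymbol{\epsilon}^{(t)}(s_j^{*}\mid j)$. On the first set, player-compatibility of $\boldsymbol{\epsilon}^{(t)}$ gives $\sigma_i^{(t)}(s_i^{*})\ge\boldsymbol{\epsilon}^{(t)}(s_i^{*}\mid i)\ge\boldsymbol{\epsilon}^{(t)}(s_j^{*}\mid j)=\sigma_j^{(t)}(s_j^{*})$, so the ratio is $\ge1$. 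On the second set, the fact above makes $s_j^{*}$ a genuine best response for $j$ against $\sigma_{-j}^{(t)}$; since $\sigma^{(t)}\in\Delta^{\circ}(\mathbb{S})$, the definition of $s_i^{*}\succsim s_j^{*}$ applied with $\tilde\sigma=\sigma^{(t)}$ (whose $-ij$-marginal trivially matches that of $\sigma^{(t)}$) shows $s_i^{*}$ is the \emph{unique} best response for $i$ against $\sigma_{-i}^{(t)}$, whence $\sigma_i^{(t)}(s_i^{*})=1-\sum_{s_i\ne s_i^{*}}\boldsymbol{\epsilon}^{(t)}(s_i\mid i)$ and the ratio is at least $1-\sum_{s_i\ne s_i^{*}}\boldsymbol{\epsilon}^{(t)}(s_i\mid i)\to1$. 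Taking $\liminf$ over each of the (at most two) infinite subsequences and combining yields $\liminf_{t\to\infty}\sigma_i^{(t)}(s_i^{*})/\sigma_j^{(t)}(s_j^{*})\ge1$.

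There is no deep obstacle here; the care required is purely in the bookkeeping. The point most easily overlooked is that (ii) demands an \emph{exact} best response, not an $\boldsymbol{\epsilon}$-constrained one, which is exactly why we must pass to the tail on which the surviving equilibrium weight $\sigma_k^{(t)}(\bar s_k)$ has already overtaken the vanishing tremble floor; similarly, the $\liminf$ in (i) has to be assembled from two subsequences, with the bound coming from the cross-player tremble restriction on one and from compatibility forcing $i$'s play to concentrate on $s_i^{*}$ on the other. I would also note explicitly that the endogenous-correlation issue flagged in the introduction does not intervene in this proof: we only ever feed product profiles into the compatibility condition, so the required agreement of $-ij$-marginals is automatic.
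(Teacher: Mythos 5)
Your proposal is correct and follows essentially the same route as the paper: the paper packages your case split (whether $\sigma_j^{(t)}(s_j^{*})$ sits at the tremble floor or strictly above it, with compatibility applied at $\tilde\sigma=\sigma^{(t)}$ in the latter case) into an auxiliary lemma giving $\sigma_i^{(t)}(s_i^{*})\ge\min\bigl[\sigma_j^{(t)}(s_j^{*}),\,1-\sum_{s_i'\ne s_i^{*}}\boldsymbol{\epsilon}^{(t)}(s_i'\mid i)\bigr]$, and handles (ii) by the same tail argument comparing $\sigma_k^{(t)}(\bar s_k)$ with the vanishing tremble bound. No substantive difference.
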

The proof is in the Appendix, as are the proofs of subsequent results
except where otherwise stated.

Treating each $\sigma_{-k}^{(t)}$ as a totally mixed approximation
to $\sigma_{-k}^{*},$ in a PCE each player $k$ essentially best
responds to totally mixed opponent play that respects player compatibility.

It is easy to show that every $\boldsymbol{\epsilon}$-PCE respects
player compatibility up to the ``adding up constraint'' that probabilities
on different strategies must sum up to 1 and $i$ must place probability
no smaller than $\boldsymbol{\epsilon}(s_{i}^{'})$ on strategies
$s_{i}^{'}\ne s_{i}^{*}$. The ``up to'' qualification disappears
in the $\boldsymbol{\epsilon}^{(t)}\to\boldsymbol{0}$ limit because
the required probabilities on $s_{i}^{'}\ne s_{i}^{*}$ tend to 0.

Since PCE is defined as the limit of $\boldsymbol{\epsilon}$-equilibria
for a restricted class of trembles, the set of PCE form a subset of
trembling-hand perfect equilibria; the next result shows this subset
is not empty. It uses the fact that the tremble profiles with the
same lower bound on the probability of each action satisfy the compatibility
condition in any game.
\begin{thm}
\label{thm:PCE_existence} A PCE exists in every finite strategic-form
game.
\end{thm}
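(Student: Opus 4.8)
The plan is to exhibit, for any finite game satisfying our standing assumption (no strictly dominated strategies), an explicit family of player-compatible tremble profiles $\boldsymbol{\epsilon}^{(t)}\to\boldsymbol{0}$ for which an $\boldsymbol{\epsilon}^{(t)}$-equilibrium exists, and then extract a convergent subsequence of the associated $\boldsymbol{\epsilon}^{(t)}$-PCE. The natural candidate is the \emph{uniform} tremble profile: fix a sequence $\epsilon_t\downarrow 0$ with $\epsilon_t$ small enough that $\epsilon_t\cdot|\mathbb{S}_i|\le 1$ for every $i$, and set $\boldsymbol{\epsilon}^{(t)}(s_i\mid i)=\epsilon_t$ for every player $i$ and every pure strategy $s_i$. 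Since this assigns the \emph{same} lower bound to every strategy of every player, the defining inequality $\boldsymbol{\epsilon}(s_i^*\mid i)\ge\boldsymbol{\epsilon}(s_j^*\mid j)$ holds trivially (with equality) whenever $s_i^*\succsim s_j^*$ — indeed for every pair — so each $\boldsymbol{\epsilon}^{(t)}$ is player compatible by definition. This is exactly the observation flagged in the paragraph preceding the theorem.

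Next I would invoke existence of $\boldsymbol{\epsilon}^{(t)}$-equilibria. By the remark in the text, once $\boldsymbol{\epsilon}^{(t)}$ is small enough that each $\Pi_i^{\boldsymbol{\epsilon}^{(t)}}$ is nonempty — which holds for all large $t$ since $\epsilon_t|\mathbb{S}_i|\le 1$ — the sets $\Pi_i^{\boldsymbol{\epsilon}^{(t)}}$ are nonempty, compact, and convex, and the best-response correspondence $\sigma\mapsto\times_i\arg\max_{\sigma_i\in\Pi_i^{\boldsymbol{\epsilon}^{(t)}}}U_i(\sigma_i,\sigma_{-i})$ is a nonempty-, compact-, convex-valued, upper hemicontinuous self-map of the compact convex set $\times_i\Pi_i^{\boldsymbol{\epsilon}^{(t)}}$, so Kakutani gives a fixed point $\sigma^{(t)}$, which is by construction an $\boldsymbol{\epsilon}^{(t)}$-PCE. (All of this is ``standard results,'' so I would cite it rather than reprove it.) Then $(\sigma^{(t)})_{t}$ lives in the fixed compact set $\times_i\Delta(\mathbb{S}_i)$, so it has a convergent subsequence $\sigma^{(t_m)}\to\sigma^*$. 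Since $\boldsymbol{\epsilon}^{(t_m)}\to\boldsymbol{0}$ and each $\sigma^{(t_m)}$ is an $\boldsymbol{\epsilon}^{(t_m)}$-PCE, the limit $\sigma^*$ is a PCE by definition, and the game was arbitrary, completing the proof.

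There is essentially no hard obstacle here: the whole point is that the \emph{constant} tremble profile automatically satisfies the compatibility constraint, so $\boldsymbol{\epsilon}$-PCE reduces to ordinary Selten $\boldsymbol{\epsilon}$-perfection for that profile, and existence of PCE inherits directly from existence of trembling-hand perfect equilibrium. The only points that need a line of care are (a) checking the constant profile really is admissible, i.e. $\Pi_i^{\boldsymbol{\epsilon}^{(t)}}\ne\emptyset$ eventually, which is the condition $\epsilon_t|\mathbb{S}_i|\le 1$ together with $|\mathbb{S}_i|\ge 2$; and (b) noting that the "no strictly dominated strategy" assumption is not actually needed for existence but is harmless to carry along. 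I would remark in closing that this argument shows more: PCE always \emph{refines} the set of trembling-hand perfect equilibria and is nonempty, so it is a genuine (strict, in examples) refinement rather than a vacuous one.
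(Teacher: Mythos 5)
Your proposal is correct and follows essentially the same route as the paper: uniform tremble profiles are player compatible in any game, standard fixed-point arguments give $\boldsymbol{\epsilon}^{(t)}$-equilibria (hence $\boldsymbol{\epsilon}^{(t)}$-PCE) for small enough trembles, and compactness of the strategy space yields a convergent subsequence whose limit is a PCE. The extra care you take with $\Pi_i^{\boldsymbol{\epsilon}^{(t)}}\ne\emptyset$ and the Kakutani details is fine but not a departure from the paper's argument.
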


\subsection{Learning and Player-Compatible Trembles\protect 
}Sections \ref{sec:Indexable-Learning-Rules} and \ref{sec:Factorability-and-Isomorphic}
provide a microfoundation for player-compatible trembles, which form
the core innovation of PCE. To preview the results, Sections \ref{sec:Indexable-Learning-Rules}
presents a general sufficient condition for $i$ to experiment more
with $s_{i}^{*}$ than $j$ does with $s_{j}^{*}$ over their lifetimes
that is applicable across a range of learning environments and learning
policies. Sections \ref{sec:Factorability-and-Isomorphic} completes
the story by showing that in a class of games that includes our Section
\ref{sec:Examples-of-PCE} examples, the player-compatibility condition
$s_{i}^{*}\succsim s_{j}^{*}$ implies Sections \ref{sec:Indexable-Learning-Rules}'s
sufficient condition for the rational learning policy and for weighted
fictitious play. For analyzing rational behavior, we consider agents
who start with the same priors over the play of their opponents. We
believe we could extend this conclusion to agents with slightly different
priors using a stronger notion of player compatibility, but we do
not pursue this result here.\footnote{To do this, we would measure the ``strength'' of the compatibility
ranking by saying that $i$ \emph{is $\lambda$ more player-compatible
with $s_{i}^{*}$ than $j$ is with $s_{j}^{*}$ }if the inequality
in the definition $s_{i}^{*}\succsim s_{j}^{*}$ holds for all $\tilde{\sigma}\in\Delta^{\circ}(\mathbb{S})$
satisfying $||\text{marg}_{-ij}(\sigma)-\text{marg}_{-ij}(\tilde{\sigma})||\leq\lambda$.
We believe that our learning foundation would extend to cases where
the agents' priors are sufficiently close compared to $\text{\ensuremath{\lambda}.}$}

PCE incorporates cross-player tremble restrictions derived from learning
into an equilibrium framework. It also imposes some extra restrictions
that we do not microfound, so the map from learning models to PCE
is inexact, but PCE does capture some novel implications of learning.
Like any game-theoretic equilibrium concept, it provides a reduced
form that allows analysts to study comparative statics in various
applications without needing to solve the dynamic learning problem
anew in each of them.

In Section \ref{sec:Replication-Invariance-of-PCE}, we expand the
game to include duplicate copies of some of the original strategies,
where two strategies are duplicates if they provide exactly the same
payoff and exactly the same information.\footnote{Two strategies with the same payoffs that give different information
about opponents' play are not equivalent in our learning model.} If $s_{i}^{*}\succsim s_{j}^{*}$ in the original game, then in the
expanded game we impose the cross-player tremble restriction that
the probability of $i$ trembling onto the set of copies of $s_{i}^{*}$
is larger than the probability of $j$ trembling onto the set of copies
of $s_{j}^{*}$. The way we update our PCE definition in the presence
of duplicates fits our interpretation of trembles as experimentation
frequencies: As we show, the sum of $i$'s lifetime experimentation
frequencies with all duplicates of $s_{i}^{*}$ is independent of
the number of duplicates under both rational behavior and weighted
fictitious play. We show that the set of PCE in the expanded game
with these new tremble restrictions is the same as the set of PCE
in the original game.

\section{\label{sec:Examples-of-PCE}Examples of PCE}

In this section, we study examples of games where PCE rules out unintuitive
Nash equilibria. We will also use these examples to distinguish PCE
from existing refinements.

\subsection{The Restaurant Game}

We start with a complete-information game where PCE differs from other
solution concepts.
\begin{example}
\label{exa: restaurant} There are three players in the game: a food
critic, a regular diner, and a restaurant. Simultaneously, the restaurant
decides between ordering high-quality (\textbf{H}) or low-quality
(\textbf{L}) ingredients, while the critic and the diner decide whether
to go eat at the restaurant (\textbf{R}) or order pizza (\textbf{Z})
and eat at home. The utility from \textbf{Z} is normalized to 0. If
both customers choose \textbf{Z}, the restaurant also gets 0 payoff.
Otherwise, the restaurant's payoff depends on the ingredient quality
and clientele. Choosing \textbf{L} yields a profit of +2 per customer
while choosing \textbf{H} yields a profit of +1 per customer. In addition,
if the food critic is present she will write a review based on ingredient
quality, which affects the restaurant's payoff by $\pm2.5.$ Each
customer gets a payoff of $x<-1$ from consuming food made with low-quality
ingredients and a payoff of $y>0.5$ from consuming food made with
high-quality ingredients, while the critic gets an additional +1 payoff
from going to the restaurant and writing a review (regardless of food
quality). Customers each incur a 0.5 congestion cost if they both
go to the restaurant. We depict this situation in the game tree below,
with $c$ and $d$ subscripts denoting strategies of the critic and
the diner. \begin{center}\includegraphics[scale=0.45]{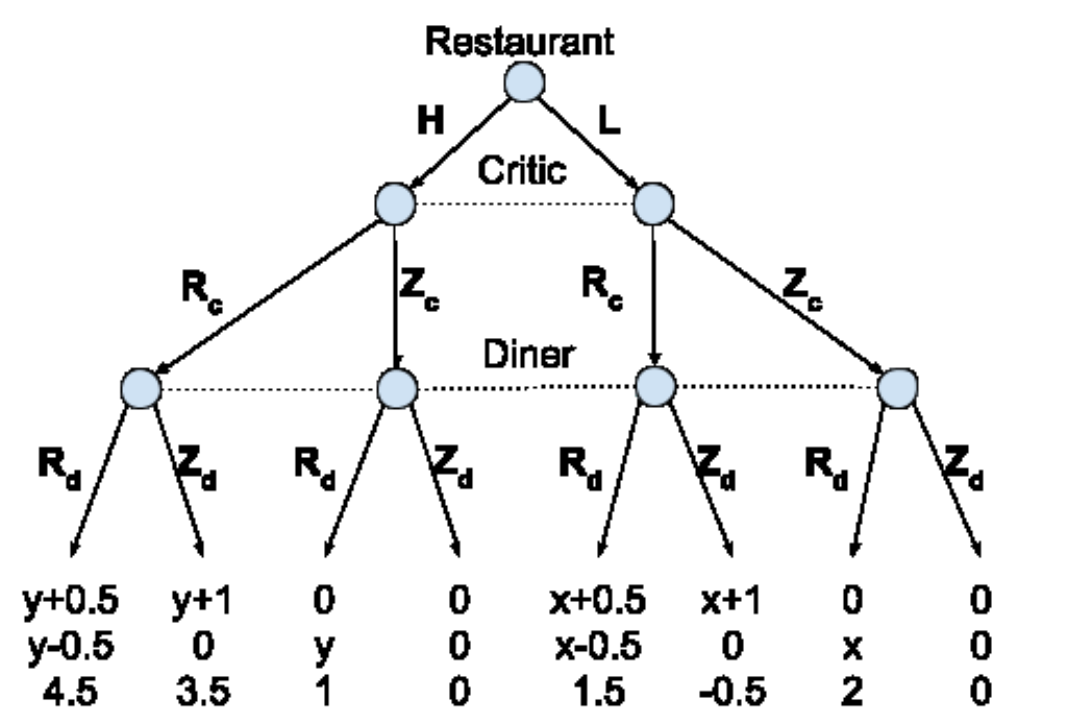}\end{center}

The strategies of the two customers affect each other's payoffs, so
the critic and the diner are not non-interacting players. In particular,
they cannot be mapped into two types of the same agent in a Bayesian
game.

The strategy profile (\textbf{$\boldsymbol{Z_{c}}$}, $\boldsymbol{Z_{d}}$,
\textbf{L}) is a proper equilibrium, sustained by the restaurant's
belief that when at least one customer plays \textbf{R}, it is far
more likely that the diner deviated to patronizing the restaurant
than the critic, even though the critic has a greater incentive to
go to the restaurant since she gets paid for writing reviews. It is
also an extended proper equilibrium.\footnote{(\textbf{$\boldsymbol{Z_{c}}$}, $\boldsymbol{Z_{d}}$, \textbf{L})
is an extended proper equilibrium, because scaling the critic's payoff
by a large positive constant makes it more costly for the critic to
deviate to \textbf{R1} than for the diner to deviate to \textbf{R2}.}

We claim that $\boldsymbol{R_{c}}\succsim\boldsymbol{R_{d}}$. Note
that for any profile $\sigma$ of totally mixed, correlated play that
makes the diner indifferent between $\boldsymbol{Z_{d}}$ and $\boldsymbol{R_{d}}$,
we must have $u_{1}(\boldsymbol{R_{c}},\tilde{\sigma}_{-c})\ge0.5$
for any profile $\tilde{\sigma}$ that agrees with $\sigma$ in terms
of the restaurant's play. The critic's utility from \textbf{$\boldsymbol{R_{c}}$}
is minimized when the diner chooses \textbf{$\boldsymbol{R_{d}}$}
with probability 1, but even then the critic gets 0.5 higher utility
from going to a crowded restaurant than the diner gets from going
to an empty restaurant, holding fixed food quality at the restaurant.
This shows $\boldsymbol{R_{c}}\succsim\boldsymbol{R_{d}}$.

Whenever $\sigma_{c}^{(t)}(\text{\ensuremath{\boldsymbol{R_{c}}})}/\sigma_{d}^{(t)}(\text{\ensuremath{\boldsymbol{R_{d}}})}>\frac{1}{4}$,
the restaurant strictly prefers \textbf{H} over \textbf{L}. Thus by
Proposition \ref{prop:PCE_compatible}, there is no PCE where the
restaurant plays \textbf{L} with positive probability. \hfill{}$\blacklozenge$
\end{example}

\subsection{\label{subsec:The-link-formation-game}The Link-Formation Game}

In the next example, PCE makes different predictions in two versions
of a game with different payoff parameters, while all other solution
concepts we know of make the same predictions in both versions.
\begin{example}
\label{exa:link4}There are 4 players in the game, split into two
sides: North and South. The players are named North-1, North-2, South-1,
and South-2, abbreviated as N1, N2, S1, and S2.

These players engage in a strategic link-formation game. Each player
simultaneously takes an action: either \textbf{Inactive} or \textbf{Active}.
An \textbf{Inactive} player forms no links. An \textbf{Active} player
forms a link with every \textbf{Active} player on the opposite side.
(Two players on the same side cannot form links.) For example, suppose
N1 plays \textbf{Active}, N2 plays \textbf{Active}, S1 plays \textbf{Inactive},
and S2 plays \textbf{Active}. Then N1 creates a link to S2, N2 creates
a link to S2, S1 creates no links, and S2 creates links to both N1
and N2.

\begin{center}\includegraphics[scale=0.4]{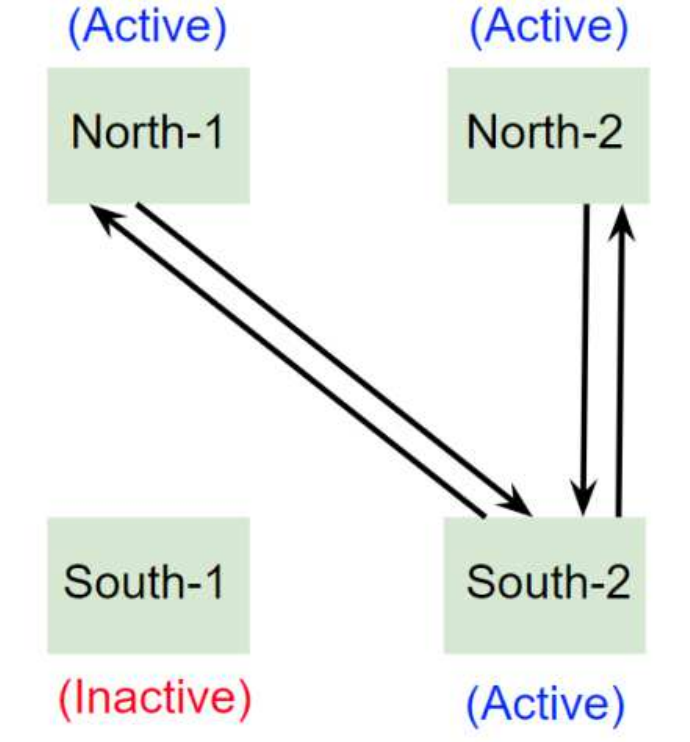}\end{center}

Each player $i$ is characterized by two parameters: cost ($c_{i}$)
and quality $(q_{i})$. Cost refers to the private cost that a player
pays for each link they create. Quality refers to the benefit that
a player provides to others when they link to her. A player who forms
no links gets a payoff of 0. In the above example, the payoff to North-1
is $q_{\text{S2}}-c_{\text{N1}}$ and the payoff to South-2 is $(q_{\text{N1}}-c_{\text{S2}})+(q_{\text{N2}}-c_{\text{S2}})$.

We consider two specifications of the payoff functions. In the \emph{anti-monotonic}
version on the left, players with a higher cost have a lower quality.
In the \emph{co-monotonic} version on the right, players with a higher
cost have a higher quality. There are two pure-strategy Nash outcomes
for each version: all links form or no links form. ``All links form''
is the unique PCE outcome in the anti-monotonic case, while both ``all
links'' and ``no links'' are PCE outcomes under co-monotonicity.

\begin{center}%
\begin{tabular}{|c|c|c|}
\hline 
\multicolumn{3}{|c|}{Anti-Monotonic}\tabularnewline
\hline 
\textbf{Player} & \textbf{Cost} & \textbf{Quality}\tabularnewline
\hline 
\hline 
North-1 & 14 & 30\tabularnewline
\hline 
North-2 & 19 & 10\tabularnewline
\hline 
\hline 
South-1 & 14 & 30\tabularnewline
\hline 
South-2 & 19 & 10\tabularnewline
\hline 
\end{tabular}$\quad$ %
\begin{tabular}{|c|c|c|}
\hline 
\multicolumn{3}{|c|}{Co-Monotonic}\tabularnewline
\hline 
\textbf{Player} & \textbf{Cost} & \textbf{Quality}\tabularnewline
\hline 
\hline 
North-1 & 14 & 10\tabularnewline
\hline 
North-2 & 19 & 30\tabularnewline
\hline 
\hline 
South-1 & 14 & 10\tabularnewline
\hline 
South-2 & 19 & 30\tabularnewline
\hline 
\end{tabular}\end{center}

PCE makes different predictions in these two versions of the game
because the compatibility structure with respect to own quality is
reversed. In both versions, $\text{Active}_{N1}\succsim\text{Active}_{N2},$
but N1 has high quality in the anti-monotonic version, and low quality
in the co-monotonic version. Thus, in the anti-monotonic version but
not in the co-monotonic version, player-compatible trembles lead to
the high-quality counterparty choosing \textbf{Active} at least as
often as the low-quality counterparty, which means \textbf{Active}
has a positive expected payoff even when one's own cost is high.

In contrast, the set of equilibria that satisfy extended proper equilibrium,
proper equilibrium, trembling-hand perfect equilibrium, $p$-dominance,
Pareto efficiency, and strategic stability do not depend on whether
payoffs are anti-monotonic or co-monotonic, as shown in the Online
Appendix. \hfill{}$\blacklozenge$
\end{example}

\subsection{Signaling Games}

Recall that a signaling game is a two-player Bayesian game, where
P1 is a sender who knows their own type $\theta,$ and P2 only knows
that P1's type is drawn according to the distribution $\lambda\in\Delta(\Theta)$
on a finite type space $\Theta$. After learning their type, the sender
sends a signal $s\in S$ to the receiver. Then, the receiver responds
with an action $a\in A$. Utilities depend on the sender's type $\theta$,
the signal $s$, and the action $a$.

\citet{fudenberg_he_2017}'s compatibility criterion is defined only
for signaling games. It does not use limits of games with trembles,
but instead restricts the beliefs that the receiver can have about
the sender's type. That sort of restriction does not seem easy to
generalize beyond games with observed actions, while using trembles
allows us to define PCE for general strategic-form games. As we will
see, the more general PCE definition implies the compatibility criterion
in signaling games.

With each sender type viewed as a different player, this game has
$|\Theta|+1$ players, $\mathbb{I}=\Theta\cup\{2\}$, where the strategy
set of each sender type $\theta$ is $\mathbb{S}_{\theta}=S$ while
the strategy set of the receiver is $\mathbb{S}_{2}=A^{S}$, the set
of signal-contingent plans. So a mixed strategy of $\theta$ is a
possibly mixed signal choice $\sigma_{1}(\cdot\mid\theta)\in\Delta(S),$
while a mixed strategy $\sigma_{2}\in\Delta(A^{S})$ of the receiver
is a mixed plan about how to respond to each signal.

\citet{fudenberg_he_2017} define\emph{ type compatibility} for signaling
games. A signal $s^{*}$ is more type-compatible with $\theta$ than
with $\theta^{'}$ if for every behavioral strategy $\sigma_{2}$,

\[
u_{1}(s^{*},\sigma_{2};\theta^{'})\ge\max_{s^{'}\ne s^{*}}u_{1}(s^{'},\sigma_{2};\theta^{'})
\]
 implies 
\[
u_{1}(s^{*},\sigma_{2};\theta)>\max_{s^{'}\ne s^{*}}u_{1}(s^{'},\sigma_{2};\theta).
\]
They also define the \emph{compatibility criterion}, which imposes
restrictions on off-path beliefs in signaling games. Consider a Nash
equilibrium $\sigma_{1}^{*},\sigma_{2}^{*}$. For any signal $s^{*}$
and receiver action $a$ with $\sigma_{2}^{*}(a\mid s^{*})>0$, the
compatibility criterion requires that $a$ best responds to some belief
$p\in\Delta(\Theta)$ about the sender's type such that, whenever
$s^{*}$ is more type-compatible with $\theta$ than with $\theta^{'}$
and $s^{*}$ is not equilibrium dominated\footnote{Signal $s^{*}$ is not equilibrium dominated for $\theta$ if $\max_{a\in A}u_{1}(s^{*},a;\theta)>u_{1}(s_{1},\sigma_{2}^{*};\theta)$
for every $s_{1}$ with $\sigma_{1}^{*}(s_{1}\mid\theta)>0.$} for $\theta$, $p$ satisfies $\frac{p(\theta^{'})}{p(\theta)}\le\frac{\lambda(\theta^{'})}{\lambda(\theta)}.$

Since every totally mixed strategy of the receiver is payoff-equivalent
to a behavioral strategy, it is easy to see that type compatibility
implies $s_{\theta}^{*}\succsim s_{\theta^{'}}^{*}$.\footnote{The converse does not hold. We defined type compatibility to require
testing against all receiver strategies and not just the totally mixed
ones, so it is possible that $s_{\theta}^{*}\succsim s_{\theta^{'}}^{*}$
but $s^{*}$ is not more type-compatible with $\theta$ than with
$\theta^{'},$ so type-compatibility is harder to satisfy than player
compatibility. We now realize that we could have restricted type compatibility
to only consider totally mixed strategies, and all of the results
of \citet{fudenberg_he_2017} would still hold.} The next result shows that when specialized to signaling games, all
PCE pass the compatibility criterion.
\begin{prop}
\label{prop:compare_CC}In a signaling game, every PCE $\sigma^{*}$
is a Nash equilibrium satisfying the compatibility criterion of \citet{fudenberg_he_2017}.
\end{prop}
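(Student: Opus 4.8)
The plan is to feed the receiver's response in question into Proposition~\ref{prop:PCE_compatible} (applied with the receiver as the distinguished player $k$), obtain from it a sequence of totally mixed sender profiles that respect player compatibility, and then recover the receiver's justifying belief as a limit of the Bayes-rule posteriors these profiles induce after that signal.

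First I would dispose of the ``Nash equilibrium'' clause: since every PCE is a trembling-hand perfect equilibrium and hence a Nash equilibrium, $\sigma^{*}$ is Nash and the compatibility criterion is meaningful for it. Now fix a signal $s^{*}$ and a receiver action $a$ with $\sigma_{2}^{*}(a\mid s^{*})>0$. Because the receiver's strategy $\sigma_{2}^{*}$ is a mixture over pure plans in $A^{S}$ and the probability it assigns to responding $a$ after $s^{*}$ equals the total weight it puts on plans $\bar{s}_{2}$ with $\bar{s}_{2}(s^{*})=a$, there is such a pure plan $\bar{s}_{2}$ with $\sigma_{2}^{*}(\bar{s}_{2})>0$. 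Applying Proposition~\ref{prop:PCE_compatible} with $k=2$ and $\bar{s}_{k}=\bar{s}_{2}$ yields totally mixed sender profiles $\sigma_{-2}^{(t)}\to\sigma_{-2}^{*}$ such that: (i) $\liminf_{t}\sigma_{\theta}^{(t)}(s^{*})/\sigma_{\theta'}^{(t)}(s^{*})\ge 1$ whenever $s_{\theta}^{*}\succsim s_{\theta'}^{*}$ (writing $\sigma_{\theta}^{(t)}$ for the mixed signal choice of type $\theta$ and $s_{\theta}^{*}$ for the strategy ``type $\theta$ sends $s^{*}$''); and (ii) $\bar{s}_{2}$ is a best response for the receiver against every $\sigma_{-2}^{(t)}$.

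Next, for each $t$ define the posterior on types obtained from $\sigma_{-2}^{(t)}$ by Bayes' rule after observing $s^{*}$, namely $p^{(t)}(\theta)\propto\lambda(\theta)\,\sigma_{\theta}^{(t)}(s^{*})$; this is well defined because total mixing makes every $\sigma_{\theta}^{(t)}(s^{*})$ strictly positive. Decomposing the receiver's payoff $U_{2}(\cdot,\sigma_{-2}^{(t)})$ signal by signal and using that every signal is sent with positive probability under the totally mixed $\sigma_{-2}^{(t)}$, property~(ii) forces $a=\bar{s}_{2}(s^{*})$ to be conditionally optimal at $s^{*}$ against $p^{(t)}$. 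Passing to a subsequence along which $p^{(t)}\to p$ for some $p\in\Delta(\Theta)$, and using that the set of beliefs at $s^{*}$ rationalizing $a$ is closed, $a$ is a best response against $p$. For the ratio restriction, suppose $s^{*}$ is more type-compatible with $\theta$ than with $\theta'$; then by the implication recorded just before the proposition $s_{\theta}^{*}\succsim s_{\theta'}^{*}$, so (i) gives $\liminf_{t}\sigma_{\theta}^{(t)}(s^{*})/\sigma_{\theta'}^{(t)}(s^{*})\ge 1$. Since $p^{(t)}(\theta')/p^{(t)}(\theta)=\big(\lambda(\theta')/\lambda(\theta)\big)\cdot\big(\sigma_{\theta'}^{(t)}(s^{*})/\sigma_{\theta}^{(t)}(s^{*})\big)$, passing to the $\limsup$ yields $p(\theta')/p(\theta)\le\lambda(\theta')/\lambda(\theta)$, which is exactly what the compatibility criterion requires (when $p(\theta)=0$ the same inequality forces $p(\theta')=0$, and the bound holds in its cross-multiplied form $p(\theta')\lambda(\theta)\le p(\theta)\lambda(\theta')$).

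Most of the work is bookkeeping once Proposition~\ref{prop:PCE_compatible} is in hand. The two steps I expect to need the most care are the signal-by-signal decomposition --- arguing that a receiver plan which is a strategic-form best reply to a \emph{totally mixed} $\sigma_{-2}^{(t)}$ must be conditionally optimal at $s^{*}$ against the induced posterior --- and the passage to the limit, where (i) provides only a $\liminf$ lower bound on the likelihood ratios, so one has to route the argument through the $\limsup$ of the reciprocal ratios and handle the degenerate case $p(\theta)=0$. Note finally that the ``not equilibrium dominated'' qualifier in the compatibility criterion is not needed here: the construction delivers the ratio restriction for \emph{every} type pair ranked by type-compatibility, which is (weakly) stronger than the criterion demands.
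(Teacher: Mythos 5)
Your proposal is correct and follows essentially the same route as the paper's own proof: invoke Proposition \ref{prop:PCE_compatible} with the receiver as player $k$, form Bayesian posteriors from the totally mixed sender profiles, pass to an accumulation point, and use the fact that type compatibility implies player compatibility to obtain the ratio restriction (the paper likewise never needs the equilibrium-dominance qualifier). You merely spell out two steps the paper leaves implicit --- the signal-by-signal decomposition showing strategic-form best response implies conditional optimality at $s^{*}$, and the $\limsup$/cross-multiplied handling of the limit ratios --- which is fine.
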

This proposition in particular implies that in the beer-quiche game
of \citet{cho_signaling_1987}, the quiche-pooling equilibrium is
not a PCE, as it does not satisfy the compatibility criterion.

\section{\label{sec:Indexable-Learning-Rules}Index Learning Policies and
Index Compatibility}

This section characterizes a general class of ``index learning policies''
that lead $i$ to experiment more with $s_{i}^{*}$ than $j$ does
with $s_{j}^{*}$. The next section shows that optimal learning behavior
and weighted fictitious play belong to this class in ``factorable''
games, when $s_{i}^{*}\succsim s_{j}^{*}$. Together, these sections
link the player-compatibility relation with agents' learning behavior
under various learning policies, providing a learning foundation for
the tremble restrictions central to PCE.

The learning problem the players face depends on what they observe
about the play of others, which in turn depends on the extensive form
of the game, denoted by $\Gamma$. This game has a set of players
$i\in\mathbb{I}$ and also a player 0 that we will use to model Nature's
moves. The collection of information sets of player $i\in\mathbb{I}$
is written as $\mathcal{H}_{i}$. At each $h\in\mathcal{H}_{i}$,
player $i$ chooses an action $a_{h}$ from the finite set of possible
actions $A_{h}$. An extensive-form pure strategy of $i$ specifies
an action at each information set $h\in\mathcal{H}_{i}$. We denote
by $\mathbb{S}_{i}$ the set of all such strategies. Let $Z$ be the
set of terminal vertices of $\Gamma$. Also, let $Z(s)$ denote the
terminal vertex reached under the pure strategy profile $s\in\times_{i}\mathbb{S}_{i}.$

We consider an agent born into player role $i$ who maintains this
role throughout their life. They have has a geometrically distributed
lifetime with $0\le\gamma<1$ probability of survival between periods.
All agents share the same survival chance. Each period, the agent
plays the stage game $\Gamma$, choosing a strategy $s_{i}\in\mathbb{S}_{i}$.
Then, with probability $\gamma$, they continue into the next period
and play the stage game again. With complementary probability they
exit the system.

Each player is equipped with a finite set of \emph{observations} $\mathbb{O}_{i}$
and a \emph{feedback function} $\mathfrak{o}_{i}:Z\to\mathbb{O}_{i}$
that maps the terminal node reached to an observation. We assume each
player has perfect recall and remembers their chosen strategy.
\begin{defn}
The set of all finite \emph{histories} of all lengths for $i$ is
$Y_{i}:=\cup_{t\ge0}(\mathbb{S}_{i}\times\mathbb{O}_{i})^{t}$. For
a history $y_{i}\in Y_{i}$ and $s_{i}\in\mathbb{S}_{i}$, the \emph{subhistory}
$y_{i,s_{i}}$ is the (possibly empty) subsequence of $y_{i}$ where
the agent played $s_{i}$.
\end{defn}
To compare players $i$ and $j$'s relative experimentation probabilities,
we need the agents in these two player roles to face ``equivalent''
learning environments. The next definition formalizes what this means;
it is a joint restriction on the game tree $\Gamma$ and the feedback
structures $(\mathbb{O}_{i},\mathfrak{o}_{i})$, $(\mathbb{O}_{j},\mathfrak{o}_{j})$
of the two player roles.
\begin{defn}
Agents $i$ and $j$ face \emph{isomorphic learning problems} if:
\begin{itemize}
\item There exists an isomorphism $\varphi:\mathbb{S}_{i}\to\mathbb{S}_{j}$
\item For each $s_{i}\in\mathbb{S}_{i},$ the union $(\{s_{i}\}\times\mathbb{O}_{i})\cup(\{s_{j}\}\times\mathbb{O}_{j})$
of possible observations of $i$ after $s_{i}$ and possible observations
of $j$ after $s_{j}$ can be partitioned into a set of equivalence
classes $\mathbb{E}_{s_{i},s_{j}},$ where elements $E\in\mathbb{E}_{s_{i},s_{j}}$
are disjoint subsets.
\item For each pure strategy profile $\tilde{s}$ and $s_{i}\in\mathbb{S}_{i}$
with $s_{j}=\varphi(s_{i}),$ $(s_{i},\mathfrak{o}_{i}(Z(s_{i},\tilde{s}_{-i})))$
and $(s_{j},\mathfrak{o}_{j}(Z(s_{j},\tilde{s}_{-j})))$ belong to
the same equivalence class in $\mathbb{E}_{s_{i},s_{j}}.$
\end{itemize}
\end{defn}
Whenever we say $i$ and $j$ face isomorphic learning problems, it
is always with respect to some isomorphism $\varphi$ and some collection
of equivalence classes $\mathbb{E}=(\mathbb{E}_{s_{i},\varphi(s_{i})})_{s_{i}\in S_{i}}$.
To make this explicit, sometimes we will say $i$ and $j$ face \emph{$(\varphi,\mathbb{E})$-isomorphic
learning problems}.

If $i$ and $j$ face isomorphic learning problems, there is a bijection
$\varphi$ between the strategies of the two agents and a notion of
equivalence between the one-period histories they might observe after
choosing $s_{i}$ and $s_{j}$, with $s_{j}=\varphi(s_{i}).$ This
equivalence puts $i$'s observation after $(s_{i},\tilde{s}_{-i})$
in the same equivalence class as $j$'s observation after $(s_{j},\tilde{s}_{-j}),$
for any profile $\tilde{s}$.

Consider Example \ref{exa: restaurant} when the Critic and the Diner
observe all other players' actions if they choose \textbf{R}, but
observe nothing if they choose \textbf{Z}. That is, 
\[
\mathbb{O}_{C}=\mathbb{O}_{D}=\{(L,R),(L,Z),(H,R),(H,Z),\varnothing\}.
\]
 Consider the natural isomorphism $\varphi(R_{c})=R_{d}$ and $\varphi(Z_{c})=Z_{d}$,
and partition histories after $R_{c}$ and $R_{d}$ as 
\begin{align*}
\mathbb{E}_{R_{c},R_{d}}= & \left\{ \left\{ (R_{c},(L,R)),(R_{d},(L,R)),(R_{c},(L,Z)),(R_{d},(L,Z))\right\} ,\right.\\
 & \left.\left\{ (R_{c},(H,R)),(R_{d},(H,R)),(R_{c},(H,Z)),(R_{d},(H,Z))\right\} \right\} .
\end{align*}
The two equivalence classes in $\mathbb{E}_{R_{c},R_{d}}$ represent
whether the restaurant is observed to play \textbf{L} or \textbf{H}.
Also let $\mathbb{E}_{Z_{c},Z_{d}}=\{\{(Z_{c},\varnothing),(Z_{d},\varnothing)\}\}$
contain just one equivalence class, which corresponds to no observations.
It is clear that given any pure strategy profile $s$, $(R_{c},s_{-c})$
and $(R_{d},s_{-d})$ lead to the same histories, up to equivalence
classes (i.e., the same observation of the restaurant's play.)

We extend the notion of equivalence to histories with more than one
period in the natural way.
\begin{defn}
Suppose $i$ and $j$ face $(\varphi,\mathbb{E})$-isomorphic learning
problems with $\varphi(s_{i})=s_{j}.$ Say $i$'s subhistory $y_{i,s_{i}}$
is\emph{ equivalent} to $j$'s subhistory $y_{j,s_{j}}$, written
as $y_{i,s_{i}}\sim y_{j,s_{j}}$, if they are equivalent period by
period according to the equivalence classes $\mathbb{E}_{s_{i},\varphi(s_{i})}.$
\end{defn}
Equivalence of $y_{i,s_{i}}$ and $y_{j,s_{j}}$ says $i$ has played
$s_{i}$ as many times as $j$ has played $s_{j},$ and that the sequence
of observations that $i$ encountered from experimenting with $s_{i}$
are the ``same'' as those that $j$ encountered from experimenting
with $s_{j}$.

The following histories for the critic and the diner of the restaurant
game are equivalent for the strategy $R$ (under the $\varphi,\mathbb{E}$
previously given). This equivalence arises because the subhistories
$y_{c,R}$ and $y_{d,R}$ contain the same sequences of the restaurant's
play (even though the two agents have different observations in terms
of how often the other patron goes to the restaurant).

\begin{table}[H]
\begin{centering}
\begin{tabular}{|c|c|c|c|c|c|c|}
\hline 
 & period & 1 & 2 & 3 & 4 & 5\tabularnewline
\hline 
\hline 
\multirow{2}{*}{$y_{c}$:} & own strategy & $R$ & $Z$ & $Z$ & $Z$ & $R$\tabularnewline
\cline{2-7} \cline{3-7} \cline{4-7} \cline{5-7} \cline{6-7} \cline{7-7} 
 & observation & $(L,Z)$ & $\varnothing$ & $\varnothing$ & $\varnothing$ & $(H,Z)$\tabularnewline
\hline 
\multirow{2}{*}{$y_{d}$:} & own strategy & $Z$ & $R$ & $Z$ & $R$ & \tabularnewline
\cline{2-7} \cline{3-7} \cline{4-7} \cline{5-7} \cline{6-7} \cline{7-7} 
 & observation & $\varnothing$ & $(L,R)$ & $\varnothing$ & $(H,Z)$ & \tabularnewline
\hline 
\end{tabular}
\par\end{centering}
\caption{The two histories $y_{c}$ (for the critic, with length 5) and $y_{d}$
(for the diner, with length 4) have equivalent subhistories for $R$.}
\end{table}

The agent decides which strategy to use in each period based on their
history so far. We assume that this \emph{learning policy }is a deterministic
map (which is w.l.o.g. for expected-utility maximizers), and denote
it $r_{i}:Y_{i}\to\mathbb{S}_{i}$.
\begin{defn}
A learning policy is an \emph{index policy} if there are\emph{ index
functions} $(\iota_{s_{i}})_{s_{i}\in\mathbb{S}_{i}}$ with each $\iota_{s_{i}}$
mapping subhistories of $s_{i}$ to real numbers, such that $r_{i}(y_{i})\in\underset{s_{i}\in\mathbb{S}_{i}}{\arg\max}\left\{ \iota_{s_{i}}(y_{i,s_{i}})\right\} $.
\end{defn}
If an agent uses an index policy, we can think of their behavior in
the following way. At each history, they compute an index for each
strategy $s_{i}\in\mathbb{S}_{i}$ based on the subhistory of those
periods where they chose $s_{i},$ and play a strategy with the highest
index.\footnote{To handle possible ties, we can introduce a strict order over each
agent's strategy set, and specify that if two strategies have the
same index the agent plays the one that is higher ranked.} The best-known example of an index policy is the Gittins index \citep{gittins1979bandit}.
Some heuristics for learning problems, such as weighted fictitious
play \citep{cheung1997individual}, are also index policies. The key
restriction in an index policy is that each strategy's index depends
only on the observations when that strategy was played. Note that
index policies are deterministic, unlike some heuristics such as Thompson
sampling \citep{thompson1933likelihood}.

Finally, we define a notion of the relative compatibility of index
policies $r_{i}$ and $r_{j}$ with various strategies.
\begin{defn}
Suppose $i$ and $j$ face $(\varphi,\mathbb{E})$-isomorphic learning
problems with $\varphi(s_{i}^{*})=s_{j}^{*}$. For two index policies
$r_{i}$ and $r_{j},$ $r_{i}$ is \emph{more index-compatible} with
$s_{i}^{*}$ than $r_{j}$ is with $s_{j}^{*}$ if for any histories
$y_{i},y_{j}$ and any strategy $s_{i}^{'}\in\mathbb{S}_{i},$ $s_{i}^{'}\ne s_{i}^{*}$
satisfying
\begin{itemize}
\item $y_{i,s_{i}^{*}}\sim y_{j,s_{j}^{*}}$ and $y_{i,s_{i}^{'}}\sim y_{j,\varphi(s_{i}^{'})}$
\item $s_{j}^{*}$ has weakly the highest index for $j$,
\end{itemize}
we have that $s_{i}^{'}$ does not have the weakly highest index for
$i$.
\end{defn}
The key result of this section, Proposition \ref{prop:index}, shows
that index compatibility is a sufficient condition for average agents
in the role of player $i$ to play $s_{i}^{*}$ more frequently than
those in the $j$-role play $s_{j}^{*}$. This result is not obvious
because the index-compatibility relation only applies when two agents
have equivalent histories, which typically does not hold during the
dynamic process of experimentation. To formalize the result, suppose
that each agent is randomly matched with opponents in other player
roles, and that the play of these opponents corresponds to a fixed
mixed strategy profile $\sigma$, as it would in a steady state. We
call $\sigma$ the \emph{social distribution}. It, together with the
agent's learning policy, generates a stochastic process $X_{i}^{t}$
describing $i$'s strategy in period $t$; denote its distribution
by $\mathbb{P}_{r_{i},\sigma}$.
\begin{defn}
Let $X_{i}^{t}$ be the $\mathbb{S}_{i}$-valued random variable representing
$i$'s play in period $t$. Player $i$'s\emph{ discounted lifetime
play under }the social distribution $\sigma$ and learning policy
$r_{i}$ is $\phi_{i}(\cdot;r_{i},\sigma):\mathbb{S}_{i}\to[0,1]$,
where for each $s_{i}\in\mathbb{S}_{i}$ 
\[
\phi_{i}(s_{i};r_{i},\sigma):=(1-\gamma)\sum_{t=1}^{\infty}\gamma^{t-1}\cdot\mathbb{P}_{r_{i},\sigma}\{X_{i}^{t}=s_{i}\}.
\]
\end{defn}
Each newborn player $i$ expects to play each of $s_{i}$ a share
$\phi_{i}(s_{i};r_{i},\sigma)$ of their lifetime.
\begin{prop}
\label{prop:index} Suppose $i$ and $j$ face $(\varphi,\mathbb{E})$-isomorphic
learning problems with $\varphi(s_{i}^{*})=s_{j}^{*}$, and $r_{i},r_{j}$
are index policies, with $r_{i}$ more index-compatible with $s_{i}^{*}$
than $r_{j}$ is with $s_{j}^{*}$. Then $\phi_{i}(s_{i}^{*};r_{i},\sigma_{-i})\ge\phi_{j}(s_{j}^{*};r_{j},\sigma_{-j})$
for any $0\le\gamma<1$ and $\sigma\in\times_{k}\Delta(\mathbb{S}_{k})$.
\end{prop}
Note that Proposition \ref{prop:index} holds whenever there exists
any $(\varphi,\mathbb{E})$ such that $i,j$ face $(\varphi,\mathbb{E})$-isomorphic
learning problems and $r_{i},r_{j}$ satisfy index-compatibility with
respect to $(\varphi,\mathbb{E}).$

The proof extends the coupling argument in the proof of \citet{fudenberg_he_2017}'s
Lemma 2, which only applies to the Gittins index in signaling games,
and also fills in a missing step (Lemma \ref{lem:response_path})
that the earlier proof implicitly assumed. To deal with the issue
that $i$ and $j$ learn from endogenous data that diverge as they
undertake different experiments, we couple the learning problems of
$i$ and $j$ using what we call\emph{ response paths} $\mathfrak{S}\in((\mathbb{S})^{N})^{\infty}$
where $N=\max_{i}|\mathbb{S}_{i}|.$ We can think of $\mathfrak{S}$
as a two-dimensional array of strategy profiles, $\mathfrak{S}=((\mathfrak{S}_{1,1},\mathfrak{S}_{1,2},...,\mathfrak{S}_{1,N}),(\mathfrak{S}_{2,1},\mathfrak{S}_{2,2},...,\mathfrak{S}_{2,N}),...)$,
where $\mathfrak{S}_{t,n_{i}}\in\mathbb{S}$ for every $t\ge1,$ $1\le n_{i}\le N.$
We may enumerate each player's strategy set $\mathbb{S}_{i}$ and
interchangeably refer to each strategy $s_{i}\in\mathbb{S}_{i}$ with
its assigned number $n_{s_{i}}\in\{1,...,N\}.$ For a given path and
learning policy $r_{i}$ for player $i,$ imagine running the policy
against the data-generating process where the $t$-th time $i$ plays
the $n_{i}$-th strategy in $\mathbb{S}_{i}$, $i$ is matched up
with opponents who play the strategies $\mathfrak{S}_{t,n_{i}}$.
Given a learning policy $r_{i}$, each $\mathfrak{S}$ induces a deterministic
infinite history of $i$'s strategies $y_{i}(\mathfrak{S},r_{i})\in(\mathbb{S}_{i})^{\infty}.$
(For $n_{i}>|\mathbb{S}_{i}|,$ the values of $(\mathfrak{S}_{t,n_{i}})_{t\ge1}$
do not matter for the induced history.) We show that under the hypothesis
that $r_{i}$ is more index-compatible with $s_{i}^{*}$ than $r_{j}$
is with $s_{j}^{*}$, the weighted lifetime frequency of $s_{i}^{*}$
in $y_{i}(\mathfrak{S},r_{i})$ is larger than the frequency of $s_{j}^{*}$
in $y_{j}(\mathfrak{S},r_{j})$ for every $\mathfrak{S}$, where play
in different periods of the infinite histories $y_{i}(\mathfrak{S},r_{i}),y_{j}(\mathfrak{S},r_{j})$
are weighted by the probabilities of surviving into these periods,
just as in the definition of discounted lifetime play.

Lemma \ref{lem:response_path} in the Appendix shows that when $i$
and $j$ face i.i.d. draws of opponents' play from a fixed learning
environment $\sigma,$ the discounted lifetime play are the same as
if they each faced a random response path $\mathfrak{S}$ drawn at
birth according to the (infinite) product measure over $((\mathbb{S})^{N})^{\infty}$
whose marginals correspond to $\sigma^{|\mathbb{S}_{i}|}$.

\section{\label{sec:Factorability-and-Isomorphic}Index Compatibility and
Player Compatibility in Factorable Games}

Section \ref{sec:Indexable-Learning-Rules} proves that whenever index-strategy
pairs $(r_{i},s_{i}^{*})$ and $(r_{j},s_{j}^{*})$ satisfy index
compatibility, and $i$ and $j$ face isomorphic learning problems,
index policy $r_{i}$ uses $s_{i}^{*}$ more often than $r_{j}$ uses
$s_{j}^{*}$ against any social distribution $\sigma.$ Index compatibility
is a joint restriction on the agents' learning policy and the game's
feedback structure $(\mathbb{O},\mathfrak{o})$, which gives the domain
that the learning policies are defined on. This section shows that
player compatibility implies index compatibility for rational behavior
and weighted fictitious play in a class of \emph{factorable} games.
Factorability applies to the examples discussed in Section \ref{sec:Examples-of-PCE}
for players ranked by compatibility.

\subsection{Factorability and Isomorphic Factoring}

In factorable games, agent $i$'s observation is just their utility:
$\mathfrak{o}_{i}(s_{i},s_{-i})=u_{i}(s_{i},s_{-i}).$ In general,
$i$'s payoff $u_{i}(s_{i},s_{-i})$ needs not reveal the actions
that others' strategies $s_{-i}$ pick at all $-i$ information sets
in the game tree. The definition of factorability puts restrictions
on the extensive-form game tree $\Gamma$ to discipline what $i$
can learn from own payoffs.

Factorability says that playing strategy $s_{i}\in\mathbb{S}_{i}$
against any strategy profile of $-i$ identifies all of the opponents'
actions that can be payoff-relevant for $s_{i}$, but does not reveal
any information about the payoff consequences of choosing any other
strategy $s'_{i}\ne s_{i}$ against the social distribution. From
$i$'s perspective, it is as if the game tree can be ``factored''
into disjoint parts based on the elements of $\mathbb{S}_{i},$ and
playing each $s_{i}\in\mathbb{S}_{i}$ lets $i$ learn how $s_{-i}$
play at all payoff-relevant $-i$ information sets in the $s_{i}$-part
of the game tree, but provides no information about $s_{-i}$ in any
other part of the tree. We now make this idea formal.

For an information set $h$ of $j$ with $j\ne i$, write $P_{h}$
for the partition on $\mathbb{S}_{-i}$ where two strategy profiles
$s_{-i},s'_{-i}$ are in the same element of the partition if they
prescribe the same play on $h.$ That is, the partition elements in
$P_{h}$ are $\{s_{-i}\in\mathbb{S}_{-i}:s_{-i}(h)=a_{h}\}$ for $a_{h}\in A_{h}.$
Thus partition $P_{h}$ contains perfect information about play on
$h$, but no other information.
\begin{defn}
\label{def:factorable}For each player $i$ and strategy $s_{i}\in\mathbb{S}_{i}$,
let $\Pi_{i}[s_{i}]$ be the coarsest partition of $\mathbb{S}_{-i}$
that makes $s_{-i}\mapsto u_{i}(s_{i},s_{-i})$ measurable. The game
$\Gamma$ is \emph{factorable for i} if:
\end{defn}
\begin{enumerate}
\item For each $s_{i}\in\mathbb{S}_{i}$ there exists a (possibly empty)
collection of $-i$'s information sets $F_{i}[s_{i}]\subseteq\mathcal{H}_{-i}$
so that $\Pi_{i}[s_{i}]=\bigvee_{h\in F_{i}[s_{i}]}P_{h}$. (The meet
across an empty collection is the coarsest possible partition on $\mathbb{S}_{-i}$,
i.e. no information).
\item For two strategies $s_{i}\ne s'_{i}$, $F_{i}[s_{i}]\cap F_{i}[s'_{i}]=\emptyset.$
\end{enumerate}
When $\Gamma$ is factorable for $i$, we refer to $F_{i}[s_{i}]$
as the \emph{$s_{i}$-relevant information sets}, a terminology we
now justify. In general, $i$'s payoff from playing $s_{i}$ can depend
on the profile of $-i$'s actions at all opponent information sets.
Condition (1) implies that only opponents' actions on $F_{i}[s_{i}]$
matter for $i$'s payoff after choosing $s_{i}$, and furthermore
this dependence is one-to-one. That is,

\[
u_{i}\left(s_{i},s_{-i}\right)=u_{i}\left(s_{i},s'_{-i}\right)\Leftrightarrow\left(\forall h\in F_{i}\left[s_{i}\right]\quad s_{-i}\left(h\right)=s'_{-i}\left(h\right)\right).
\]
 The one-to-one mapping from $s_{-i}$ to $i$'s payoff implies that
$i$'s learning cannot be blocked by another player: By choosing $s_{i}$,
$i$ can always use their own payoffs to identify actions on $F_{i}[s_{i}]$
regardless of what happens elsewhere in the game tree.\footnote{It is easy but expositionally costly to extend this to the case where
several actions on $A_{h}$ lead to the same payoff for $i$.} It also shows that if $\Gamma$ is factorable for $i,$ then $F_{i}[s_{i}]$
are uniquely defined for all $s_{i}$. Suppose there were two collections
$(F_{i}[s_{i}])_{s_{i}\in\mathbb{S}_{i}}$ and $(\tilde{F}_{i}[s_{i}])_{s_{i}\in\mathbb{S}_{i}}$
with $F_{i}[s_{i}]\backslash\tilde{F}_{i}[s_{i}]\ne\varnothing$ for
some $s_{i}\in\mathbb{S}_{i}$ that both satisfy Condition (1) of
Definition \ref{def:factorable}. Then there are two $-i$ profiles
$s_{-i},s'_{-i}$ that match on $\tilde{F}_{i}[s_{i}]$ but not on
$F_{i}[s_{i}]$. But then we get both $u_{i}(s_{i},s_{-i})=u_{i}(s_{i},s'_{-i})$
and $u_{i}(s_{i},s_{-i})\ne u_{i}(s_{i},s'_{-i})$, a contradiction.

Condition (2) implies that $i$ cannot extrapolate the payoff consequence
of a different strategy $s'_{i}\ne s_{i}$ through playing $s_{i}$
(provided $i$'s prior is independent about opponents' play on different
information sets). This is because there is no intersection between
the $s_{i}$-relevant information sets and the $s'_{i}$-relevant
ones. In particular this means that player $i$ cannot ``free ride''
on others' experiments and learn about the consequences of various
risky strategies while playing a safe one that is myopically optimal.

In short, Condition (1) ensures $i$ gets information about play in
the same part of the game tree every time she plays $s_{i}$ (instead
of learning about play in two different parts of the tree depending
on someone else's strategy), while Condition (2) guarantees that there
is no interaction between learning about different strategies.

If $F_{i}[s_{i}]$ is empty, then $s_{i}$ is a kind of ``opt out''
action for $i$. After choosing $s_{i}$, $i$ receives the same utility
from every reachable terminal node and gets no information about the
payoff consequences of any of their other strategies.

\subsubsection{Examples of Factorable Games}

We now illustrate factorability using the examples from Section \ref{sec:Examples-of-PCE}
and some other general classes of games.

\paragraph{\label{subsubsec:restaurant_with_partition}The Restaurant Game}

Consider the restaurant game from Example \ref{exa: restaurant}.
Since $x<-1$ and $y>0.5,$ we have $x\ne y$ and $x\ne y+0.5.$ By
choosing \textbf{R}, the customer's payoff perfectly reveals others'
play. By choosing \textbf{Z}, the customer always gets 0 payoff (these
nodes are colored in the diagram below) and so cannot infer anyone
else's play.

\begin{center}\includegraphics[scale=0.45]{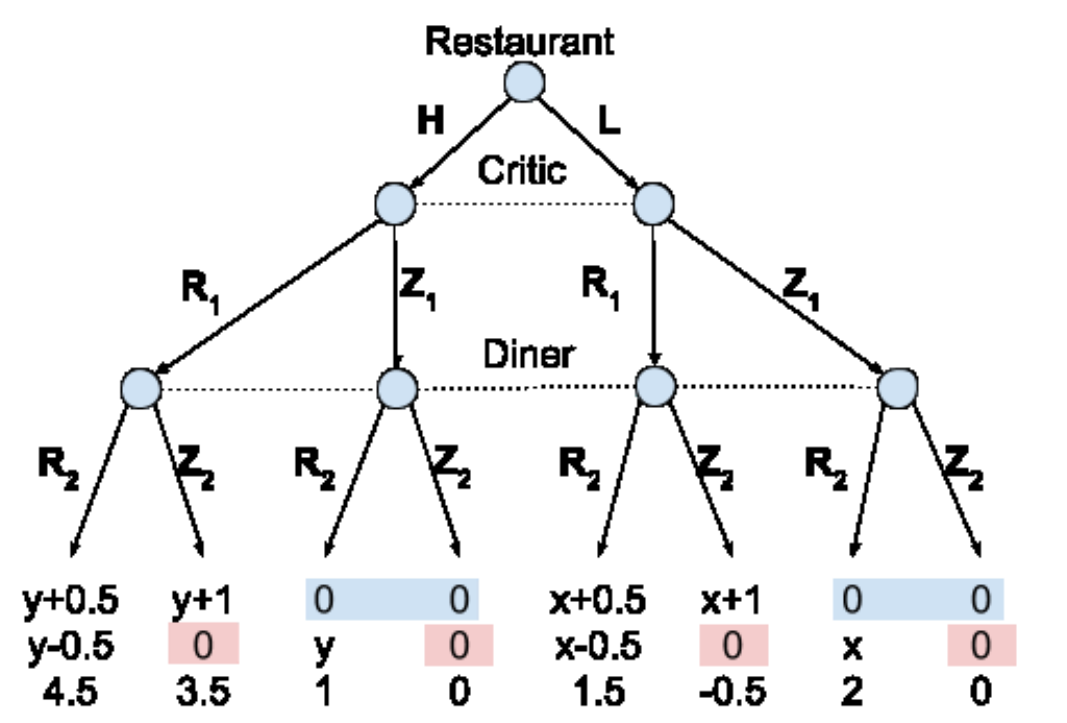}\end{center}

The restaurant game is factorable for the Critic and the Diner. Let
$F_{i}[\text{\textbf{R}}_{i}]$ consist of the two information sets
of $-i$ and let $F_{i}[\text{\textbf{Z}}_{i}]$ be the empty set
for each $i\in\{1,2\}$. It is easy to verify that the two conditions
of factorability are satisfied.

It is important for factorability that a customer who takes the ``outside
option'' of ordering pizza gets the same payoff regardless of the
restaurant's play, and does not observe the restaurant's quality choice
even if the other customer patronizes the restaurant. Factorability
rules out this sort of ``free information,'' so that when we analyze
the non-equilibrium learning problem we know that each agent can only
learn a strategy's payoff consequences by playing it themselves.

\paragraph{\label{subsubsec:link_with_partition}The Link-Formation Game}

Consider the link-formation game from Example \ref{exa:link4}. The
payoff for a player choosing \textbf{Inactive} is always 0, whereas
the payoff for a player choosing \textbf{Active} exactly identifies
the play of the two players on the opposite side. We can let $F_{i}[\mathbf{Active}_{i}]$
consist of the information sets of the other two agents on the other
side of $i$ and let $F_{i}[\mathbf{Inactive}_{i}]$ be empty. This
specification of the $s_{i}$-relevant information sets shows the
stage game is factorable for every player.

\paragraph{Binary Participation Games}

More generally, $\Gamma$ is factorable for $i$ whenever it is a
\emph{binary participation game} for $i.$
\begin{defn}
\label{def:binary_participation}$\Gamma$ is a \emph{binary participation
game} for $i$ if the following are satisfied.
\begin{enumerate}
\item $i$ has a unique information set with two actions, without loss labeled
\textbf{In} and \textbf{Out.}
\item All paths of play in $\Gamma$ pass through $i$'s information set.
\item All paths of play where $i$ plays \textbf{In} pass through the same
information sets.
\item Terminal vertices associated with $i$ playing \textbf{Out} all have
the same payoff for $i$.
\item Terminal vertices associated with $i$ playing\textbf{ In} all have
different payoffs for $i$.
\end{enumerate}
\end{defn}
Action \textbf{Out} is an outside option for $i$ that leads to a
constant payoff regardless of others' play. We are implicitly assuming
in part (5) of the definition that the game has generic payoffs for
$i$ after choosing \textbf{In}, in the sense that changing the action
at any one information set on the path of play will change $i$'s
payoff.

If $\Gamma$ is a binary participation game for $i,$ let $F_{i}[\mathbf{In}]$
be the collection of $-i$ information sets encountered in paths of
play where $i$ chooses \textbf{In}. Let $F_{i}[\mathbf{Out}]$ be
the empty set. We see that $\Gamma$ is factorable for $i.$ Clearly
$F_{i}[\mathbf{In}]\cap F_{i}[\mathbf{Out}]=\emptyset$, so Condition
(2) of factorability is satisfied. When $i$ chooses the strategy
\textbf{In}, the tree structure of $\Gamma$ implies different profiles
of play on $F_{i}[\mathbf{In}]$ must lead to different terminal nodes,
and the generic payoff condition means Condition (1) of factorability
is satisfied for strategy \textbf{In}. When $i$ plays \textbf{Out},
$i$ gets the same payoff regardless of the others' play, so Condition
(1) of factorability is satisfied for strategy \textbf{Out}.

The restaurant game is a binary participation game for the critic
and the diner, where ordering pizza is the outside option. The link-formation
game is a binary participation game for every player, where \textbf{Inactive}
is the outside option.

\paragraph{\label{subsubsec:signal_multiple_audiences}Signaling to Multiple
Audiences}

To give a different class of examples of factorable games, consider
a game of signaling to one or more audiences. To be precise, Nature
moves first and chooses a type for the sender, drawn according to
some known distribution over a finite set of types, $\Theta$. The
sender then chooses a signal $s\in S$, observed by all receivers
$r_{1},...,r_{n_{r}}$. Each receiver then simultaneously chooses
an action. The profile of receiver actions, together with the sender's
type and signal, determine payoffs for all players. Viewing different
types of senders as different players, this game is factorable for
all sender types, provided payoffs are generic. This factorability
arises because for each type $i$, $F_{i}[s]$ is the set of $n_{r}$
information sets for the receivers after seeing signal $s.$

\subsubsection{Examples of Non-Factorable Games}

The next result gives a necessary condition for factorability, which
we then use to provide examples of non-factorable games. Suppose $H$
is an information set of player $j\ne i.$ Player $i$'s payoff is
\emph{independent} of $h$ if $u_{i}(a_{h},a_{-h})=u_{i}(a'_{h},a_{-h})$
for all $a_{h},a'_{h},a_{-h}$, where $a_{h},a'_{h}$ are actions
on information set $h,$ and $a_{-h}$ is a profile of actions on
all other information sets in the game tree. In games where each player
moves at most once along any path of play, if $i$'s payoff is not
independent of the action taken at some information set $h$, then
$i$ can always put $h$ onto the path of play via a unilateral deviation
at one of their information sets.
\begin{prop}
\label{prop:one_step} Suppose each player moves at most once along
any path of play in $\Gamma$, the game is factorable for $i$, and
$i$'s payoff is not independent of $h^{*}$. For any strategy profile,
either $h^{*}$ is on the path of play, or $i$ has a deviation at
one of their information sets that puts $h^{*}$ onto the path of
play.
\end{prop}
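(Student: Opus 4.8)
The plan is to argue by contradiction: fix a strategy profile $s = (s_i, s_{-i})$ and suppose that $h^*$ is not on the path of play and that $i$ has no unilateral deviation (at any of her information sets) that brings $h^*$ onto the path. I want to conclude that $i$'s payoff is in fact independent of $h^*$, contradicting the hypothesis. The key observation to exploit is that, under Assumption 1, each player moves at most once along any path, so the information sets along a play form a simple chain; combined with the tree structure, "no deviation of $i$ puts $h^*$ on the path" should mean that the subtree hanging below $h^*$ is reachable only through action choices by players other than $i$ that already differ from what $s_{-i}$ prescribes — i.e. $h^*$ sits "behind" a move of $-i$ that $i$ cannot influence.

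The main steps I would carry out are as follows. First, characterize reachability: $h^*$ is on the path of some profile $(s_i', s_{-i})$ for some $s_i' \in \mathbb{S}_i$ if and only if the unique path from the root to $h^*$ can be "completed" by choosing $s_i'$ appropriately at whatever information sets of $i$ lie along that path (there is at most one such information set of $i$, by Assumption 1), while $s_{-i}$ already agrees with the path at every information set of $-i$ that precedes $h^*$. So the negation — no deviation of $i$ puts $h^*$ on the path — together with "$h^*$ is off the path of $s$" forces the existence of an information set $h'$ of some player $j \ne i$ on the root-to-$h^*$ path such that $s_{-i}$ at $h'$ chooses an action leading away from $h^*$; indeed this must hold no matter how $i$ plays, so such a "blocking" $-i$ move exists on the path to $h^*$. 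Second, I would invoke factorability: by Proposition \ref{prop:one_step}'s hypothesis the game is factorable for $i$, so $\Pi_i[s_i] = \bigvee_{h \in F_i[s_i]} P_h$ for each strategy $s_i$, meaning $i$'s payoff from $s_i$ depends only on $-i$'s play at the information sets in $F_i[s_i]$. If $i$'s payoff were not independent of $h^*$, then by definition there are $a_{h^*} \ne a_{h^*}'$ and a completion $a_{-h^*}$ with $u_i(a_{h^*}, a_{-h^*}) \ne u_i(a_{h^*}', a_{-h^*})$; realize $a_{-h^*}$ by a profile that actually reaches $h^*$ (so that the action at $h^*$ is pinned down by the two strategy profiles differing only there), which shows $h^* \in F_i[s_i]$ for the strategy $s_i$ that $i$ uses to reach $h^*$. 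But I argued above that under the contradiction hypothesis no strategy of $i$ reaches $h^*$ — so $h^* \notin F_i[s_i]$ for every $s_i \in \mathbb{S}_i$, and hence $i$'s payoff cannot depend on the action at $h^*$. That is the desired contradiction.

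The cleanest way to organize this is probably to not literally go through "independence" but to directly show: if $h^*$ can never be put on the path of play by $i$ (given $s_{-i}$) or is off-path for $s$, then the action at $h^*$ is never determined on any play consistent with any strategy of $i$ against any completion, so $h^* \notin F_i[s_i]$ for all $s_i$, whence $\Pi_i[s_i]$ carries no information about $h^*$ and $i$'s payoff is independent of $h^*$. The step I expect to be the real obstacle is the first one — rigorously translating "no unilateral deviation of $i$ puts $h^*$ on the path of play" into "there is an information set of $-i$ on the path to $h^*$ at which $s_{-i}$ turns away from $h^*$, irrespective of $i$'s choices." This requires care with the tree geometry and with Assumption 1 (at most one move of $i$ per path), because one must rule out the possibility that $i$ has two separate information sets both relevant to reaching $h^*$, or that $h^*$ is reachable via a play that never encounters any information set of $i$ at all (in which case "on the path of play" versus "reachable" must be handled — but then $h^*$ being off the path of $s$ already contradicts the setup, since $s_{-i}$ alone would determine reachability). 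Once that reachability dichotomy is pinned down, the factorability half of the argument is short.
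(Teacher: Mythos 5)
Your overall plan --- argue the contrapositive: ``if $h^{*}$ is off the path of $s$ and no deviation by $i$ reaches it, then $i$'s payoff is independent of $h^{*}$'' --- has the right shape, and your second step correctly reproduces the content of Lemma \ref{lem:one_step_lem2} (payoff not independent of $h^{*}$ implies $h^{*}\in F_{i}[\hat{s}_{i}]$ for some $\hat{s}_{i}$). But there is a genuine gap at the pivotal inference. Your Step 1 yields only that, against the \emph{fixed} $s_{-i}$, some opponent information set on the way to $h^{*}$ is played ``away'' from $h^{*}$, so that no strategy of $i$ reaches $h^{*}$ \emph{against this particular $s_{-i}$}. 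From this you conclude ``so $h^{*}\notin F_{i}[s_{i}]$ for every $s_{i}$'' (equivalently, that the action at $h^{*}$ is ``never determined $\ldots$ against any completion''). That does not follow: $\Pi_{i}[s_{i}]$ and $F_{i}[s_{i}]$ are defined by payoff-measurability over \emph{all} of $\mathbb{S}_{-i}$, and the hypothesis that $i$'s payoff is not independent of $h^{*}$ produces (as in your own Step 2) profiles with \emph{other} $-i$ play that do reach $h^{*}$ and place $h^{*}$ in some $F_{i}[\hat{s}_{i}]$. Being blocked under one particular $s_{-i}$ is, in a general extensive form, perfectly consistent with $h^{*}\in F_{i}[\hat{s}_{i}]$ --- the three-player centipede example in the paper is exactly such a configuration --- so your contradiction does not close as written.

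What is missing is precisely Lemma \ref{lem:one_step_lem1}: if $h^{*}\in F_{i}[\hat{s}_{i}]$, then $h^{*}$ is on the path of $(\hat{s}_{i},s_{-i})$ for \emph{every} $s_{-i}$; that is, factorability rules out blocking altogether. Its proof is short but uses the coarsest-partition property in an essential way: if $h^{*}$ were off the path under some $s_{-i}$, changing the action at $h^{*}$ alone would not change $i$'s payoff, yet the two $-i$ profiles lie in different cells of $\Pi_{i}[\hat{s}_{i}]=\bigvee_{h\in F_{i}[\hat{s}_{i}]}P_{h}$, contradicting that $\Pi_{i}[\hat{s}_{i}]$ is the coarsest partition making $U_{i}(\hat{s}_{i},\cdot)$ measurable. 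With that lemma the argument closes the way the paper runs it: Lemma \ref{lem:one_step_lem2} gives $h^{*}\in F_{i}[\hat{s}_{i}]$, Lemma \ref{lem:one_step_lem1} puts $h^{*}$ on the path of $(\hat{s}_{i},s_{-i})$ for the given $s_{-i}$, and Assumption 1 lets one replace $\hat{s}_{i}$ by a deviation at the single information set of $i$ on that path --- a step you do note correctly. In other words, the part you flagged as the ``real obstacle'' (translating ``no one-information-set deviation works'' into a blocking statement) is the easy part; the real content you are assuming without proof is the no-blocking lemma. A minor further point: $h^{*}$ is an information set, possibly containing several nodes, so ``the unique path from the root to $h^{*}$'' should be ``a path to some node of $h^{*}$.''
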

This result follows from two lemmas.
\begin{lem}
\label{lem:one_step_lem1} For any game that is factorable for $i$
and any information set $h^{*}$ for player $j\ne i$ where $j$ has
at least two different actions, if $h^{*}\in F_{i}[s_{i}]$ for some
extensive-form strategy $s_{i}\in\mathbb{S}_{i},$ then $h^{*}$ is
always on the path of play when $i$ chooses $s_{i}$.
\end{lem}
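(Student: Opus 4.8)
The plan is to prove Lemma \ref{lem:one_step_lem1} by a direct argument from the definition of factorability, using the fact that $h^*\in F_i[s_i]$ forces opponents' actions at $h^*$ to be payoff-relevant for $i$ whenever $i$ plays $s_i$, and payoff-relevance under Assumption 1 (each player moves at most once on any path) requires $h^*$ to actually be reached. First I would suppose for contradiction that there is some $s_{-i}\in\mathbb{S}_{-i}$ such that, when $i$ plays $s_i$ and $-i$ plays $s_{-i}$, the information set $h^*$ is \emph{not} on the resulting path of play. Since $j$ has at least two actions $a,a'\in A_{h^*}$ available at $h^*$, I can form $s_{-i}'$ from $s_{-i}$ by changing only $j$'s action at $h^*$ (say from whatever $s_{-i}$ prescribes to the other of $a,a'$), leaving all other components of $s_{-i}$ — and in particular $j$'s actions at all other information sets — unchanged. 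Because $h^*$ is off the path of play of $(s_i, s_{-i})$, and the only difference between $s_{-i}$ and $s_{-i}'$ is the action prescribed at $h^*$, the path of play and hence the terminal node reached is the same under $(s_i,s_{-i})$ and $(s_i,s_{-i}')$; here I use Assumption 1 to be sure that $j$ does not get a second chance to move after $h^*$ in a way that could matter. Therefore $U_i(s_i,s_{-i})=U_i(s_i,s_{-i}')$.

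Now I invoke factorability. By Condition (1), $\Pi_i[s_i]=\bigvee_{h\in F_i[s_i]}P_h$, and the displayed biconditional in the definition says $U_i(s_i,s_{-i})=U_i(s_i,s_{-i}')$ iff $s_{-i}(h)=s_{-i}'(h)$ for every $h\in F_i[s_i]$. But $s_{-i}$ and $s_{-i}'$ differ precisely at $h^*$, and $h^*\in F_i[s_i]$ by hypothesis, so the right-hand side of the biconditional fails — contradicting $U_i(s_i,s_{-i})=U_i(s_i,s_{-i}')$, which I just established. This contradiction shows that for every $s_{-i}$, the information set $h^*$ is on the path of play of $(s_i,s_{-i})$, which is exactly the claim.

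The one delicate point — and the step I'd watch most carefully — is the claim that changing $j$'s action at an off-path information set $h^*$ does not change the terminal node reached. This needs Assumption 1 together with the observation that if $h^*$ is off the path of $(s_i, s_{-i})$ then the play never arrives at $h^*$, so the prescription there is never consulted; and since no player (in particular $j$) moves twice on a path, there is no subtle feedback whereby $j$'s hypothetical choice at $h^*$ re-enters the play elsewhere. I would state this as a short ``path-irrelevance of off-path actions'' observation, perhaps noting it holds in any extensive-form game, so that the appeal to Assumption 1 is really only about $j$ not revisiting a decision. With that in hand the rest is a one-line application of the factorability biconditional, so the lemma is essentially immediate once the path-irrelevance point is pinned down.
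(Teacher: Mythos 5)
Your argument is correct and is essentially the paper's own proof: perturb $j$'s action at the off-path $h^{*}$, observe that $i$'s payoff from $s_{i}$ is unchanged, and contradict the one-to-one dependence of $U_{i}(s_{i},\cdot)$ on play at $F_{i}[s_{i}]$ given by Condition (1) of factorability (equivalently, the coarsest-partition property of $\Pi_{i}[s_{i}]$). The only quibble is your ``delicate point'': no appeal to Assumption 1 is needed, since in any extensive-form game changing the prescription at an information set that is unreached under a profile cannot alter the path of play, which is exactly the fact the paper uses implicitly.
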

\begin{lem}
\label{lem:one_step_lem2} For any game that is factorable for $i$
and any information set $h^{*}$ of player $j\ne i$, suppose $i$'s
payoff is not independent of $h^{*}$. Then 1) $j$ has at least two
different actions on $h^{*}$; and (2) there exists some extensive-form
strategy $s_{i}\in\mathbb{S}_{i}$ so that $h^{*}\in F_{i}[s_{i}]$.
\end{lem}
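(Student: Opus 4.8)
The plan is to treat part (i) as an immediate consequence of the definition of non-independence, and to prove part (ii) by extracting the required strategy $s_i$ directly from the action profile that witnesses non-independence and then invoking the one-to-one dependence built into Condition (1) of factorability. For (i): since $i$'s payoff is not independent of $h^*$, there are actions $a,a'$ of $j$ at $h^*$ and a profile $a_{-h^*}$ of actions at all the other information sets of $\Gamma$ with $u_i(a,a_{-h^*}) \ne u_i(a',a_{-h^*})$; in particular $a \ne a'$, so $A_{h^*}$ contains at least two actions, which is (i).

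For (ii), I would regard $(a,a_{-h^*})$ as a pure strategy profile of $\Gamma$ and decompose it into the strategy $s_i \in \mathbb{S}_i$ it induces for $i$ (its restriction to $\mathcal{H}_i$) and the profile $s_{-i} \in \mathbb{S}_{-i}$ it induces for the other players. Because $h^* \in \mathcal{H}_j$ with $j \ne i$, replacing $a$ by $a'$ leaves $s_i$ unchanged and produces a profile $s_{-i}' \in \mathbb{S}_{-i}$ that differs from $s_{-i}$ only at $h^*$, while $U_i(s_i,s_{-i}) \ne U_i(s_i,s_{-i}')$. Now apply Condition (1) of factorability for $i$ in the equivalent form already recorded just after the definition of factorability,
\[
U_i(s_i,s_{-i}) = U_i(s_i,s_{-i}') \iff \big(\forall h \in F_i[s_i] \colon s_{-i}(h) = s_{-i}'(h)\big).
\]
Since the left-hand equality fails, $s_{-i}$ and $s_{-i}'$ must disagree at some $h \in F_i[s_i]$; but $h^*$ is the only information set at which they disagree, so $h^* \in F_i[s_i]$, which is (ii).

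I do not anticipate a substantive obstacle here: the lemma is essentially an unpacking of definitions, and the argument uses neither Assumption 1 nor uniqueness of the collection $F_i[s_i]$ (the displayed equivalence in fact forces every admissible $F_i[s_i]$ to contain $h^*$). The one point that deserves care is the step identifying the extensive-form hypothesis ``$i$'s payoff is not independent of $h^*$'' with the strategic-form inequality $U_i(s_i,s_{-i}) \ne U_i(s_i,s_{-i}')$: this is immediate once ``$i$'s payoff'' is read throughout as the strategic-form payoff $U_i$ --- the function on which $\Pi_i[s_i]$, and hence $F_i[s_i]$, is defined --- since then the two assertions coincide after the action profile is regrouped as $(s_i,s_{-i})$.
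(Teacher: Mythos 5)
Your proposal is correct and follows essentially the same route as the paper: part (i) is the same trivial observation, and for part (ii) both arguments build $s_i$ from the witnessing action profile $(a,a_{-h^*})$ and invoke Condition (1) of factorability to force $h^*\in F_i[s_i]$, since the two $-i$ profiles differ only at $h^*$ yet give $i$ different payoffs. Your use of the displayed equivalence after the definition is just the direct form of the paper's contradiction/measurability step, so the difference is purely stylistic.
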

We can combine these two lemmas to prove the proposition. Consider
the centipede game for three players below.
\begin{center}
\includegraphics[scale=0.2]{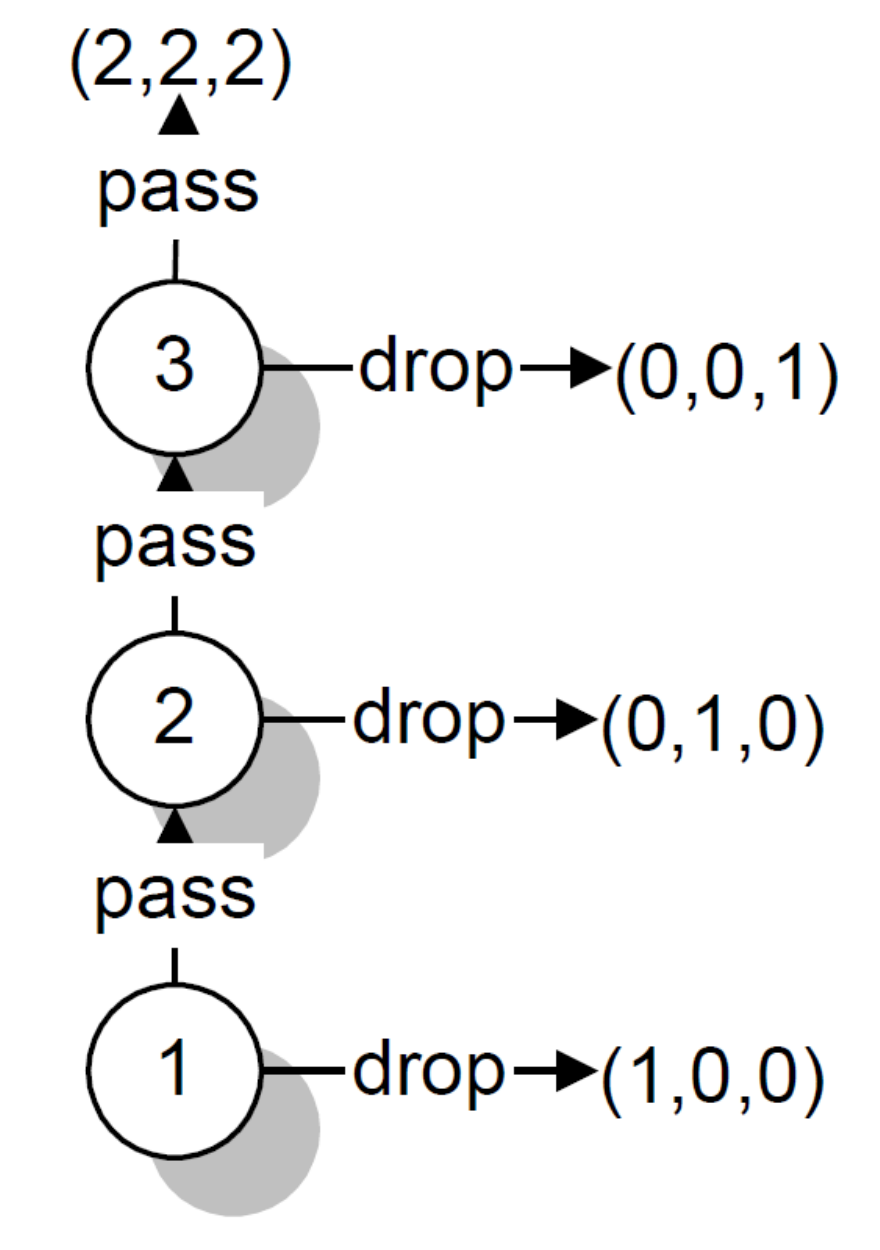}
\par\end{center}

Each player moves at most once on each path, and 1 and 2's payoffs
are not independent of the (unique) information set of player 3. But,
if both 1 and 2 choose ``drop'', then no one step deviation by either
1 or 2 can put the information set of 3 onto the path of play. Proposition
\ref{prop:one_step} thus implies the centipede game is not factorable
for either 1 or 2. Moreover, \citet{fudenberg_superstition_2006}
showed that in this game even very patient player 2's may not learn
to play a best response to player 3, so that the strategy profile
(drop, drop, pass) can persist even though it is not trembling-hand
perfect. Intuitively, if the player 1's only play pass as experiments,
then when the fraction of new players is very small, the player 2's
may not get to play often enough to make experimentation with pass
worthwhile.
\begin{center}
\includegraphics[scale=0.4]{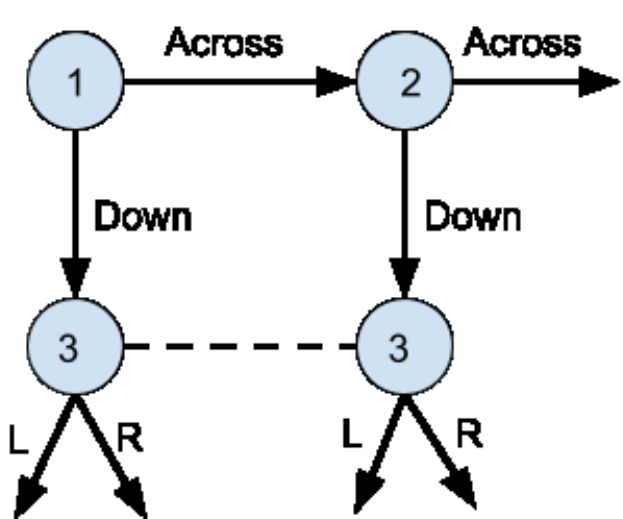}
\par\end{center}

As another example, the Selten's horse game displayed above is not
factorable for 1 or 2 if the payoffs are generic, even though the
conclusion of Proposition \ref{prop:one_step} is satisfied. The information
set of 3 must belong to both $F_{1}[\text{Down}]$ and $F_{1}[\text{Across}]$
because 3's play can affect 1's payoff even if 1 chooses Across, since
2 could choose Down. This violates the factorability requirement that
$F_{1}[\text{Down}]\cap F_{1}[\text{Across}]=\varnothing$. The same
argument shows the information set of 3 must belong to both $F_{2}[\text{Down}]$
and $F_{2}[\text{Across}]$, since when 1 chooses Down the play of
3 affects 2's payoff regardless of 2's play. So, again, $F_{2}[\text{Down}]\cap F_{2}[\text{Across}]=\varnothing$
is violated.

Condition (2) of factorability also rules out games where $i$ has
two strategies that give the same information, but one strategy always
has a worse payoff under all profiles of opponents' play. In this
case, we can think of the worse strategy as an informationally equivalent
but more costly experiment than the better strategy. Reasonable learning
policies (including rational learning) will not use such strategies,
but we do not capture this feature in the general definition of PCE
because our setup there only considers abstract strategy spaces $\mathbb{S}_{i}$
and not an extensive-form game tree.\footnote{It would be interesting to try to refine the definition of PCE to
capture this, perhaps using the ``signal function'' approach of
\citet{battigalli1997conjectural} and \citet{rubinstein1994rationalizable}.}

\subsubsection{Isomorphic Factoring}

In order to compare the learning behavior of agents $i$ and $j$,
it is not enough that the game is factorable for each of them. We
must also ensure that they face isomorphic learning problems in order
to apply Proposition \ref{prop:index}. To do this we define the notion
of\emph{ isomorphic factoring}.
\begin{defn}
When $\Gamma$ is factorable for both $i$ and $j$, the factoring
is \emph{isomorphic} for $i$ and $j$ if there exists a bijection
$\varphi:\mathbb{S}_{i}\to\mathbb{S}_{j}$ such that $F_{i}[s_{i}]\cap\mathcal{H}_{-ij}=F_{j}[\varphi(s_{i})]\cap\mathcal{H}_{-ij}$
for every $s_{i}\in\mathbb{S}_{i}$.
\end{defn}
This says the $s_{i}$-relevant information sets (for $i$) are the
same as the $\varphi(s_{i})$-relevant information sets (for $j$),
insofar as the actions of $-ij$ are concerned. For example, the restaurant
game is isomorphically factorable for the critic and the diner (under
the isomorphism $\varphi(\text{\textbf{R1})=\textbf{R2}}$, $\varphi(\text{\textbf{Z1})=\text{\textbf{Z2}}}$)
because $F_{1}[\text{\textbf{R1}}]\cap\mathcal{H}_{3}=F_{2}[\text{\textbf{R2}}]\cap\mathcal{H}_{3}=$
the singleton set containing the unique information set of the restaurant.
As another example, all signaling games (with possibly many receivers
as in Section \ref{subsubsec:signal_multiple_audiences}) are isomorphically
factorable for the different types of the sender. Similarly, the link-formation
game is isomorphically factorable for pairs (N1, N2), and (S1, S2),
but note that it is not isomorphically factorable for (N1, S1).

Factorability and isomorphic factoring let us construct $(\varphi,\mathbb{E})$
so that $i$ and $j$ face $(\varphi,\mathbb{E})$-isomorphic learning
problems. The equivalence classes $\mathbb{E}_{s_{i},\varphi(s_{i})}$
for $s_{i}\in\mathbb{S}_{i}$ are such that $(s_{i},u_{i}(s_{i},\tilde{s}_{-i}))\sim(\varphi(s_{i}),u_{j}(\varphi(s_{i}),\hat{s}_{-j})$
if and only if $\tilde{s}_{-i}|_{F_{i}[s_{i}]\cap\mathcal{H}_{-ij}}=\hat{s}_{-j}|_{F_{j}[\varphi(s_{i})]\cap\mathcal{H}_{-ij}}.$

\subsection{Rational Learning in Factorable Games}

We first consider rational agents who maximize expected discounted
payoffs. This learning rule requires two additional elements: a Bayesian
prior belief over others' play and a discount factor. We assume that
each agent $i$ starts with a \emph{regular independent prior:}
\begin{defn}
Agent $i$ has a \emph{regular independent prior} if their beliefs
$g_{i}$ on $\times_{h\in\mathcal{H}_{-i}}\Delta(A_{h})$ can be written
as the product of full-support marginal densities on $\Delta(A_{h})$
across different $h\in\mathcal{H}_{-i}$, so that $g_{i}((\alpha_{h})_{h\in\mathcal{H}_{-i}})=\prod_{h\in\mathcal{H}_{-i}}g_{i}^{h}(\alpha_{h})$
with $g_{i}^{h}(\alpha_{h})>0$ for all $\alpha_{h}\in\Delta^{\circ}(A_{h})$.
\end{defn}
Thus, the agent thinks the social distribution assigns an unknown
mixed action at each $-i$ information set, and thinks actions at
different $-i$ information sets are generated independently from
these underlying mixed actions, whether the information sets belong
to the same player or to different players. Furthermore, the agent
holds independent beliefs about the mixed actions at different information
sets.\footnote{We assume that agents do not know Nature's mixed actions, which must
be learned just as the play of other players. If agents know Nature's
move, then a regular independent prior would be a density $g_{i}$
on $\times_{h\in\mathcal{H}_{\mathbb{I}\backslash\{i\}}}\Delta(A_{h})$,
so that $g_{i}((\alpha_{h})_{\mathcal{H}_{\mathbb{I}\backslash\{i\}}})=\prod_{h\in\mathcal{H}_{\mathbb{I}\backslash\{i\}}}g_{i}^{h}(\alpha_{h})$
with $g_{i}^{h}(\alpha_{h})>0$ for all $\alpha_{h}\in\Delta^{\circ}(A_{h})$.
As \citet{fudenberg1993learning} point out, an agent who believes
two opponents are randomizing independently may nevertheless have
subjective correlation in their uncertainty about the randomizing
probabilities of these opponents. Here we study the natural special
case where the agents' prior beliefs about the opponents are independent,
i.e., a product measure.} The agent updates $g_{i}$ by applying Bayes rule to their history
$y_{i}$. If the stage game is a signaling game, for example, this
independence assumption means that the senders only update their beliefs
about the receiver's response to a given signal based on the responses
received to that signal, and that the senders' beliefs about this
response do not depend on the responses they have observed to other
signals.

In addition to the survival chance $0\le\gamma<1$ between periods,
the agent further discounts future payoffs according to their patience
$0\le\delta<1,$ so their overall effective discount factor is $0\le\delta\gamma<1$.
All agents have the same patience level.

Given a belief about the distribution of play at each opponent information
set, we can calculate the Gittins index of each strategy $s_{i}\in\mathbb{S}_{i}$.
Let $\nu_{s_{i}}\in\times_{h\in F_{i}[s_{i}]}\Delta(\Delta(A_{h}))$
be a belief over opponents' mixed actions at the information sets
in $F_{i}[s_{i}]$. The Gittins index of $s_{i}$ under belief $\nu_{s_{i}}$
is given by the maximum value of the following auxiliary optimization
problem:

\[
\sup_{\tau\ge1}\dfrac{\mathbb{E}_{\nu_{s_{i}}}\left\{ \sum_{t=1}^{\tau}(\delta\gamma)^{t-1}\cdot u_{i}(s_{i},(a_{h}(t))_{h\in F_{i}[s_{i}]})\right\} }{\mathbb{E}_{\nu_{s_{i}}}\left\{ \sum_{t=1}^{\tau}(\delta\gamma)^{t-1}\right\} },
\]
where the supremum is taken over all positive-valued stopping times
$\tau\ge1$. Here $(a_{h}(t)){}_{h\in F_{i}[s_{i}]}$ means the profile
of actions that $-i$ plays on $F_{i}[s_{i}]$ the $t$-th time that
$i$ uses $s_{i}$ --- by assumption about factorable games, only
these actions and not actions elsewhere in the game tree determine
$i$'s payoff from playing $s_{i}$, and $i$ can always infer these
actions from their own payoffs. The distribution over the infinite
sequence of profiles $(a_{h}(t))_{t=1}^{\infty}$ is given by $i$'s
belief $\nu_{s_{i}}$, that is, there is some fixed mixed action in
$\times_{h\in F_{i}[s_{i}]}\Delta(A_{h})$ that generates profiles
$(a_{h}(t))$ i.i.d. across periods $t.$ The event $\{\tau=T\}$
for $T\ge1$ corresponds to using $s_{i}$ for $T$ periods, observing
the first $T$ elements $(a_{h}(t))_{t=1}^{T}$, then stopping.

A learning policy that chooses a strategy $s_{i}$ with the highest
Gittins index after each history $y_{i}$ solves the rational agent's
dynamic optimization problem. We denote any such policy as $\text{OPT}_{i},$
suppressing its dependence on $\delta$ and $g_{i}.$

\subsection{Weighted Fictitious Play in Factorable Games}

Next we consider the weighted fictitious play heuristic, a generalization
of \citet{brown1951iterative}'s fictitious play.\footnote{This heuristic was first estimated on lab data by \citet{cheung1997individual}.
It was generalized by \citet{camerer_experience-weighted_1999} and
later analyzed by \citet*{benaim2009learning}.} Agent $i$ keeps track of \emph{counts} for actions at the opponent
information sets in the game tree, 
\[
\{N_{h}^{a_{h}}\in\mathbb{R}_{++}:h\in\mathcal{H}_{-i},a_{h}\in A_{h}\}.
\]
 The $N_{h}^{a_{h}}$ values of a newcomer agent start at some\emph{
initial counts}, $N_{h}^{a_{h}}(0)>0,$ and the counts update as $i$
learns.

After history $y_{i}$ of $i$ where $s_{i}$ has been used $T\ge0$
times, $i$'s subhistory for $s_{i}$ can be viewed as $y_{i,s_{i}}=(s_{i},s_{-i}^{(t)}(h)_{h\in F_{-i}[s_{i}]})_{t=1}^{T}$
where $s_{-i}^{(t)}(h)_{h\in F_{-i}[s_{i}]}$ is the observed $-i$'s
play on $F_{i}[s_{i}]$ the $t$-th time that $s_{i}$ was used. (This
is because there is a one-to-one relationship between $s_{-i}$'s
play on $F_{i}[s_{i}]$ and $u_{i}(s_{i},s_{-i})$.) The updated count
on $(h,a_{h})$ for $h\in F_{i}[s_{i}]$ and $a_{h}\in A_{h}$ is

\[
N_{h}^{a_{h}}(y_{i})=\sum_{t=1}^{T}\boldsymbol{1}(s_{-i}^{(t)}(h)=a_{h})\cdot\rho^{(T-t)}+\rho^{T}N_{h}^{a_{h}}(0)
\]
 for some $\rho\in(0,1].$ That is, $i$ calculates a weighted sum
for the total number of times that $-i$ have played $a_{h}$ in the
history $y_{i}$, where past observations on $F_{i}[s_{i}]$ are discounted
at a rate $\rho$ between successive uses of the strategy $s_{i}$.
All agents share the same weight factor $\rho.$

Following history $y_{i},$ $i$ assigns an index to $s_{i}$ equal
to its expected payoff when opponents play the mixed action $\alpha_{h}(a_{h};y_{i})=\frac{N_{h}^{a_{h}}(y_{i})}{\sum_{a_{h}^{'}\in A_{h}}N_{h}^{a_{h}^{'}}(y_{i})}$
on information sets $h\in F_{i}[s_{i}].$ Write $\text{WFP}_{i}$
for a learning policy that chooses a strategy with the highest weighted
fictitious play index after every history (suppressing its dependence
on $\rho$ and the initial counts $\{N_{h}^{a_{h}}(0)\})$.

When $\rho=1$, the counts are updated according to the unweighted
fictitious play, and the limit of $\rho\to0$ corresponds to myopically
best replying to the observed play when each strategy was most recently
used. The special case of the Gittins index where the prior $g_{i}$
marginalized to each $\Delta(A_{h})$ is a Dirichlet distribution
and $\delta=0$ is equivalent to the special case of unweighted fictitious
play (i.e., $\rho=1)$ with some initial counts that depend on the
Dirichlet priors' parameters. In general $\text{OPT}_{i}$ differs
from $\text{WFP}_{i}$ outside of these corner cases.

\subsection{Player-Compatibility Implies Index-Compatibility of $\text{OPT}$
and $\text{WFP}$ under Isomorphic Factoring}

The main result of this paper, Theorem \ref{thm:crossplayer_tremble_foundation},
shows that if $s_{i}^{*}\succsim s_{j}^{*}$ in a game isomorphically
factorable for $i$ and $j$ with $\varphi(s_{i}^{*})=s_{j}^{*}$,
then $i$ uses $s_{i}^{*}$ more frequently than $j$ uses $s_{i}^{*}$
both under rational experimentation and under weighted fictitious
play. This comparison holds under the hypothesis that $i$ and $j$
start their learning processes with the same ``initial conditions.''
For $\text{OPT}$, this means $i$'s prior $g_{i}$ marginalized to
the $s_{i}$-relevant $-ij$ information sets equals to $j$'s prior
$g_{j}$ marginalized to the $\varphi(s_{i})$-relevant $-ij$ information
sets for every $s_{i}\in\mathbb{S}_{i}$. For $\text{WFP}$, this
means $i$ and $j$ start with the same initial counts about $-ij$'s
actions.
\begin{thm}
\label{thm:crossplayer_tremble_foundation} Suppose $s_{i}^{*}\succsim s_{j}^{*}$
and the game is isomorphically factorable for $i$ and $j$ with $\varphi(s_{i}^{*})=s_{j}^{*}$.
For any common survival chance $0\le\gamma<1$ and any social distribution
$\sigma$, we have $\phi_{i}(s_{i}^{*};r_{i},\sigma_{-i})\ge\phi_{j}(s_{j}^{*};r_{j},\sigma_{-j})$
under either of the following conditions:
\end{thm}
\begin{itemize}
\item $r_{i}=\text{OPT}_{i},$ $r_{j}=\text{OPT}_{j}$, and priors $g_{i},g_{j}$
are regular and equivalent\footnote{The theorem easily generalizes to the case where $i$ starts with
one of $L\ge2$ possible priors $g_{i}^{(1)},...,g_{i}^{(L)}$ with
probabilities $p_{1},...,p_{L}$ and $j$ starts with priors $g_{j}^{(1)},...,g_{j}^{(L)}$
with the same probabilities, and each $g_{i}^{(l)},g_{j}^{(l)}$ is
a pair of equivalent regular priors for $1\le l\le L$.}: that is, they satisfy $g_{i}|_{\Delta(A_{h}):h\in F_{i}[s_{i}]\cap\mathcal{H}_{-ij}}=g_{j}|_{\Delta(A_{h}):h\in F_{j}[\varphi(s_{i})]\cap\mathcal{H}_{-ij}}$
for every $s_{i}\in\mathbb{S}_{i}$.
\item $r_{i}=\text{WFP}_{i},$ $r_{j}=\text{WFP}_{j},$ and $i$ and $j$
have the same initial counts $N_{h}^{a_{h}}(0)$ for every $s_{i}\in\mathbb{S}_{i},$
$h\in F_{i}[s_{i}]\cap\mathcal{H}_{-ij}$, and $a_{h}\in A_{h}.$
\end{itemize}
The proof works by showing that if $s_{i}^{*}\succsim s_{j}^{*},$
then both rational learning and weighted fictitious play are $(\varphi,\mathbb{E})$-index
compatible. This then lets us apply Proposition \ref{prop:index}'s
general conclusion about index compatible learning policies.

\section{\label{sec:Replication-Invariance-of-PCE}Replication Invariance
of PCE}

Fix a\emph{ base game} where each $i$ has a finite strategy set $\mathbb{S}_{i}$
and utility function $u_{i}:\mathbb{S}\to\mathbb{R}$.
\begin{defn}
An\emph{ extended game with duplicates} is a game with the same players
as the base game, where $\bar{\mathbb{S}}_{i}$ the set of strategies
of $i$ is a finite subset of $\mathbb{S}_{i}\times\mathbb{N}$ such
that for all $(s_{i},n_{i})\in\bar{\mathbb{S}}_{i}$ and $(s_{j},n_{j})_{j\ne i}\in\bar{\mathbb{S}}_{-i}$,
the payoff in the new game is $\bar{u}_{i}((s_{i},n_{i}),(s_{j},n_{j})_{j\ne i})=u_{i}(s_{i},s_{j})$.
For every $s_{i}\in\mathbb{S}_{i}$, there exists some $n_{i}\in\mathbb{N}$
so that $(s_{i},n_{i})\in\bar{\mathbb{S}}_{i}$.
\end{defn}
The interpretation is that $i$ has some copies of every strategy
they had in the base game, and could have different numbers of copies
of different strategies, where duplicate copies of the same strategy
have the same payoff consequences. For the learning foundation we
will now assume that the extended game is factorable, which implies
that each duplicate provides the same information. As an example,
suppose that in the Restaurant Game the critic can arrive at the restaurant
by taking the red bus or the blue bus, and the color of the bus is
not observed by other players, does not change anyone's payoffs, and
does not change what the diner observes. We can then replace $R_{c}$
with two actions $R_{c}^{red},R_{c}^{blue}$ at every node in the
diner's information set, letting $R_{c}^{red}$ and $R_{c}^{blue}$
both have the same payoff consequences as $R_{c}$ in the original
game. This modified game is an extended game with duplicates for the
original game.

Subsection \ref{subsec:PCE_duplicates} defines player-compatible
trembles and PCE in extended games with duplicates. Using the compatibility
relation $\succsim$ from the base game, a tremble profile in the
extended game with duplicates is player compatible if the \emph{sum}
of tremble probabilities assigned to all copies of $s_{i}^{*}$ exceeds
the sum assigned to all copies of $s_{j}^{*},$ whenever $s_{i}^{*}\succsim s_{j}^{*}$.
PCE is then defined using this restriction on trembles. We show that
the set of PCE in the base game coincides with the set of PCE in the
extended game with duplicates.

This definition of player-compatible trembles in extended games with
duplicates fits with our interpretation of trembles as experimentation
frequencies and an analysis of how learning dynamics in the extended
game compare with those in the base game. The idea is that if all
copies of a strategy $s_{i}$ give $i$ the same information about
others' play, then $i$ should be exactly indifferent between all
such copies after all histories in the learning process. Holding fixed
initial beliefs and the social distribution, $i$'s weighted lifetime
average play of $s_{i}$ in the base game should then equal the sum
of their weighted lifetime average plays of all copies of $s_{i}$
in the extended game with duplicates. Thus, any comparisons that hold
between the ``tremble'' probabilities of $i$ onto $s_{i}^{*}$
and $j$ onto $s_{j}^{*}$ in the base game must also hold between
the ``tremble'' probabilities of $i$ onto the copies of $s_{i}^{*}$
and $j$ onto the copies of $s_{j}^{*}$ in the extended game. We
formalize this intuition in binary participation games in Subsection
\ref{subsec:foundation_duplicates} for rational learning and weighted
fictitious play.

\subsection{\label{subsec:PCE_duplicates}PCE in Extended Games with Duplicates}

A \emph{tremble profile} \emph{of the extended game} $\bar{\boldsymbol{\epsilon}}$
assigns a positive number $\bar{\boldsymbol{\epsilon}}(s_{i},n_{i})>0$
to every player $i$ and every pure strategy $(s_{i},n_{i})\in\bar{\mathbb{S}}_{i}$.
We define \emph{$\bar{\boldsymbol{\epsilon}}$-strategies} of $i$
and $\bar{\boldsymbol{\epsilon}}$\emph{-equilibrium} of the extended
game in the usual way, relative to the strategy sets $\bar{\mathbb{S}}_{i}$.
\begin{defn}
Tremble profile $\boldsymbol{\bar{\epsilon}}$ is \emph{player-compatible
in the extended game} if $\sum_{n_{i}}\boldsymbol{\bar{\epsilon}}(s_{i}^{*},n_{i})\ge\sum_{n_{j}}\bar{\boldsymbol{\epsilon}}(s_{j}^{*},n_{j})$
for all $i,j,s_{i}^{*},s_{j}^{*}$ such that $s_{i}^{*}\succsim s_{j}^{*}$.
An $\bar{\boldsymbol{\epsilon}}$-equilibrium where $\bar{\boldsymbol{\epsilon}}$
is player-compatible is called a \emph{player-compatible $\bar{\boldsymbol{\epsilon}}$-equilibrium
}(or\emph{ $\bar{\boldsymbol{\epsilon}}$-PCE}).
\end{defn}
We now relate $\bar{\boldsymbol{\epsilon}}$-equilibria in the extended
game to $\boldsymbol{\epsilon}$-equilibria in the base game. Recall
the following constrained optimality condition that applies to both
the extended game and the base game:
\begin{fact}
A feasible mixed strategy of $i$ is \textbf{not} a constrained best
response to a $-i$ profile if and only if it assigns more than the
required weight to a non-optimal response.
\end{fact}
We associate with a strategy profile $\bar{\sigma}\in\times_{i\in\mathbb{I}}\Delta(\bar{\mathbb{S}}_{i})$
in the extended game a \emph{consolidated strategy profile} $\mathfrak{\mathscr{C}}(\bar{\sigma})\in\times_{i\in\mathbb{I}}\Delta(\mathbb{S}_{i})$
in the base game, given by adding up the probabilities assigned to
all copies of each base-game strategy. More precisely, $\mathfrak{\mathscr{C}}(\bar{\sigma})_{i}(s_{i}):=\sum_{n_{i}}\bar{\sigma}_{i}(s_{i},n_{i})$.
Similarly, $\mathfrak{\mathscr{C}}(\boldsymbol{\bar{\epsilon}})$
is the consolidated tremble profile, given by $\mathfrak{\mathscr{C}}(\boldsymbol{\bar{\epsilon}})(s_{i}):=\sum_{n_{i}}\bar{\boldsymbol{\epsilon}}(s_{i},n_{i})$.

Conversely, given a strategy profile $\sigma\in\times_{i\in\mathbb{I}}\Delta(\mathbb{S}_{i})$
in the base game, the extended strategy profile $\mathfrak{\mathscr{E}}(\sigma)\in\times_{i\in\mathbb{I}}\Delta(\bar{\mathbb{S}}_{i})$
is defined by $\mathfrak{\mathscr{E}}(\sigma)_{i}(s_{i},n_{i}):=\sigma_{i}(s_{i})/N(s_{i})$
for each $i,(s_{i},n_{i})\in\bar{\mathbb{S}}_{i}$, where $N(s_{i})$
is the number of copies of $s_{i}$ that $\bar{\mathbb{S}}_{i}$ contains.
Similarly, $\mathfrak{\mathscr{E}}(\boldsymbol{\epsilon})$ is the
extended tremble profile, given by $\mathfrak{\mathscr{E}}(\boldsymbol{\epsilon})(s_{i},n_{i}):=\boldsymbol{\epsilon}(s_{i})/N(s_{i})$.
\begin{lem}
\label{lem:epsilon_eqm_translate} If $\bar{\sigma}$ is an $\bar{\boldsymbol{\epsilon}}$-equilibrium
in the extended game, then $\mathfrak{\mathscr{C}}(\bar{\sigma})$
is an $\mathfrak{\mathscr{C}}(\boldsymbol{\bar{\epsilon}})$-equilibrium
in the base game. If $\sigma$ is an $\boldsymbol{\epsilon}$-equilibrium
in the base game, then $\mathfrak{\mathscr{E}}(\sigma)$ is an $\mathfrak{\mathscr{E}}(\boldsymbol{\epsilon})$-equilibrium
in the extended game.
\end{lem}
The proof of results in this section can be found in the Online Appendix.

PCE is defined as usual in the extended game.
\begin{defn}
A strategy profile $\bar{\sigma}^{*}$ is a\emph{ player-compatible
equilibrium (PCE) in the extended game} if there exists a sequence
of player-compatible tremble profiles $\boldsymbol{\bar{\epsilon}}^{(t)}\to\boldsymbol{0}$
and an associated sequence of strategy profiles $\bar{\sigma}^{(t)},$
where each $\bar{\sigma}^{(t)}$ is an $\boldsymbol{\bar{\epsilon}}^{(t)}$-PCE,
such that $\bar{\sigma}^{(t)}\to\bar{\sigma}^{*}$.
\end{defn}
These PCE correspond exactly to PCE of the base game.
\begin{prop}
\label{prop:PCE_translate}If $\bar{\sigma}^{*}$ is a PCE in the
extended game, then $\mathfrak{\mathscr{C}}(\bar{\sigma}^{*})$ is
a PCE in the base game. If $\sigma^{*}$ is a PCE in the base game,
then $\mathfrak{\mathscr{E}}(\sigma^{*})$ is a PCE in the extended
game.
\end{prop}
In fact, starting from a PCE $\sigma^{*}$ of the base game, we can
construct more PCE of the extended game than $\mathfrak{\mathscr{E}}(\sigma^{*})$
by shifting around the probabilities assigned to different copies
of the same base-game strategy, but all these profiles essentially
correspond to the same outcome.

\subsection{\label{subsec:foundation_duplicates}Learning and Trembles in Binary
Participation Games with Duplicates}

We give the simplest illustration of how learning dynamics in extended
games with duplicates relate to those in the base game, using binary
participation games. These results can also be developed for other
factorable games, but at the cost of more complicated notation.

Consider a binary participation game for $i$ (Definition \ref{def:binary_participation})
as the base game and create an extended game with duplicates by adding
an extra copy of the \textbf{In} strategy for $i$ to the game tree,
called \textbf{In-d}. We show that when $r_{i}$ is an optimal learning
policy for $i$ or the weighted fictitious play heuristic, the discounted
lifetime play $\phi_{i}(\text{\textbf{In}};r_{i},\sigma_{-i})$ for
the base game is equal to the sum $\phi_{i}(\text{\textbf{In}};r_{i},\sigma_{-i})+\phi_{i}(\text{\textbf{In-d}};r_{i},\sigma_{-i})$
in the new game, for the same social distribution $\sigma.$

We modify the original game tree $\Gamma$ and information sets $\mathcal{H}$
to arrive at a new game tree $\bar{\Gamma}$ with information sets
$\bar{\mathcal{H}}.$ The basic idea is that \textbf{In-d} gives the
same payoffs and information to $i,$ and $-i$ cannot tell which
one $i$ chose.

By the definition of a binary participation game for $i$, let $h_{i}$
be $i$'s unique information set in $\mathcal{H}.$ Enumerate the
vertices in $h_{i}$ as $h_{i}=\{v_{1},...,v_{n}\}$. Playing $\text{\textbf{In}}$
at vertex $v_{k}$ in the original tree leads to some subtree $\Gamma^{(k)}\subseteq\Gamma$.
Start with $\bar{\Gamma}=\Gamma$ and add a new move, \textbf{In-d},
to every $v_{k}\in h_{i}.$ Append a new subtree $\hat{\Gamma}^{(k)}$
to $\bar{\Gamma}$ for every $v_{k}\in h_{i}$, such that $\hat{\Gamma}^{(k)}$
is a copy of $\Gamma^{(k)}$ (including payoffs at terminal vertices)
and playing \textbf{In-d} at $v_{k}$ leads to $\hat{\Gamma}^{(k)}$.
Now we give a procedure to construct the information sets $\bar{\mathcal{H}}$
to capture the idea that \textbf{In }and\textbf{ In-d }are indistinguishable
to others. Start with $\bar{\mathcal{H}}=\mathcal{H}$ and let $V^{(k)}$
be the set of vertices in $\Gamma^{(k)}.$ For every $1\le k\le n$
and $v\in V^{(k)},$ find the information set $h\in\bar{\mathcal{H}}$
with $v\in h,$ then put $h:=h\cup\{\tilde{v}\}$, where $\tilde{v}$
is the copy of $v$ in $\hat{\Gamma}^{(k)}.$ That is, each vertex
reachable after $i$ chooses\textbf{ In-d} is indistinguishable to
others from its ``twin'' reachable when $i$ chooses \textbf{In}.

As discussed before, the Restaurant Game is a binary participation
game for the critic and the diner, with going to the restaurant as
\textbf{In} and ordering pizza as \textbf{Out}. We illustrate adding
a duplicate copy of $R_{c}$ for the critic to the game, labeled $R_{c}-d$.
The critic's unique information set contains two vertices, and the
new game tree adds two new subtrees to the original game, highlighted
in red.
\begin{center}
\includegraphics[scale=0.7]{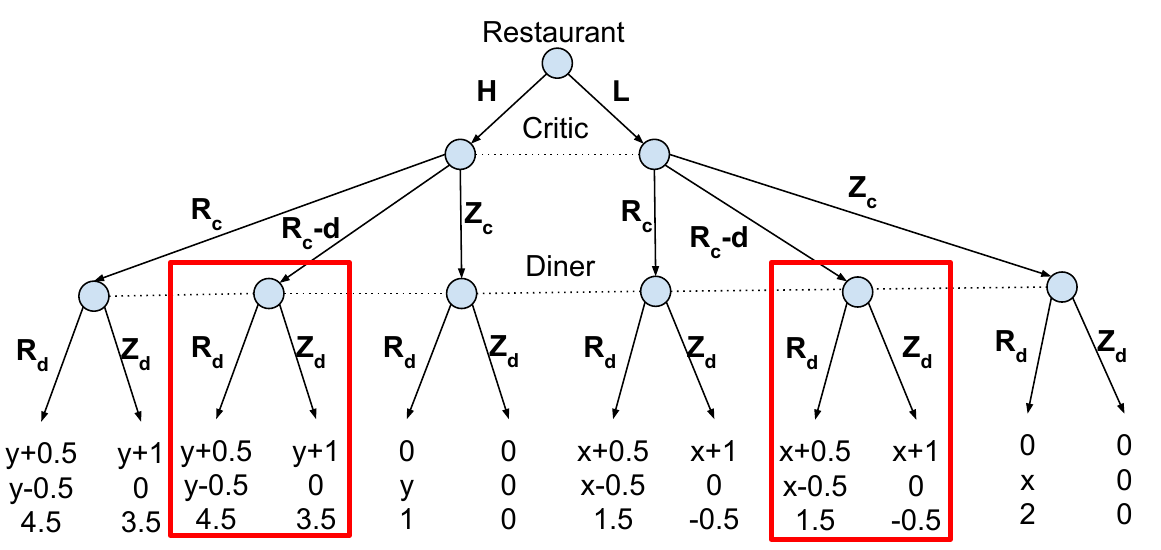}
\par\end{center}

The set of histories in the learning framework for $i$ with the extended
stage game is $\tilde{Y}_{i}=\cup_{t\ge0}(\{\text{\textbf{In}},\text{\textbf{In-d}},\text{\textbf{Out}}\}\times\mathbb{R})^{t}.$
We now define a notion of equivalence between a \emph{stochastic}
learning policy in the extended game $\tilde{r}_{i}:\tilde{Y}_{i}\to\Delta(\{\text{\textbf{In}},\text{\textbf{In-d}},\text{\textbf{Out}}\}\}$
and a (deterministic) learning policy in the original game, $r_{i}:Y_{i}\to\{\text{\textbf{In}},\text{\textbf{Out}}\}$.
Basically, $\tilde{r}_{i}$ behaves just like $r_{i}$ except it can
randomize between $\text{\textbf{In}}$ and $\text{\textbf{In-d}}$.
\begin{defn}
Let $\zeta:\tilde{Y}_{i}\to Y_{i}$ be such that for $\tilde{y}_{i}\in\tilde{Y}_{i}$,
$\zeta(\tilde{y}_{i})\in Y_{i}$ replaces every instance of instance
of $\textbf{In-d}$ with $\textbf{In}$. Learning policies $\tilde{r}_{i}:\tilde{Y}_{i}\to\Delta(\{\text{\textbf{In}},\text{\textbf{In-d}},\text{\textbf{Out}}\}\}$
and $r_{i}:Y_{i}\to\{\text{\textbf{In}},\text{\textbf{Out}}\}$ are
\emph{equivalent up to duplicates }if for every $\tilde{y}_{i}\in\tilde{Y}_{i},$
if $r_{i}(\zeta(\tilde{y}_{i}))=\text{\textbf{Out}}$, then also $\tilde{r}_{i}(\tilde{y}_{i})(\text{\textbf{Out}})=1.$
If $r_{i}(\zeta(\tilde{y}_{i}))=\text{\textbf{In}}$, then $\tilde{r}_{i}(\tilde{y}_{i})(\text{\textbf{In}})+\tilde{r}_{i}(\tilde{y}_{i})(\text{\textbf{In-d}})=1$.
\end{defn}
The main result of this section shows that rational learning and weighted
fictitious play lead to learning policies that are equivalent up to
duplicates in the base game and the extended game. Furthermore, any
pair of such equivalent policies in the two settings lead to the same
lifetime discounted frequencies of playing \textbf{In} for the original
game as playing \textbf{In} and \textbf{In-d} for the extended game
against the same social distributions of $-i.$

Technically, extensive-form strategies in $(\Gamma,\mathcal{H})$
and $(\bar{\Gamma},\bar{\mathcal{H}})$ are defined over two different
domains. To make sense of $i$ facing the ``same'' social distribution
of $-i$'s play in the two settings, let $\psi:\bar{\mathcal{H}}\to\mathcal{H}$
be the natural isomorphism between the two collections of information
sets. Each information set $\tilde{h}$ in the modified game is either
equal to an information set $h\in\mathcal{H},$ or it is an old information
set with some extra vertices added, that is there is some (unique)
$h$ with $\tilde{h}\supsetneq h$. Let $\psi(\tilde{h}):=h.$ Two
strategy profiles $\sigma,\tilde{\sigma}$ for $(\Gamma,\mathcal{H})$
and $(\bar{\Gamma},\bar{\mathcal{H}})$ are \emph{$-i$ equivalent}
if $\tilde{\sigma}(\tilde{h})=\sigma(\psi(\tilde{h}))$ for all $\tilde{h}\in\tilde{\mathcal{H}}_{-i}.$
\begin{prop}
\label{prop:learning_with_duplicates} Suppose stochastic learning
policy $\tilde{r}_{i}$ in the extended game is equivalent up to duplicates
with the learning policy $r_{i}$ in the base game.
\begin{itemize}
\item For a fixed patience parameter $0\le\delta<1$ and regular prior $g_{i}$
over others' play,\footnote{The prior is over $\times_{h\in\mathcal{H}_{-i}}\Delta(A_{h})$ in
the original stage game and over $\times_{\tilde{h}\in\tilde{\mathcal{H}}_{-i}}\Delta(A_{\tilde{h}})$
in the extended game, but we identify $\Delta(A_{\tilde{h}})$ with
$\Delta(A_{\psi(\tilde{h})})$ for each $\tilde{h}\in\tilde{H}_{-i}.$
The same identification applies for the initial counts in the original
and extended games.} $r_{i}$ is $\text{OPT}_{i}$ if and only if $\tilde{r}_{i}$ is
an optimal learning policy with the extended game.
\item For a fixed decay parameter $0\le\rho<1$ and initial counts $N_{h}^{a_{h}}(0),$
$r_{i}$ is $\text{WFP}_{i}$ if and only if after every $\tilde{y}_{i}\in\tilde{Y}_{i},$
$\tilde{r}_{i}(\tilde{y}_{i})$ is supported on strategies that maximize
payoffs under the weighted fictitious play conjecture of $-i$'s play.
\item For $-i$ equivalent social distributions $\sigma,\tilde{\sigma}$
for the base game and extended games, $\phi_{i}(\text{\textbf{In}};r_{i},\sigma_{-i})=\phi_{i}(\text{\textbf{In}};\tilde{r}_{i},\tilde{\sigma}_{-i})+\phi_{i}(\text{\textbf{In-d}};\tilde{r}_{i},\tilde{\sigma}_{-i})$.
\end{itemize}
\end{prop}

\section{Concluding Discussion}

PCE makes two key contributions. First, it generates new and sensible
restrictions on equilibrium play by imposing cross-player restrictions
on the relative probabilities that different players assign to certain
strategies --- namely, those strategy pairs $s_{i},s_{j}$ ranked
by the player-compatibility relation $s_{i}\succsim s_{j}$. As we
have shown through examples, these cross-player restrictions distinguish
PCE from other refinement concepts and allows us to make comparative
statics predictions in some games where other equilibrium refinements
do not.

Second, PCE shows how restricted ``trembles'' can capture some of
the implications of non-equilibrium learning. PCE's cross-player restrictions
arise endogenously for a general class of index learning policies,
which includes both the standard model of Bayesian agents maximizing
their expected discounted lifetime utility, and computationally tractable
heuristics like weighted fictitious play. We conjecture that the result
that $i$ is more likely to experiment with $s_{i}$ than $j$ is
with $s_{j}$ when $s_{i}\succsim s_{j}$ applies in other natural
models of learning or dynamic adjustment, such as those considered
by \citet{francetich2020choosing1,francetich2020choosing2}, and that
it may be possible to provide foundations for PCE in other and perhaps
larger classes of games.

The strength of the PCE refinement depends on the completeness of
the compatibility order $\succsim$, since $\boldsymbol{\epsilon}$-PCE
imposes restrictions on $i$ and $j$'s play only when the relation
$s_{i}\succsim s_{j}$ holds. Our player compatibility definition
supposes that player $i$ thinks all mixed strategies of other players
are possible, as it considers the set of all totally mixed correlated
strategies $\sigma_{-i}\in\Delta^{\circ}(\mathbb{S}_{-i}).$ If the
players have some prior knowledge about their opponents' utility functions,
player $i$ might deduce \emph{a priori} that the other players will
only play strategies in some subset of $\Delta^{\circ}(\mathbb{S}_{-i})$.
As we show in \citet{FudenbergHe2017TCE}, in signaling games imposing
this kind of prior knowledge leads to a more complete version of the
compatibility order. It may similarly lead to a more refined version
of PCE.

PCE is defined for general strategic forms. We have only provided
learning foundations for player-compatible trembles in factorable
games. Moreover, even in factorable games, PCE imposes some extra
restrictions that we do not microfound, but we view this as a first
step in connecting together tremble-based refinement concepts with
learning-in-games. As we have shown through the link-formation game
and other examples, PCE is a convenient reduced form that generates
novel comparative statics predictions in various applications without
needing the analyst to solve the dynamic learning problem anew in
each of them.

\bibliographystyle{ecta}
\bibliography{Gittins_eqm}

\newpage
\begin{center}
{\LARGE{}Appendix}{\LARGE\par}
\par\end{center}

\section{\label{sec:Omitted-Proofs-From-Main}Proofs of Results Stated in
the Main Text}

\subsection{Proof of Proposition \ref{prop:PCE_compatible}}

We first state an auxiliary lemma.
\begin{lem}
\emph{\label{lem:epsilon_PCE_compatible}If $\sigma^{\circ}$ is an
$\boldsymbol{\epsilon}$-PCE and $s_{i}^{*}\succsim s_{j}^{*}$, then
\[
\sigma_{i}^{\circ}(s_{i}^{*})\ge\min\left[\sigma_{j}^{\circ}(s_{j}^{*}),1-\sum_{s_{i}^{'}\ne s_{i}^{*}}\boldsymbol{\epsilon}(s_{i}^{'})\right].
\]
}
\end{lem}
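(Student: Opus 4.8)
The plan is to split into two cases according to whether $j$ places the minimum possible weight on $s_j^*$ in the $\boldsymbol{\epsilon}$-PCE $\sigma^{\circ}$. First I would record an elementary observation about constrained best responses: since $\Pi_j^{\boldsymbol{\epsilon}}$ is the simplex with the lower bounds $\boldsymbol{\epsilon}(\cdot\mid j)$ imposed and $\sigma_j\mapsto U_j(\sigma_j,\sigma_{-j}^{\circ})$ is linear, writing $\sigma_j=\boldsymbol{\epsilon}(\cdot\mid j)+\mu$ with $\mu\ge 0$ and $\sum_{s_j}\mu(s_j)=1-\sum_{s_j}\boldsymbol{\epsilon}(s_j\mid j)$ shows that any maximizer over $\Pi_j^{\boldsymbol{\epsilon}}$ loads weight strictly above $\boldsymbol{\epsilon}(s_j\mid j)$ only on pure strategies $s_j$ that are best responses to $\sigma_{-j}^{\circ}$; the same holds with $i$ in place of $j$. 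I would also note that since each $\sigma_k^{\circ}$ lies in $\Pi_k^{\boldsymbol{\epsilon}}$ it is totally mixed, so the product profile $\sigma^{\circ}$ belongs to $\Delta^{\circ}(\mathbb{S})$ and is an admissible test profile in the definition of $\succsim$.

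In Case 1, $\sigma_j^{\circ}(s_j^*)=\boldsymbol{\epsilon}(s_j^*\mid j)$. Then player-compatibility of $\boldsymbol{\epsilon}$ together with $s_i^*\succsim s_j^*$ gives $\sigma_i^{\circ}(s_i^*)\ge\boldsymbol{\epsilon}(s_i^*\mid i)\ge\boldsymbol{\epsilon}(s_j^*\mid j)=\sigma_j^{\circ}(s_j^*)$, which is at least the minimum on the right-hand side of the claim. In Case 2, $\sigma_j^{\circ}(s_j^*)>\boldsymbol{\epsilon}(s_j^*\mid j)$, so by the elementary observation $s_j^*$ is a pure best response for $j$ against $\sigma_{-j}^{\circ}$; that is, $\sigma^{\circ}$ satisfies the hypothesis in the definition of $s_i^*\succsim s_j^*$. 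Taking $\tilde\sigma=\sigma^{\circ}$ in that definition — it trivially has the same marginal on $-ij$ as itself — yields $U_i(s_i^*,\sigma_{-i}^{\circ})>\max_{s_i''\ne s_i^*}U_i(s_i'',\sigma_{-i}^{\circ})$, so $s_i^*$ is the \emph{unique} pure best response for $i$ against $\sigma_{-i}^{\circ}$. Applying the elementary observation to $i$, $\sigma_i^{\circ}$ must place exactly $\boldsymbol{\epsilon}(s_i'\mid i)$ on every $s_i'\ne s_i^*$, hence $\sigma_i^{\circ}(s_i^*)=1-\sum_{s_i'\ne s_i^*}\boldsymbol{\epsilon}(s_i'\mid i)$, again at least the minimum in the claim. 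Combining the two cases finishes the proof.

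The only genuinely delicate step is the conversion in Case 2: one has to recognize that the $\boldsymbol{\epsilon}$-equilibrium profile $\sigma^{\circ}$, viewed as a (product, hence correlated) totally mixed profile, may legitimately be plugged in as \emph{both} $\sigma$ and $\tilde\sigma$ in the definition of compatibility, and that the strict inequality built into that definition is exactly what forces $i$'s best response to be unique and therefore pins $\sigma_i^{\circ}(s_i^*)$ at its maximal feasible value $1-\sum_{s_i'\ne s_i^*}\boldsymbol{\epsilon}(s_i'\mid i)$. Everything else is routine bookkeeping with the constrained-optimization structure of $\boldsymbol{\epsilon}$-equilibria.
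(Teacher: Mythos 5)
Your proof is correct and follows essentially the same argument as the paper: split on whether $\sigma_j^{\circ}(s_j^*)$ equals its tremble lower bound, use player-compatibility of $\boldsymbol{\epsilon}$ in the first case, and in the second case use that excess weight on $s_j^*$ makes it a weak best response, invoke $s_i^*\succsim s_j^*$ with $\sigma=\tilde\sigma=\sigma^{\circ}$ to get strict optimality of $s_i^*$, and conclude that $i$ puts maximal feasible weight on it. Your explicit "elementary observation" about constrained maximizers just spells out what the paper leaves implicit.
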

\begin{proof}
Suppose $\boldsymbol{\epsilon}$ is player-compatible and let $\boldsymbol{\epsilon}$-equilibrium
$\sigma^{\circ}$ be given. For\emph{ $s_{i}^{*}\succsim s_{j}^{*}$},
suppose $\sigma_{j}^{\circ}(s_{j}^{*})=\boldsymbol{\epsilon}(s_{j}^{*})$.
Then $\sigma_{i}^{\circ}(s_{i}^{*})\ge\boldsymbol{\epsilon}(s_{i}^{*})\ge\boldsymbol{\epsilon}(s_{j}^{*})=\sigma_{j}^{\circ}(s_{j}^{*})$,
where the second inequality comes from $\boldsymbol{\epsilon}$ being
player-compatible. On the other hand, suppose $\sigma_{j}^{\circ}(s_{j}^{*})>\boldsymbol{\epsilon}(s_{j}^{*})$.
Since $\sigma^{\circ}$ is an $\boldsymbol{\epsilon}$-equilibrium,
the fact that $j$ puts more than the minimum required weight on $s_{j}^{*}$
implies $s_{j}^{*}$ is at least a weak best response for $j$ against
$\sigma^{\circ}$, with $\sigma^{\circ}$ totally mixed due to the
trembles.The definition of $s_{i}^{*}\succsim s_{j}^{*}$ then implies
that $s_{i}^{*}$ must be a strict best response for $i$ against
$\sigma^{\circ}$ as well. In the $\boldsymbol{\epsilon}$-equilibrium,
$i$ must assign as much weight to $s_{i}^{*}$ as possible, so that
$\sigma_{i}^{\circ}(s_{i}^{*})=1-\sum_{s_{i}^{'}\ne s_{i}^{*}}\boldsymbol{\epsilon}(s_{i}^{'})$.
Combining these two cases establishes the desired result.
\end{proof}
\textbf{Proposition} \textbf{\ref{prop:PCE_compatible}}: \emph{For
any PCE $\sigma^{*}$, player $k$, and strategy $\bar{s}_{k}$ such
that $\sigma_{k}^{*}(\bar{s}_{k})>0,$ there exists a sequence of
totally mixed strategy profiles $\sigma_{-k}^{(t)}\to\sigma_{-k}^{*}$
such that}

\emph{(i) for every pair $i,j\ne k$ with $s_{i}^{*}\succsim s_{j}^{*}$,
\[
\liminf_{t\to\infty}\frac{\sigma_{i}^{(t)}(s_{i}^{*})}{\sigma_{j}^{(t)}(s_{j}^{*})}\ge1;
\]
and (ii) $\bar{s}_{k}$ is a best response for $k$ against every
$\sigma_{-k}^{(t)}$ .}
\begin{proof}
By Lemma \ref{lem:epsilon_PCE_compatible}, for every $\boldsymbol{\epsilon}^{(t)}$-PCE
we get 
\begin{align*}
\frac{\sigma_{i}^{(t)}(s_{i}^{*})}{\sigma_{j}^{(t)}(s_{j}^{*})} & \ge\min\left[\frac{\sigma_{j}^{(t)}(s_{j}^{*})}{\sigma_{j}^{(t)}(s_{j}^{*})},\frac{1-\sum_{s_{i}^{'}\ne s_{i}^{*}}\boldsymbol{\epsilon}^{(t)}(s_{i}^{'}|i)}{\sigma_{j}^{(t)}(s_{j}^{*})}\right]\\
 & =\min\left[1,\frac{1-\sum_{s_{i}^{'}\ne s_{i}^{*}}\boldsymbol{\epsilon}^{(t)}(s_{i}^{'}|i)}{\sigma_{j}^{(t)}(s_{j}^{*})}\right]\ge1-\sum_{s_{i}^{'}\ne s_{i}^{*}}\boldsymbol{\epsilon}^{(t)}(s_{i}^{'}|i).
\end{align*}

This says 
\[
\inf_{t\ge T}\frac{\sigma_{i}^{(t)}(s_{i}^{*})}{\sigma_{j}^{(t)}(s_{j}^{*})}\ge1-\sup_{t\ge T}\sum_{s_{i}^{'}\ne s_{i}^{*}}\boldsymbol{\epsilon}^{(t)}(s_{i}^{'}|i).
\]
For any sequence of trembles such that $\boldsymbol{\epsilon}^{(t)}\to\boldsymbol{0},$
$\lim_{T\to\infty}\sup_{t\ge T}\sum_{s_{i}^{'}\ne s_{i}^{*}}\boldsymbol{\epsilon}^{(t)}(s_{i}^{'}|i)=0,$
so
\[
\liminf_{t\to\infty}\frac{\sigma_{i}^{(t)}(s_{i}^{*})}{\sigma_{j}^{(t)}(s_{j}^{*})}=\lim_{T\to\infty}\left\{ \inf_{t\ge T}\frac{\sigma_{i}^{(t)}(s_{i}^{*})}{\sigma_{j}^{(t)}(s_{j}^{*})}\right\} \ge1.
\]

This shows that if we fix a PCE $\sigma^{*}$ and consider a sequence
of player-compatible trembles $\boldsymbol{\epsilon}^{(t)}$ and $\boldsymbol{\epsilon}^{(t)}$-PCE
$\sigma^{(t)}\rightarrow\sigma^{*}$, then each $\sigma_{-k}^{(t)}$
satisfies $\text{\ensuremath{\liminf}}_{t\to\infty}\sigma_{i}^{(t)}(s_{i}^{*})/\sigma_{j}^{(t)}(s_{j}^{*})\ge1$
whenever $i,j\ne k$ and $s_{i}^{*}\succsim s_{j}^{*}$. Furthermore,
from $\sigma_{k}^{*}(\bar{s}_{k})>0$ and $\sigma_{k}^{(t)}\to\sigma_{k}^{*}$,
we know there is some $T_{1}\in\mathbb{N}$ so that $\sigma_{k}^{(t)}(\bar{s}_{k})>\sigma_{k}^{*}(\bar{s}_{k})/2$
for all $t\ge T_{1}$. We may also find $T_{2}\in\mathbb{N}$ so that
$\boldsymbol{\epsilon}^{(t)}(\bar{s}_{k}|k)<\sigma_{k}^{*}(\bar{s}_{k})/2$
for all $t\ge T_{2}$, since $\boldsymbol{\epsilon}^{(t)}\to\boldsymbol{0}$.
So when $t\ge\max(T_{1},T_{2})$, $\sigma_{k}^{(t)}$ places strictly
more than the required weight on $\bar{s}_{k},$ so $\bar{s}_{k}$
is at least a weak best response for $k$ against $\sigma_{-k}^{(t)}.$
Now the subsequence of opponent play $(\sigma_{-k}^{(t)})_{t\ge\max(T_{1},T_{2})}$
satisfies the requirement of this proposition.
\end{proof}

\subsection{Proof of Theorem \ref{thm:PCE_existence}}

\textbf{Theorem \ref{thm:PCE_existence}}: \emph{PCE exists in every
finite strategic-form game.}
\begin{proof}
Consider a sequence of tremble profiles with the same lower bound
on the probability of each strategy, that is $\boldsymbol{\epsilon}^{(t)}(s_{i}|i)=\epsilon^{(t)}$
for all $i$ and $s_{i}$, and with $\epsilon^{(t)}$ decreasing monotonically
to 0 in $t$. Each of these tremble profiles is player-compatible
(regardless of the compatibility structure $\succsim$) and there
is some finite $T$ large enough that $t\ge T$ implies an $\boldsymbol{\epsilon}^{(t)}$-equilibrium
exists, and some subsequence of these $\boldsymbol{\epsilon}^{(t)}$-equilibria
converges since the space of strategy profiles is compact. By definition
these $\boldsymbol{\epsilon}^{(t)}$-equilibria are also $\boldsymbol{\epsilon}^{(t)}$-PCE,
which establishes existence of PCE.
\end{proof}

\subsection{\label{subsec:Proof-of-CC} Proof of Proposition \ref{prop:compare_CC}}

\textbf{Proposition \ref{prop:compare_CC}}: \emph{In a signaling
game, every PCE $\sigma^{*}$ is a Nash equilibrium satisfying the
compatibility criterion of \citet{fudenberg_he_2017}.}
\begin{proof}
Since every PCE is a trembling-hand perfect equilibrium and since
this latter solution concept refines Nash, $\sigma^{*}$ is a Nash
equilibrium. To show that it satisfies the compatibility criterion,
we need to show that $\sigma_{2}^{*}$ assigns probability 0 to plans
in $A^{S}$ that, for some $s\in S$, do not best respond to an ``admissible''
belief at signal $s$ under profile $\sigma^{*}.$ For any plan assigned
positive probability under $\sigma_{2}^{*}$, by Proposition \ref{prop:PCE_compatible}
we may find a sequence of totally mixed signal profiles $\sigma_{1}^{(t)}$
of the sender, so that whenever $s_{\theta}\succsim s_{\theta^{'}}$
we have $\liminf_{t\to\infty}\sigma_{1}^{(t)}(s\mid\theta)/\sigma_{1}^{(t)}(s\mid\theta^{'})\ge1.$
Write $q^{(t)}(\cdot\mid s)$ as the Bayesian posterior belief about
the sender's type after signal $s$ under $\sigma_{1}^{(t)}$, which
is well defined because each $\sigma_{1}^{(t)}$ is totally mixed.
Whenever $s_{\theta}\succsim s_{\theta^{'}}$, this sequence of posterior
beliefs satisfies $\liminf_{t\to\infty}q^{(t)}(\theta\mid s)/q^{(t)}(\theta^{'}\mid s)\ge\lambda(\theta)/\lambda(\theta^{'})$,
so if the receiver's plan best responds to every element in the sequence,
it also best responds to an accumulation point $(q^{\infty}(\cdot\mid s))_{s\in S}$
with $q^{\infty}(\theta\mid s)/q^{\infty}(\theta^{'}\mid s)\ge\lambda(\theta)/\lambda(\theta^{'})$
whenever $s_{\theta}\succsim s_{\theta^{'}}$. Since the player compatibility
definition used in this paper is slightly easier to satisfy than the
type compatibility definition that the set $P(s^{'},\sigma^{*})$
is based on, the plan best responds to $P(s^{'},\sigma^{*})$ after
every signal $s^{'}$.
\end{proof}

\subsection{Proof of Proposition \ref{prop:index}}

Let $N=\max_{i}|\mathbb{S}_{i}|.$ We first show that $i'$s discounted
lifetime play is the same whether $i$ plays against strategy profiles
drawn i.i.d. in different periods from the social distribution $\sigma_{-i}$,
or against a response path drawn from a certain distribution $\eta$
at the start of $i$'s life. The next lemma constructs this $\eta$
from $\sigma$, which is the same for all agents, and does not depend
on their (possibly stochastic) learning policies.
\begin{lem}
\label{lem:response_path} In a factorable game, for each $\sigma\in\times_{k}\Delta(\mathbb{S}_{k}),$
there is a distribution $\eta$ over response paths, so that for any
player $i$, any possibly random policy $r_{i}:Y_{i}\to\Delta(\mathbb{S}_{i})$,
and any strategy $s_{i}\in\mathbb{S}_{i}$, we have 
\[
\phi_{i}(s_{i};r_{i},\sigma)=(1-\gamma)\mathbb{E}_{\mathfrak{S}\sim\eta}\left[\sum_{t=1}^{\infty}\gamma^{t-1}\cdot(y_{i}^{t}(\mathfrak{S},r_{i})=s_{i})\right],
\]
where $y_{i}^{t}(\mathfrak{S},r_{i})$ refers to the $t$-th period
history in $y_{i}(\mathfrak{S},r_{i})$.
\end{lem}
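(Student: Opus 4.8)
The plan is to show that, from player $i$'s point of view, facing i.i.d. opponent play drawn from $\sigma_{-i}$ each period is statistically indistinguishable from facing a single response path $\mathfrak{A}$ drawn once, at birth, from a well-chosen product measure $\eta$ on $(\times_{h\in\mathcal{H}}A_{h})^{\infty}$ that does not depend on $i$ or on $r_{i}$. The first (and, I expect, most delicate) step is a reduction to ``behavioral'' opponent play. By the Assumption that each player moves at most once along any path, together with Lemma \ref{lem:one_step_lem1}, the set $F_{i}[s_{i}]$ contains at most one information set of each opponent $j\ne i$: two such sets would both be forced onto the path whenever $i$ plays $s_{i}$, contradicting the Assumption (information sets with only one action may be deleted from $F_{i}[s_{i}]$ without changing $\Pi_{i}[s_{i}]$, so this is without loss). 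Hence, when $i$ plays $s_{i}$ against an i.i.d.\ draw from $\sigma_{-i}$, what she observes is a profile of actions, one at each $h\in F_{i}[s_{i}]$, whose distribution is $\bigotimes_{h\in F_{i}[s_{i}]}\nu_{h}$, where $\nu_{h}\in\Delta(A_{h})$ is the marginal of $\sigma_{j(h)}$ on $h$; and since fresh independent draws of $\sigma_{-i}$ are used in different periods, $i$'s induced response depends on $\sigma_{-i}$ only through the collection $(\nu_{h})_{h\in\mathcal{H}_{-i}}$. I then define $\nu_{h}$ the same way for every $h\in\mathcal{H}$, set $\mu:=\bigotimes_{h\in\mathcal{H}}\nu_{h}\in\Delta(\times_{h\in\mathcal{H}}A_{h})$ (a distribution realization-equivalent to $\sigma$, i.e.\ inducing the same distribution over terminal nodes), and let $\eta:=\mu^{\otimes\infty}$, so that all coordinates $\mathfrak{A}_{n}(h)$ are mutually independent with $\mathfrak{A}_{n}(h)\sim\nu_{h}$.

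The core of the proof is then a coupling/bookkeeping claim: for any (possibly random) rule $r_{i}$, running $r_{i}$ against i.i.d.\ draws from $\sigma_{-i}$ and running $r_{i}$ against $\mathfrak{A}\sim\eta$ (with the convention that the $k$-th time $i$ plays $s_{i}$ she reads off $(\mathfrak{A}_{k}(h))_{h\in F_{i}[s_{i}]}$) produce the same law for the process $(X_{i}^{t})_{t\ge1}$ of $i$'s strategies; summing against $(1-\gamma)\gamma^{t-1}$ then gives the stated identity. The key observation is that the coordinates of $\mathfrak{A}$ consumed in period $t$ are exactly $\{(c_{t},h):h\in F_{i}[X_{i}^{t}]\}$, where $c_{t}$ is the number of times $X_{i}^{t}$ has been played through period $t$, and that across distinct periods $t<t'$ these coordinate sets are disjoint: if $X_{i}^{t}=X_{i}^{t'}$ the running counts differ, while if $X_{i}^{t}\ne X_{i}^{t'}$ then $F_{i}[X_{i}^{t}]\cap F_{i}[X_{i}^{t'}]=\emptyset$ by Condition (2) of factorability. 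I would then establish the claim by induction on $t$: $X_{i}^{1}=r_{i}(\emptyset)$ is drawn the same way in both models; given that the histories through period $t$ have the same joint law, in the $\eta$-model the next strategy $X_{i}^{t+1}$ is drawn from $r_{i}$ applied to that common history, and the period-$(t+1)$ observation consumes coordinates of $\mathfrak{A}$ not used in periods $1,\dots,t$ — hence, by the product structure of $\eta$, independent of everything revealed so far and distributed as $\bigotimes_{h\in F_{i}[X_{i}^{t+1}]}\nu_{h}$, which matches the i.i.d.\ model by the first step. For a random rule, one conditions on the rule's internal randomization, applies the deterministic argument, and integrates.

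I expect the genuine obstacle to be getting the reduction step right, namely recognizing that $\eta$ must be built from the single-information-set marginals of $\sigma$ rather than from the joint distribution on action profiles that $\sigma$ induces. If $\sigma_{j}$ correlates $j$'s play at two information sets lying in $F_{i}[s_{i}]$ and $F_{i}[s_{i}']$ respectively, then a response path drawn from that correlated joint law would make $i$'s observations from playing $s_{i}$ and from playing $s_{i}'$ spuriously correlated, even though under genuine i.i.d.\ opponent play (which supplies an independent fresh draw in each period) those observations are independent. The ``moves at most once'' Assumption, via Lemma \ref{lem:one_step_lem1}, is precisely what licenses replacing each $\sigma_{j}$ by its behavioral-strategy marginals and hence taking $\mu$ to be a product over information sets; once the correct $\eta$ is in hand, the disjointness bookkeeping and the inductive coupling are routine.
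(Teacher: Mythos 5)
Your proof is correct and takes essentially the same route as the paper's: you construct $\eta$ as the coordinate-wise product of $\sigma$'s single-information-set marginals and then couple the random-matching and response-path models by induction on the period, using the disjointness of the coordinates consumed in different periods. The only difference is expository: you spell out, via Lemma \ref{lem:one_step_lem1} and the moves-at-most-once assumption, why each $F_{i}[s_{i}]$ contains at most one information set of each opponent, a fact the paper's proof uses implicitly when it equates the product of marginals with the joint probability under independent mixing across players.
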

\begin{proof}
In fact, we will prove a stronger statement: we will show there is
such a distribution that induces the same distribution over period-$t$
histories for every $i,$ every learning policy $r_{i},$ and every
$t.$

Think of each response path $\mathfrak{S}$ as a two-dimensional array,
$\mathfrak{S}=(\mathfrak{S}_{t,n})_{t\in\mathbb{N},1\le n\le N}$.
For non-negative integers $(m_{n})_{n=1}^{N}$, each finite two-dimensional
array of strategy profiles $((s_{t,n})_{t=1}^{m_{n}})_{n=1}^{N}$
with each $s_{t,n}\in\mathbb{S}$ defines a ``cylinder set'' of
response paths with the form: 
\[
\{\mathfrak{S}:\mathfrak{S}_{t,n}=s_{t,n}\ \text{for each \ensuremath{1\le n\le N,1\le t\le m_{n}\}}}.
\]
That is, the cylinder set consists of those response paths whose first
$m_{n}$ elements for the $n$-th strategy match a given sequence
of strategy profiles, $(s_{t,n})_{t=1}^{m_{n}}.$ (If $m_{n}=0$,
then there is no restriction on $\mathfrak{S}_{t,n}$ for any $t.$)
We specify the distribution $\eta$ by specifying the probability
it assigns to these cylinder sets:

\[
\eta\left\{ ((s_{t,n})_{t=1}^{m_{n}})_{n=1}^{N}\right\} =\prod_{n=1}^{N}\prod_{t=1}^{m_{n}}\sigma(s_{t,n}),
\]
where we have abused notation to write $((s_{t,n})_{t=1}^{m_{n}})_{n=1}^{N}$
for the cylinder set satisfying this profile of sequences, and we
have used the convention that the empty product is defined to be 1.

We establish the claim by induction on $t$ for period-$t$ histories.
For $t\ge0,$ let $Y_{i}[t]\subseteq Y_{i}$ be the set of possible
period-$t$ histories of $i,$ that is $Y_{i}[t]:=(\mathbb{S}_{i}\times\mathbb{O}_{i})^{t}$.
In the base case of $t=1,$ we show playing against a response path
drawn according to $\eta$ and playing against a pure strategy\footnote{In the random matching model agents are facing a randomly drawn pure
strategy profile each period (and not a fixed behavior strategy):
they are matched with random opponents, who each play a pure strategy
in the game as a function of their personal history. From Kuhn's theorem,
this is equivalent to facing a fixed profile of behavior strategies.} drawn from $\sigma_{-i}\in\times_{k\ne i}\Delta(\mathbb{S}_{k})$
generate the same period-1 history. Fixing a learning policy $r_{i}:Y_{i}\to\mathbb{S}_{i}$
of $i,$ the probability of $i$ having the period-1 history $(s_{i}^{(1)},o^{(1)})\in Y_{i}[1]$
in the random-matching model is $\boldsymbol{1}(r_{i}(\emptyset)=s_{i}^{(1)})\cdot\sigma(s:\mathfrak{o}_{i}(Z(s_{i}^{(1)},s_{-i}))=o^{(1)})$.
That is, $i$'s policy must play $s_{i}^{(1)}$ in the first period
of $i$'s life. Then, $i$ must encounter such a pure strategy that
generates the required observation $o^{(1)}$, and this has probability
$\sigma(s:\mathfrak{o}_{i}(Z(s_{i}^{(1)},s_{-i}))=o^{(1)})$. The
probability of this happening against a response path drawn from $\eta$
is 
\begin{align*}
 & \boldsymbol{1}(r_{i}(\emptyset)=s_{i}^{(1)})\cdot\eta(\mathfrak{S}\mathfrak{:}\mathfrak{o}_{i}(Z(s_{i}^{(1)},s_{1,s_{i}^{(1)},-i}))=o^{(1)})\\
= & \boldsymbol{1}(r_{i}(\emptyset)=s_{i}^{(1)})\cdot\sigma(s:\mathfrak{o}_{i}(Z(s_{i}^{(1)},s_{-i}))=o^{(1)}),
\end{align*}
where the second line comes from the probability $\eta$ assigns to
cylinder sets.

We now proceed with the inductive step. By induction, suppose random
matching and the $\eta$-distributed response path induce the same
distribution over the set of period-$T$ histories, $Y_{i}[T]$, where
$T\ge1.$ Write this common distribution as $\phi_{i,T}^{RM}=\phi_{i,T}^{\eta}=\phi_{i,T}\in\Delta(Y_{i}[T]).$
We prove that they also generate the same distribution over length
$T+1$ histories.

Suppose random matching generates distribution $\phi_{i,T+1}^{RM}\in\Delta(Y_{i}[T+1])$
and the $\eta$-distributed response path generates distribution $\phi_{i,T+1}^{\eta}\in\Delta(Y_{i}[T+1]).$
Each length $T+1$ history $y_{i}[T+1]\in Y_{i}[T+1]$ may be written
as $(y_{i}[T],(s_{i}^{(T+1)},o^{(T+1)})),$ where $y_{i}[T]$ is a
length-$T$ history and $(s_{i}^{(T+1)},o^{(T+1)})$ is a one-period
history corresponding to what happens in period $T+1$. Therefore,
we may write for each $y_{i}[T+1],$ 
\[
\phi_{i,T+1}^{RM}(y_{i}[T+1])=\phi_{i,T}^{RM}(y_{i}[T])\cdot\phi_{i,T+1|T}^{RM}((s_{i}^{(T+1)},o^{(T+1)})|y_{i}[T]),
\]
 and 
\[
\phi_{i,T+1}^{\eta}(y_{i}[T+1])=\phi_{i,T}^{\eta}(y_{i}[T])\cdot\phi_{i,T+1|T}^{\eta}(((s_{i}^{(T+1)},o^{(T+1)})|y_{i}[T]),
\]
where $\phi_{i,T+1|T}^{RM}$ and $\phi_{i,T+1|T}^{\eta}$ are the
conditional probabilities of the form ``having history $(s_{i}^{(T+1)},o^{(T+1)})$
in period $T+1,$ conditional on having history $y_{i}[T]\in Y_{i}[T]$
in the first $T$ periods.'' If such conditional probabilities are
always the same for the random-matching model and the $\eta$-distributed
response path model, then from the hypothesis $\phi_{i,T}^{RM}=\phi_{i,T}^{\eta},$
we can conclude $\phi_{i,T+1}^{RM}=\phi_{i,T+1}^{\eta}.$

By argument exactly analogous to the base case, we have for the random-matching
model
\[
\phi_{i,T+1|T}^{RM}((s_{i}^{(T+1)},o^{(T+1)})|y_{i}[T])=\boldsymbol{1}(r_{i}(y_{i}(T))=s_{i}^{(T+1)})\cdot\sigma(s:\mathfrak{o}_{i}(Z(s_{i}^{(T+1)},s_{-i}))=o^{(T+1)}),
\]
since the matching is independent across periods. In the $\eta$-distributed
response path model, since a single response path is drawn once and
fixed, one must compute the conditional probability that the drawn
$\mathfrak{\mathfrak{S}}$ is such that the observation $o^{(T+1)}$will
be seen in period $T+1$, given the history $y_{i}[T]$ (which is
informative about which response path $i$ is facing).

For each $1\le n\le N,$ let the non-negative integer $m_{n}$ represent
the number of times $i$ has used the $n$-th strategy in $\mathbb{S}_{i}$
in the history $y_{i}[T].$ Let $(o_{t,n})_{1\le t\le m_{n}}$ represent
the sequence of observations seen after using the $n$-th strategy,
in chronological order. Consider the following finite union of cylinder
sets, $(s_{t,n}:\mathfrak{o}_{i}(Z(n,s_{t,n,-i}))=o_{t,n})_{1\le t\le m_{n},1\le n\le N}$.
This is the set of response sequences consistent with the observations
so far.

If $\mathfrak{S}$ is to produce the observation $o^{(T+1)}$ from
$i$'s next play of $s_{i}^{(T+1)}$, then $\mathfrak{S}$ must belong
to a more restrictive cylinder set that satisfies the additional restriction
$(s_{m_{s_{i}^{(T+1)}}+1,s_{i}^{(T+1)}}:\mathfrak{o}_{i}(Z(s_{i}^{(T+1)},s_{-i}))=o_{m_{s_{i}^{(T+1)}}+1,s_{i}^{(T+1)}})$.
The conditional probability of $\mathfrak{S}$ belonging to this more
restrictive cylinder set, given that it falls in $(s_{t,n}:\mathfrak{o}_{i}(Z(n,s_{t,n,-i}))=o_{t,n})_{1\le t\le m_{n},1\le n\le N},$
is given by the ratio of $\eta$-probabilities of these unions of
cylinder sets, which from the product structure of $\eta$ on cylinder
sets, must be $\sigma(s:\mathfrak{o}_{i}(Z(s_{i}^{(T+1)},s_{-i}))=o^{(T+1)}).$
\end{proof}
Thus, to prove that $\phi_{i}(s_{i}^{*};r_{i},\sigma_{-i})\ge\phi_{j}(s_{j}^{*};r_{j},\sigma_{-j}),$
it suffices to show that for every $\mathfrak{S}$, the period where
$s_{i}^{*}$ is played for the $k$-th time in induced history $y_{i}(\mathfrak{S},r_{i})$
happens earlier than the period where $s_{j}^{*}$ is played for the
$k$-th time in history $y_{j}(\mathfrak{S},r_{j}).$

Now we turn to the proof of Proposition \ref{prop:index}.
\begin{proof}
Let $0\le\gamma<1$ and the social distribution $\sigma$ be fixed.
Enumerate the strategy sets of $i$ and $j$ so that $s_{i}$ and
$\varphi(s_{i})$ are assigned the same number for every $s_{i}\in\mathbb{S}_{i}.$
Consider the product distribution $\eta$ on the space of response
paths, $((\mathbb{S})^{N})^{\infty}$, as in the proof of Lemma \ref{lem:response_path}.

By Lemma \ref{lem:response_path}, denote the period where $s_{i}^{*}$
appears in $y_{i}(\mathfrak{S},r_{i})$ for the $k$-th time as $T_{i}^{(k)}$,
the period where $s_{j}^{*}$ appears in $y_{j}(\mathfrak{S},r_{j})$
for the $k$-th time as $T_{j}^{(k)}.$ The quantities $T_{i}^{(k)},T_{j}^{(k)}$
are defined to be $\infty$ if the corresponding strategies do not
appear at least $k$ times in the infinite histories. Write $\#(s_{i}^{'};k)\in\mathbb{N}\cup\{\infty\}$
be the number of times $s_{i}^{'}\in\mathbb{S}_{i}$ is played in
the history $y_{i}(\mathfrak{S},r_{i})$ before $T_{i}^{(k)}$. Similarly,
$\#(s_{j}^{'};k)\in\mathbb{N}\cup\{\infty\}$ denotes the number of
times $s_{j}^{'}\in\mathbb{S}_{j}$ is played in the history $y_{j}(\mathfrak{S},r_{j})$
before $T_{j}^{(k)}$. Since $\varphi$ establishes a bijection between
$\mathbb{S}_{i}$ and $\mathbb{S}_{j}$, it suffices to show that
for every $k=1,2,3,...$ either $T_{j}^{(k)}=\infty$ or for all $s_{i}^{'}\ne s_{i}^{*}$,
$\#(s_{i}^{'};k)\le\#(s_{j}^{'};k)$ where $s_{j}^{'}=\varphi(s_{i}^{'}).$

We show this by induction on $k$. First we establish the base case
of $k=1$.

Suppose $T_{j}^{(1)}\ne\infty$, and, by way of contradiction, suppose
there is some $s_{i}^{'}\ne s_{i}^{*}$ such that $\#(s_{i}';1)>\#(\varphi(s_{i}^{'});1)$.
Find the subhistory $y_{i}$ of $y_{i}(\mathfrak{S},r_{i}).$ that
leads to $s_{i}^{'}$ being played for the $(\#(\varphi(s_{i}^{'});1)+1)$-th
time, and find the subhistory $y_{j}$ of $y_{j}(\mathfrak{S},r_{j})$
that leads to $j$ playing $s_{j}^{*}$ for the first time ($y_{j}$
is well-defined because $T_{j}^{(1)}\ne\infty$). Note that $y_{i,s_{i}^{*}}\sim y_{j,s_{j}^{*}}$
vacuously, since $i$ has never played $s_{i}^{*}$ in $y_{i}$ and
$j$ has never played $s_{j}^{*}$ in $y_{j}$.

Also, $y_{i,s_{i}^{'}}\sim y_{j,s_{j}^{'}}$. To see this, note that
$i$ has played $s_{i}^{'}$ for $\#(\varphi(s_{i}^{'});1)$ times
and $j$ has played $s_{j}^{'}$ for the same number of times. The
definition of response paths implies they faced the same sequence
of opponent strategy profiles, and the definition of isomorphic learning
problems implies they have gotten equivalent observations in all these
periods.

Since $r_{j}(y_{j})=s_{j}^{*}$ and $r_{j}$ is an index policy, $s_{j}^{*}$
must have weakly the highest index at $y_{j}.$ Since $r_{i}$ is
more compatible with $s_{i}^{*}$ than $r_{j}$ is with $s_{j}^{*}$,
$s_{i}^{'}$ must not have the weakly highest index at $y_{i}$. And
yet $r_{i}(y_{i})=s_{i}'$ contradiction.

Now suppose this statement holds for all $k\le K$ for some $K\ge1.$
We show it also holds for $k=K+1.$ If $T_{j}^{(K+1)}=\infty$ or
$T_{j}^{(K)}=\infty$, we are done. Otherwise, by way of contradiction,
suppose there is some $s_{i}^{'}\ne s_{i}^{*}$ so that $\#(s_{i}';K+1)>\#(\varphi(s_{i}^{'});K+1)$.
Find the subhistory $y_{i}$ of $y_{i}(\mathfrak{S},r_{i}).$ that
leads to $s_{i}^{'}$ being played for the $(\#(\varphi(s_{i}^{'});K+1)+1)$-th
time. Since $T_{j}^{(K)}\ne\infty$, from the inductive hypothesis
$T_{i}^{(K)}\ne\infty$ and $\#(s_{i}';K)\le\#(\varphi(s_{i}^{'});K)$.
That is, $i$ must have played $s_{i}^{'}$ no more than $\#(\varphi(s_{i}^{'});K)$
times before playing $s_{i}^{*}$ for the $K$-th time. Since $\#(\varphi(s_{i}^{'});K+1)+1>\#(\varphi(s_{i}^{'});K),$
the subhistory $y_{i}$ must extend beyond period $T_{i}^{(K)}$,
so it contains $K$ instances of $i$ playing $s_{i}^{*}$.

Next, find the subhistory $y_{j}$ of $y_{j}(\mathfrak{S},r_{j})$
that leads to $j$ playing $s_{j}^{*}$ for the $(K+1)$-th time.
(This is well-defined because $T_{j}^{(K+1)}\ne\infty$.) Note that
$y_{i,s_{i}^{*}}\sim y_{j,s_{j}^{*}}$, since $i$ and $j$ have played
$s_{i}^{*},s_{j}^{*}$ for $K$ times each, and they were facing the
same response paths. Also, $y_{i,s_{i}^{'}}\sim y_{j,s_{j}^{'}}$
since $i$ has played $s_{i}^{'}$ for $\#(\varphi(s_{i}^{'});K+1)$
times and $j$ has played $s_{j}^{'}$ for the same number of times.
Since $r_{j}(y_{j})=s_{j}^{*}$ and $r_{j}$ is an index policy, $s_{j}^{*}$
must have weakly the highest index at $y_{j}.$ Since $r_{i}$ is
more compatible with $s_{i}^{*}$ than $r_{j}$ is with $s_{j}^{*}$,
$s_{i}^{'}$ must not have the weakly highest index at $y_{i}$. And
yet $r_{i}(y_{i})=s_{i}'$ contradiction.
\end{proof}

\subsection{Proof of Lemma \ref{lem:one_step_lem1}}
\begin{proof}
By way of contradiction, suppose there is some profile of moves by
$-i$, $(a_{h})_{h\in\mathcal{H}_{-i}}$, so that $h^{*}$ is off
the path of play in $(s_{i},(a_{h})_{h\in\mathcal{H}_{-i}})=(s_{i},a_{h^{*}},(a_{h})_{h\in\mathcal{H}_{-i}\backslash h^{*}})$.
Find a different action of $j$ on $h^{*}$, $a_{h^{*}}^{'}\ne a_{h^{*}}$.
Since $h^{*}$ is off the path of play, both $(s_{i},a_{h^{*}},(a_{h})_{h\in\mathcal{H}_{-i}\backslash h^{*}})$
and $(s_{i},a_{h^{*}}^{'},(a_{h})_{h\in\mathcal{H}_{-i}\backslash h^{*}})$
lead to the same payoff for $i$. But by Condition (1) in the definition
of factorability and the fact that $h^{*}\in F_{i}[s_{i}]$, we will
have found two $-i$ action profiles $s_{-i},s_{-i}^{'}$ in two different
blocks of $\Pi_{i}[s_{i}]$ with $u_{i}(s_{i},s_{-i})=u_{i}(s_{i},s_{-i}^{'})$.
This contradicts $\Pi_{i}[s_{i}]$ being the coarsest partition of
$\mathbb{S}_{-i}$ that makes $u_{i}(s_{i},\cdot)$ measurable.
\end{proof}

\subsection{Proof of Lemma \ref{lem:one_step_lem2}}
\begin{proof}
Since $i$'s payoff is not independent of $h^{*},$ there exist actions
$a_{h^{*}}\ne a_{h^{*}}^{'}$ on $h^{*}$ and a profile $a_{-h^{*}}$
of actions elsewhere in the game tree, so that $u_{i}(a_{h^{*}},a_{-h^{*}})\ne u_{i}(a_{h^{*}}^{'},a_{-h^{*}})$.
Consider the strategy $s_{i}$ for $i$ that matches $a_{-h^{*}}$
in terms of play on $i$'s information sets, so we may equivalently
write 
\[
u_{i}(s_{i},a_{h^{*}},(a_{h})_{h\in\mathcal{H}_{-i}\backslash h^{*}})\ne u_{i}(s_{i},a_{h^{*}}^{'},(a_{h})_{h\in\mathcal{H}_{-i}\backslash h^{*}}),
\]
 where $(a_{h})_{h\in\mathcal{H}_{-i}\backslash h^{*}}$ are the components
of $a_{-h^{*}}$ corresponding to information sets of $-i$. If $h^{*}\notin F_{i}[s_{i}],$
then by Condition (1) of factorability, $(a_{h^{*}},(a_{h})_{h\in\mathcal{H}_{-i}\backslash h^{*}})$
and $(a_{h^{*}}^{'},(a_{h})_{h\in\mathcal{H}_{-i}\backslash h^{*}})$
belong to the same block in $\Pi_{i}[s_{i}].$ Yet, they give different
payoffs to $i$, which contradicts that $i$'s payoff after $s_{i}$
must be measurable with respect to $\Pi_{i}[s_{i}].$
\end{proof}

\subsection{Proof of Proposition \ref{prop:one_step}}
\begin{proof}
Combining Lemmas \ref{lem:one_step_lem1} and \ref{lem:one_step_lem2}
implies there is an extensive-form strategy $s_{i}\in\mathbb{S}_{i}$
such that $h^{*}$ is on the path of play whenever $i$ chooses $s_{i}.$
Consider some strategy profile $(s_{i}^{\circ},s_{-i}^{\circ})$ where
$h^{*}$ is off the path. Then $i$ can unilaterally deviate to $s_{i}$,
and $h^{*}$ is on the path of $(s_{i},s_{-i}^{\circ})$. Furthermore,
$i$'s play differs on the new path relative to the old path on exactly
one information set, since $i$ plays at most once on any path. So
instead of deviating to $s_{i},$ $i$ can deviate to $s_{i}'$ that
matches $s_{i}$ in terms of this information set where $i$'s play
is modified, but otherwise is the same as $s_{i}^{\circ}$. So $h^{*}$
is also on the path of play for $(s_{i}'s_{-i}^{\circ}),$ where $s_{i}^{'}$
differs from $s_{i}^{\circ}$ only on one information set.
\end{proof}

\section{\label{sec:details_2learning_models}Index Compatibility of OPT and
WFP when $s_{i}^{*}\succsim s_{j}^{*}$}

In this section, we show that OPT and WFP are index compatible under
the conditions of Theorem \ref{thm:crossplayer_tremble_foundation}.
This conclusion, when combined with Proposition \ref{prop:index},
implies Theorem \ref{thm:crossplayer_tremble_foundation}.

Let the equivalence classes $\mathbb{E}$ be such that $(s_{i},u_{i}(s_{i},\tilde{s}_{-i}))\sim(\varphi(s_{i}),u_{j}(\varphi(s_{i}),\hat{s}_{-j})$
if and only if $\tilde{s}_{-i}|_{F_{i}[s_{i}]\cap\mathcal{H}_{-ij}}=\hat{s}_{-j}|_{F_{j}[\varphi(s_{i})]\cap\mathcal{H}_{-ij}}.$
Conditions on factorability and isomorphic factoring ensure that $i,j$
face $(\varphi,\mathbb{E})$-isomorphic learning problems. Indeed,
if $i$ and $j$ faced the same pure profile $\tilde{s},$ then $\tilde{s}_{-i}|_{F_{i}[s_{i}]\cap\mathcal{H}_{-ij}}=\tilde{s}_{-j}|_{F_{j}[\varphi(s_{i})]\cap\mathcal{H}_{-ij}}$
since $F_{i}[s_{i}]\cap\mathcal{H}_{-ij}=F_{j}[\varphi(s_{i})]\cap\mathcal{H}_{-ij}$
by isomorphic factoring.

\subsection{Weighted Fictitious Play}

To see that WFP satisfies index compatibility for $s_{i}^{*}$ and
$s_{j}^{*}$ under the conditions of Theorem \ref{thm:crossplayer_tremble_foundation},
let histories $y_{i},y_{j}$ and strategy $s_{i}^{'}\ne s_{i}^{*}$
be given with $y_{i,s_{i}^{*}}\sim y_{j,s_{j}^{*}}$, $y_{i,s_{i}^{'}}\sim y_{j,\varphi(s_{i}^{'})}$,
and $s_{j}^{*}$ having weakly the highest index for $j$. Construct
two totally mixed, independent behavior strategy profile, $\beta,\tilde{\beta}$
as follows. For each $s_{j}\in\mathbb{S}_{j},$ $\beta(h):=\alpha_{h}(\cdot;y_{j})$
for all $h\in F_{j}[s_{j}]$. (This is well-defined by Condition (2)
of factorability, as $F_{j}[s_{j}]\cap F_{j}[s_{j}']=\varnothing$
if $s_{j}\ne s_{j}^{'}.)$ For those $h\in\mathcal{H}\backslash\cup_{s_{j}\in\mathbb{S}_{j}}F_{j}[s_{j}],$
arbitrarily specify a strictly mixed action $\alpha_{h}\in\Delta(A_{h})$
for $\beta(h).$ Having constructed $\beta$ we turn to $\tilde{\beta}.$
For each $s_{i}\in\{s_{i}^{*},s_{i}^{'}\},$ $\tilde{\beta}(h):=\alpha_{h}(\cdot;y_{i})$
for all $h\in F_{i}[s_{i}]$. For all other $h\in\mathcal{H},$ let
$\tilde{\beta}(h):=\beta(h).$

From the definition of $y_{i,s_{i}^{*}}\sim y_{j,s_{j}^{*}},$ $\tilde{\beta}(h)=\beta(h)$
for all $h\in F_{i}[s_{i}]\cap\mathcal{H}_{-ij}.$ From the definition
of $y_{i,s_{i}^{'}}\sim y_{j,\varphi(s_{i}^{'})},$ $\tilde{\beta}(h)=\beta(h)$
for all $h\in F_{i}[s_{i}']\cap\mathcal{H}_{-ij}.$ Also, $\tilde{\beta}(h)=\beta(h)$
for all other $h\in\mathcal{H}_{-ij}$ by construction. So, $\tilde{\beta}$
and $\beta$ are totally mixed behavior strategy profiles that match
on the $-ij$ marginal, and they can be represented by $\tilde{\sigma},\sigma$
totally mixed strategy profiles (over $\mathbb{S})$ that match on
the $-ij$ marginal.

Since $j$'s payoff from each $s_{j}$ only depends on $-j$'s play
on $F_{j}[s_{j}]$ by Condition (1) of factorability, $\sum_{s\in\mathbb{S}}u_{j}(s_{j},s_{-j})\cdot\sigma(s)$
equals to the index that the weighted fictitious play agent assigns
to $s_{j}$ after history $y_{j}$. Since $s_{j}^{*}$ has the weakly
highest index, $\sum_{s\in\mathbb{S}}u_{j}(s_{j}^{*},s_{-j})\cdot\sigma(s)=\max_{s_{j}^{'}\in\mathbb{S}_{j}}\sum_{s\in\mathbb{S}}u_{j}(s_{j}^{'},s_{-j})\cdot\sigma(s)$.
From the definition of player compatibility, $s_{i}^{*}$ is strictly
optimal against $\tilde{\sigma}$, which in particular means $\sum_{s\in\mathbb{S}}u_{i}(s_{i}^{*},s_{-i})\cdot\tilde{\sigma}(s)>\sum_{s\in\mathbb{S}}u_{i}(s_{i}^{'},s_{-i})\cdot\tilde{\sigma}(s)$.
The RHS is $i$'s index for $s_{i}'$ after $y_{i},$ since $\tilde{\sigma}$
marginalized to every $h\in F_{i}[s_{i}']$ is $\alpha_{h}(\cdot;y_{i})$
by construction. This says $s_{i}^{'}$ does not have the weakly highest
index for $i$ after $y_{i}.$

Thus, WFP satisfies index compatibility for $s_{i}^{*}$ and $s_{j}^{*}$.

\subsection{\label{subsec:The-Heuristic-Gittins}The Gittins Index}

Write $V(\tau;s_{i},\nu_{s_{i}})$ for the value of the above auxiliary
problem under the (not necessarily optimal) stopping time $\tau$
in the definition of the Gittins index. The Gittins index of $s_{i}$
is $\sup_{\tau>0}V(\tau;s_{i},\nu_{s_{i}})$. We begin by linking
$V(\tau;s_{i},\tau_{s_{i}})$ to $i$'s stage-game payoff from playing
$s_{i}$. From belief $\nu_{s_{i}}$ and stopping time $\tau$, we
will construct the correlated profile $\alpha(\nu_{s_{i}},\tau)\in\Delta^{\circ}(\times_{h\in F_{i}[s_{i}]}A_{h})$,
so that $V(\tau;s_{i},\nu_{s_{i}})$ is equal to $i$'s expected payoff
when playing $s_{i}$ while opponents play according to this correlated
profile on the $s_{i}$-relevant information sets.
\begin{defn}
A full-support belief $\nu_{s_{i}}\in\times_{h\in F_{i}[s_{i}]}\Delta(\Delta(A_{h}))$
for player $i$ together with a (possibly random) stopping rule $\tau>0$
together induce a stochastic process $(\tilde{\boldsymbol{a}}_{(-i),t})_{t\ge1}$
over the space $\times_{h\in F_{i}[s_{i}]}A_{h}\cup\{\varnothing\}$,
where $\tilde{\boldsymbol{a}}_{(-i),t}\in\times_{h\in F_{i}[s_{i}]}A_{h}$
represents the opponents' actions observed in period $t$ if $\tau\ge t$,
and $\tilde{\boldsymbol{a}}_{(-i),t}=\varnothing$ if $\tau<t$. We
call $\tilde{\boldsymbol{a}}_{(-i),t}$ player $i$'s \emph{internal
history} at period $t$ and write $\mathbb{P}_{(-i)}$ for the distribution
over internal histories that the stochastic process induces.
\end{defn}
Internal histories live in the same space as player $i$'s actual
experience in the learning problem, represented as a history in $\mathbb{O}_{i}$.
The process over internal histories is $i$'s prediction about what
would happen in the auxiliary problem if they were to use $\tau.$

Enumerate all possible profiles of moves at information sets $F_{i}[s_{i}]$
as $\times_{h\in F_{i}[s_{i}]}A_{h}=\{\boldsymbol{a}_{(-i)}^{(1)},...,\boldsymbol{a}_{(-i)}^{(K)}\}$,
let $p_{t,k}:=\mathbb{P}_{(-i)}[\tilde{\boldsymbol{a}}_{(-i),t}=\boldsymbol{a}_{(-i)}^{(k)}]$
for $1\le k\le K$ be the probability under $\nu_{s_{i}}$of seeing
the profile of actions $\boldsymbol{a}_{(-i)}^{(k)}$ in period $t$
of the stochastic process over internal histories, $(\tilde{\boldsymbol{a}}_{(-i),t})_{t\ge0}$,
and let $p_{t,0}:=\mathbb{P}_{(-i)}[\tilde{\boldsymbol{a}}_{(-i),t}=\varnothing]$
be the probability of having stopped before period $t.$
\begin{defn}
The\emph{ synthetic correlated profile} at information sets in $F_{i}[s_{i}]$
is the element of $\Delta^{\circ}(\times_{h\in F_{i}[s_{i}]}A_{h})$
(i.e. a correlated random action) that assigns probability $\frac{\sum_{t=1}^{\infty}\beta^{t-1}p_{t,k}}{\sum_{t=1}^{\infty}\beta^{t-1}(1-p_{t,0})}$
to the profile of actions $\boldsymbol{a}_{(-i)}^{(k)}$. Denote this
profile by $\alpha(\nu_{s_{i}},\tau)$.
\end{defn}
Note that the synthetic correlated profile depends on the belief $\nu_{s_{i}}$
stopping rule $\tau,$ and effective discount factor $\beta$. Since
the belief $\nu_{s_{i}}$ has full support, there is always a positive
probability assigned to observing every possible profile of actions
on $F_{i}[s_{i}]$ in the first period, so the synthetic correlated
profile is totally mixed. The significance of the synthetic correlated
profile is that it gives an alternative expression for the value of
the auxiliary problem under stopping rule $\tau$.
\begin{lem}
\[
V(\tau;s_{i},\nu_{s_{i}})=u_{i}(s_{i},\alpha(\nu_{s_{i}},\tau))
\]
\end{lem}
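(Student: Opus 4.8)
The plan is to unwind the numerator and denominator of $V(\tau;s_{i},\nu_{s_{i}})$ in terms of the internal-history process $(\tilde{\boldsymbol{a}}_{(-i),t})_{t\ge1}$ and the probabilities $p_{t,k}$, and then observe that the resulting coefficients are exactly the weights that define the synthetic correlated profile $\alpha(\nu_{s_{i}},\tau)$. Throughout, write $\beta:=\delta\gamma\in[0,1)$ for the effective discount factor, and recall that by factorability $U_{i}(s_{i},\cdot)$ is a well-defined function of the profile of actions at $F_{i}[s_{i}]$, and that $U_{i}(s_{i},\alpha)$ for a correlated profile $\alpha\in\Delta^{\circ}(\times_{h\in F_{i}[s_{i}]}A_{h})$ denotes $\sum_{k=1}^{K}\alpha(\boldsymbol{a}_{(-i)}^{(k)})\,U_{i}(s_{i},\boldsymbol{a}_{(-i)}^{(k)})$, which is the quantity the lemma claims equals $V$.

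First I would rewrite $\sum_{t=1}^{\tau}=\sum_{t=1}^{\infty}\mathbf{1}\{t\le\tau\}$ and note that, by definition of the internal-history process, the event $\{t\le\tau\}$ coincides with $\{\tilde{\boldsymbol{a}}_{(-i),t}\neq\varnothing\}$, on which the period-$t$ profile $(a_{h}(t))_{h\in F_{i}[s_{i}]}$ of the auxiliary problem equals $\tilde{\boldsymbol{a}}_{(-i),t}$. Partitioning $\{t\le\tau\}$ into the $K$ events $\{\tilde{\boldsymbol{a}}_{(-i),t}=\boldsymbol{a}_{(-i)}^{(k)}\}$, the summand $\mathbf{1}\{t\le\tau\}\,U_{i}(s_{i},(a_{h}(t))_{h\in F_{i}[s_{i}]})$ equals $\sum_{k=1}^{K}\mathbf{1}\{\tilde{\boldsymbol{a}}_{(-i),t}=\boldsymbol{a}_{(-i)}^{(k)}\}\,U_{i}(s_{i},\boldsymbol{a}_{(-i)}^{(k)})$ — a form in which $U_{i}$ is never evaluated at $\varnothing$, since the indicators vanish there. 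Taking expectations under $\mathbb{P}_{(-i)}$ and interchanging the (absolutely convergent, since $\beta<1$ and payoffs are bounded in a finite game) sum with the expectation via Tonelli, the numerator becomes $\sum_{k=1}^{K}\bigl(\sum_{t=1}^{\infty}\beta^{t-1}p_{t,k}\bigr)U_{i}(s_{i},\boldsymbol{a}_{(-i)}^{(k)})$, and the denominator becomes $\sum_{t=1}^{\infty}\beta^{t-1}\,\mathbb{P}_{(-i)}[t\le\tau]=\sum_{t=1}^{\infty}\beta^{t-1}(1-p_{t,0})$. Because $\tau\ge1$ almost surely, $p_{1,0}=0$, so the denominator is at least $1$ and the ratio is well-defined; dividing gives
\[
V(\tau;s_{i},\nu_{s_{i}})=\sum_{k=1}^{K}\frac{\sum_{t=1}^{\infty}\beta^{t-1}p_{t,k}}{\sum_{t=1}^{\infty}\beta^{t-1}(1-p_{t,0})}\,U_{i}(s_{i},\boldsymbol{a}_{(-i)}^{(k)}).
\]
The coefficient of $U_{i}(s_{i},\boldsymbol{a}_{(-i)}^{(k)})$ is by definition exactly $\alpha(\nu_{s_{i}},\tau)(\boldsymbol{a}_{(-i)}^{(k)})$, so the right-hand side is $\sum_{k}\alpha(\nu_{s_{i}},\tau)(\boldsymbol{a}_{(-i)}^{(k)})U_{i}(s_{i},\boldsymbol{a}_{(-i)}^{(k)})=U_{i}(s_{i},\alpha(\nu_{s_{i}},\tau))$, as claimed.

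There is no deep obstacle here: the statement is essentially a bookkeeping identity expressing the ratio-of-discounted-sums form of the Gittins-index auxiliary value as expected stage-game utility against the ``time-averaged'' opponent play it implicitly defines. The only points needing care are (a) handling the stopping symbol $\varnothing$ so that $U_{i}$ is never applied to it — done by the indicator decomposition above; (b) justifying the Tonelli interchange of the infinite discounted sum with the expectation, which follows from $\beta<1$ and boundedness of payoffs; and (c) checking that the normalizing denominator $\sum_{t}\beta^{t-1}(1-p_{t,0})$ is strictly positive, which follows from $\tau\ge1$ (hence $p_{1,0}=0$). I would present the argument in precisely the order above: rewrite the numerator, rewrite the denominator, divide, and then match the coefficients against the definition of $\alpha(\nu_{s_{i}},\tau)$.
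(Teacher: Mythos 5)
Your proof is correct and is exactly the bookkeeping computation the paper has in mind: the paper omits the argument, citing Fudenberg and He (2017), and that argument is precisely this interchange of the discounted sum with the expectation followed by matching coefficients against the definition of the synthetic correlated profile. One cosmetic point: since payoffs may be negative, the interchange is by Fubini (absolute convergence, from bounded payoffs and $\beta<1$) rather than Tonelli, which your justification already in effect supplies.
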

The proof is the same as in \citet{fudenberg_he_2017} and is omitted.\footnote{Notice that even though $i$ starts with the belief that opponents
randomize independently at different information sets, and also holds
an independent prior belief, $V(\tau;s_{i},\nu_{s_{i}})$ may not
be the payoff of playing $s_{i}$ against a independent randomizations
by the opponent because of the endogenous correlation that we discussed
in the text.}

Consider now the situation where $i$ and $j$ share the same beliefs
about play of $-ij$ on the common information sets $F_{i}[s_{i}]\cap F_{j}[s_{j}]\subseteq\mathcal{H}_{-ij}$.
For any pure-strategy stopping time $\tau_{j}$ of $j$, we define
a random stopping rule of $i$, the \emph{mimicking stopping time}
for $\tau_{j}$. Lemma \ref{lem:mimic} will establish that the mimicking
stopping time generates a synthetic correlated profile that matches
the corresponding profile of $\tau_{j}$ on $F_{i}[s_{i}]\cap F_{j}[s_{j}]$.

Note that $\tau_{j}$ maps $j$'s internal histories to stopping decisions,
which do not live in the same space as $i$'s internal histories.
In particular, $\tau_{j}$ could make use of $i$'s play to decide
whether to stop. To mimic such a rule, $i$ makes use of external
histories, which include both the common component of $i$'s internal
history on $F_{i}[s_{i}]\cap F_{j}[s_{j}],$ as well as simulated
histories on $F_{j}[s_{j}]\backslash(F_{i}[s_{i}]\cap F_{j}[s_{j}]).$

For a given isomorphism $\varphi$ between $i$ and $j$ with $\varphi(s_{i})=s_{j}$
and $F_{i},F_{j},$ we may write $F_{i}[s_{i}]=F^{C}\cup\bar{F}^{-i}$
with $F^{C}\subseteq\mathcal{H}_{-ij}$ and $\bar{F}^{-i}\subseteq\mathcal{H}_{-i}$.
Similarly, we may write $F_{j}[s_{j}]=F^{C}\cup\bar{F}^{-j}$ with
$\bar{F}^{-j}\subseteq\mathcal{H}_{-j}$. (So, $F^{C}$ is the \textbf{c}ommon
information sets that are observed after both $s_{i}$ and $s_{j}$.)
Whenever $j$ plays $s_{j},$ they observes some $(\boldsymbol{a}_{(C)},\boldsymbol{a}_{(-j)})\in(\times_{h\in F^{C}}A_{h})\times(\times_{h\in\bar{F}^{-j}}A_{h})$,
where $\boldsymbol{a}_{(C)}$ is a profile of actions at information
sets in $F^{C}$ and $\boldsymbol{a}_{(-j)}$ is a profile of actions
at information sets in $\bar{F}^{-j}$. So a pure-strategy stopping
rule in the auxiliary problem defining $j$'s Gittins index for $s_{j}$
is a function $\tau_{j}:\cup_{t\ge1}[(\times_{h\in F^{C}}A_{h})\times(\times_{h\in\bar{F}^{-j}}A_{h})]^{t}\to\{0,1\}$
that maps finite histories in $\mathbb{O}_{j}$to stopping decisions,
where ``0'' means continue and ``1'' means stop.
\begin{defn}
Player $i$'s\emph{ mimicking stopping rule for $\tau_{j}$ }draws
$\alpha^{-j}\in\times_{h\in\bar{F}^{-j}}\Delta(A_{h})$ from $j$'s
belief $\nu_{s_{j}}$ on $\bar{F}^{-j}$, and then draws $(\boldsymbol{a}_{(-j),\ell})_{\ell\ge1}$
by independently generating $\boldsymbol{a}_{(-j),\ell}$ from $\alpha^{-j}$
each period. Conditional on $(\boldsymbol{a}_{(-j),\ell}),$ $i$
stops according to the rule $(\tau_{i}|(\boldsymbol{a}_{(-j),\ell}))((\boldsymbol{a}_{(C),\ell},\boldsymbol{a}_{(-i),\ell})_{\ell=1}^{t}):=\tau_{j}((\boldsymbol{a}_{(C),\ell},\boldsymbol{a}_{(-j),\ell})_{\ell=1}^{t}).$\footnote{Note this is a valid (stochastic) stopping time, as the event $\{\tau_{i}\le T\}$
only depends on $i$'s observations in $\mathbb{O}_{i}$ in the first
$T$ periods, plus some private randomizations of $i.$}
\end{defn}
That is, the mimicking stopping rule involves ex-ante randomization
across a family of pure-strategy stopping rules $\tau_{i}|(\boldsymbol{a}_{(-j),\ell})_{\ell=1}^{\infty}$,
indexed by $(\boldsymbol{a}_{(-j),\ell})_{\ell=1}^{\infty}$. First,
$i$ draws a behavior strategy on the information sets $\bar{F}^{-j}$
according to $j$'s belief about $-j$'s play there. Then, $i$ simulates
an infinite sequence $(\boldsymbol{a}_{(-j),\ell})_{\ell=1}^{\infty}$
of $i$'s play using this drawn behavior strategy and follows the
pure-strategy stopping rule $\tau_{i}|(\boldsymbol{a}_{(-j),\ell})_{\ell=1}^{\infty}.$

As in the definition of internal histories, the mimicking strategy
and $i$'s belief $\nu_{s_{i}}$ generates a stochastic process $(\tilde{\boldsymbol{a}}_{(-i),t},\tilde{\boldsymbol{a}}_{(C),t})_{t\ge1}$
of internal histories for $i$ (representing actions on $F_{i}[s_{i}]$
that $i$ anticipates seeing when they plays $s_{i}$). It also induces
a stochastic process $(\tilde{\boldsymbol{e}}_{(-j),t},\tilde{\boldsymbol{e}}_{(C),t})_{t\ge1}$
of ``external histories'' defined in the following way:
\begin{defn}
The stochastic process of \emph{external historie}s $(\tilde{\boldsymbol{e}}_{(-j),t},\tilde{\boldsymbol{e}}_{(C),t})_{t\ge1}$
is defined from the process of internal histories $(\tilde{\boldsymbol{a}}_{(-i),t},\tilde{\boldsymbol{a}}_{(C),t})_{t\ge1}$
that $\tau_{i}$ generates and given by: (i) if $\tau_{i}<t$, then
$(\tilde{\boldsymbol{e}}_{(-j),t},\tilde{\boldsymbol{e}}_{(C),t})=\varnothing;$
(ii) otherwise, $\tilde{\boldsymbol{e}}_{(C),t}=\tilde{\boldsymbol{a}}_{(C),t}$,
and $\tilde{\boldsymbol{e}}_{(-j),t}$ is the $t$-th element of the
infinite sequence $(\boldsymbol{a}_{(-j),\ell})_{\ell=1}^{\infty}$
that $i$ simulated before the first period of the auxiliary problem.
\end{defn}
Write $\mathbb{P}_{e}$ for the distribution over the sequence of
of external histories generated by $i$'s mimicking stopping time
for $\tau_{j},$ which is a function of $\tau_{j},\nu_{s_{j}}$, and
$\nu_{s_{i}}$.\footnote{To understand the distinction between internal and external histories,
note that the probability of $i$'s first-period internal history
satisfying $(\tilde{\boldsymbol{a}}_{(-i),1},\tilde{\boldsymbol{a}}_{(C),1})=(\bar{\boldsymbol{a}}_{(-i)},\bar{\boldsymbol{a}}_{(C)})$
for some fixed values $(\bar{\boldsymbol{a}}_{(-i)},\bar{\boldsymbol{a}}_{(C)})\in\times_{h\in F_{i}[s_{i}]}A_{h}$
is given by the probability that a mixed play $\alpha_{-i}$ on $F_{i}[s_{i}],$
drawn according to $i$'s belief $\nu_{s_{i}},$ would generate the
profile of actions $(\bar{\boldsymbol{a}}_{(-i)},\bar{\boldsymbol{a}}_{(C)}).$
On the other hand, the probability of $i$'s first-period external
history satisfying $(\tilde{\boldsymbol{e}}_{(-j),1},\tilde{\boldsymbol{e}}_{(C),1})=(\bar{\boldsymbol{a}}_{(-j)},\bar{\boldsymbol{a}}_{(C)})$
for some fixed values $(\bar{\boldsymbol{a}}_{(-j)},\bar{\boldsymbol{a}}_{(C)})\in\times_{h\in F_{j}[s_{j}]}A_{h}$
also depends on $j$'s belief $\nu_{s_{j}}$, for this belief determines
the distribution over $(\boldsymbol{a}_{(-j),\ell})_{\ell=1}^{\infty}$
drawn before the start of the auxiliary problem.}

When using the mimicking stopping time for $\tau_{j}$ in the auxiliary
problem, $i$ expects to see the same distribution of $-ij$'s play
before stopping as $j$ does when using $\tau_{j}$, on the information
sets in $F_{i}[s_{i}]\cap F_{j}[s_{j}]$. This is formalized in the
next lemma.
\begin{lem}
\label{lem:mimic} Suppose the game is isomorphcially factorable for
$i$ and $j$ with $\varphi(s_{i})=s_{j}$, and suppose $i$ holds
belief $\nu_{s_{i}}$ over play in $F_{i}[s_{i}]$ and $j$ holds
belief $\nu_{s_{j}}$ over play in $F_{j}[s_{j}],$ such that $\nu_{s_{i}}|_{F_{i}[s_{i}]\cap F_{j}[s_{j}]}=\nu_{s_{j}}|_{F_{i}[s_{i}]\cap F_{j}[s_{j}]}$,
that is the two sets of beliefs match when marginalized to the common
information sets in $\mathcal{H}_{-ij}$. Let $\tau_{i}$ be $i$'s
mimicking stopping time for $\tau_{j}$. Then, the synthetic correlated
profile $\alpha(\nu_{s_{j}},\tau_{j})$ marginalized to the information
sets of $-ij$ is the same as $\alpha(\nu_{s_{i}},\tau_{i})$ marginalized
to the same information sets.
\end{lem}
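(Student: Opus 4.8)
The plan is to reduce Lemma~\ref{lem:mimic} to an equality of laws of stopped stochastic processes. Write $F^{C}=F_{i}[s_{i}]\cap F_{j}[s_{j}]\subseteq\mathcal{H}_{-ij}$, $F_{i}[s_{i}]=F^{C}\cup F^{j}$ and $F_{j}[s_{j}]=F^{C}\cup F^{i}$ as in the text, and recall $\beta=\delta\gamma$. The first step is to record how a synthetic correlated profile restricts to $\mathcal{H}_{-ij}$: grouping the weights $\tfrac{\sum_{t}\beta^{t-1}p_{t,k}}{\sum_{t}\beta^{t-1}(1-p_{t,0})}$ in the definition of $\alpha(\nu_{s},\tau)$ according to the $F^{C}$-component of each profile $\boldsymbol{a}_{(-i)}^{(k)}$, and using $1-p_{t,0}=\mathbb{P}[\tau\ge t]$, one obtains that for every action profile $\boldsymbol{a}_{(C)}$ on $F^{C}$,
\[
\bigl(\text{marg}_{\mathcal{H}_{-ij}}\,\alpha(\nu_{s},\tau)\bigr)(\boldsymbol{a}_{(C)})=\frac{\sum_{t\ge1}\beta^{t-1}\,\mathbb{P}\bigl[\tilde{\boldsymbol{a}}_{(C),t}=\boldsymbol{a}_{(C)},\ \tau\ge t\bigr]}{\sum_{t\ge1}\beta^{t-1}\,\mathbb{P}[\tau\ge t]},
\]
where $\tilde{\boldsymbol{a}}_{(C),t}$ denotes the $F^{C}$-component of the relevant internal history at period $t$. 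Since $i$ and $j$ share the same $\beta$, the lemma follows once I show that the joint law of $\bigl((\tilde{\boldsymbol{a}}_{(C),t})_{t\ge1},\tau_{i}\bigr)$ under $i$'s auxiliary problem with belief $\nu_{s_{i}}$ and mimicking rule $\tau_{i}$ coincides with the joint law of $\bigl((\tilde{\boldsymbol{a}}_{(C),t})_{t\ge1},\tau_{j}\bigr)$ under $j$'s auxiliary problem with belief $\nu_{s_{j}}$ and rule $\tau_{j}$.

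For the second step I would exhibit an explicit coupling on one probability space. Because each of $\nu_{s_{i}},\nu_{s_{j}}$ lies in a product space $\times_{h}\Delta(\Delta(A_{h}))$, it induces beliefs that are independent across information sets; in particular the draw of the mixed action on $F^{C}$ is independent of the draw on $F^{j}$ (resp.\ $F^{i}$), and by hypothesis the $F^{C}$-marginals of $\nu_{s_{i}}$ and $\nu_{s_{j}}$ agree. So draw: (a) a mixed action $\alpha^{C}$ on $F^{C}$ from this common belief, and then an i.i.d.\ sequence $(\boldsymbol{a}_{(C),\ell})_{\ell\ge1}$ from $\alpha^{C}$; (b) a mixed action $\alpha^{i}$ on $F^{i}$ from $\nu_{s_{j}}$'s $F^{i}$-marginal, and then an i.i.d.\ sequence $(\boldsymbol{a}_{(i),\ell})_{\ell\ge1}$ from $\alpha^{i}$, independent of (a) --- together with a further independent block feeding $i$'s internal $F^{j}$-observations, which $\tau_{i}$ ignores. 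By the construction of $j$'s internal-history process, the pair $\bigl((\boldsymbol{a}_{(C),\ell})_{\ell},(\boldsymbol{a}_{(i),\ell})_{\ell}\bigr)$ from (a)--(b) has exactly the law of $j$'s $(F^{C},F^{i})$-observation stream, and $j$ stops at the first $t$ with $\tau_{j}\bigl((\boldsymbol{a}_{(C),\ell},\boldsymbol{a}_{(i),\ell})_{\ell=1}^{t}\bigr)=1$. On the $i$ side, the mimicking rule's simulated sequence is drawn from $\nu_{s_{j}}$'s $F^{i}$-marginal --- i.e.\ it \emph{is} block (b) --- $i$'s genuine $F^{C}$-observations are block (a), and $\tau_{i}$ also stops at the first $t$ with $\tau_{j}\bigl((\boldsymbol{a}_{(C),\ell},\boldsymbol{a}_{(i),\ell})_{\ell=1}^{t}\bigr)=1$. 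Hence, path by path on this space, $i$'s $F^{C}$-observation stream equals $j$'s and $\tau_{i}=\tau_{j}$, so the two joint laws coincide. Plugging into the displayed formula yields $\text{marg}_{\mathcal{H}_{-ij}}\,\alpha(\nu_{s_{j}},\tau_{j})=\text{marg}_{\mathcal{H}_{-ij}}\,\alpha(\nu_{s_{i}},\tau_{i})$.

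The main obstacle is purely bookkeeping: one must keep three processes straight --- $i$'s internal history over $F_{i}[s_{i}]$, $i$'s external history over $F_{j}[s_{j}]$ that drives $\tau_{i}$, and $j$'s internal history over $F_{j}[s_{j}]$ --- and check two matching facts: that the mimicking rule was defined precisely so that its external $F^{i}$-component is sampled from the same belief that $\nu_{s_{j}}$ uses for that block, and that the product structure of the beliefs across information sets is what decouples the $F^{C}$-stream from the $F^{i}$- and $F^{j}$-streams. Once the coupling is set up, the conclusion is immediate; the only real content is verifying that $\tau_{j}$ is applied verbatim to $i$'s external history and that $\beta$ is common, so the two ratios in the first display are term-by-term equal.
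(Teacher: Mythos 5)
Your proof is correct and follows essentially the same route as the paper: your Step 1 reduction of the marginalized synthetic profile to the joint law of the $F^{C}$-stream and the stopping time is the paper's summation over $k_{i}$ (resp.\ $k_{j}$) together with its use of $p_{t,(0,0)}=q_{t,(0,0)}$ and the equal $F^{C}$-marginals of $i$'s internal and external histories. Your explicit coupling of the two auxiliary problems on one probability space is simply a path-by-path verification of the distributional identity the paper asserts in one sentence (that $i$'s external-history process under the mimicking rule has the same law as $j$'s internal-history process, because the simulated $F^{i}$-block is drawn from $\nu_{s_{j}}|_{F^{i}}$ and the $F^{C}$-marginal beliefs coincide), so there is no gap.
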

\begin{prop}
\label{prop:Gittins_matching_beliefs} Suppose the game is isomorphcially
factorable for $i$ and $j$ with $\varphi(s_{i})=s_{j}$, $\varphi(s_{i}^{'})=s_{j}^{'},$
where $s_{i}^{*}\ne s_{i}^{'}$. Suppose $i$ is more player-compatible
with $s_{i}^{*}$ than $j$ is with $s_{j}^{*}$. Suppose $i$ holds
belief $\nu_{s_{i}}\in\times_{h\in F_{i}[s_{i}]}\Delta(\Delta(A_{h}))$
about opponents' play after each $s_{i}$ and $j$ holds belief $\nu_{s_{j}}\in\times_{h\in F_{j}[s_{j}]}\Delta(\Delta(A_{h}))$
about opponents' play after each $s_{j},$ such that $\nu_{s_{i}^{*}}|_{F_{i}[s_{i}^{*}]\cap F_{j}[s_{j}^{*}]}=\nu_{s_{j}^{*}}|_{F_{i}[s_{i}^{*}]\cap F_{j}[s_{j}^{*}]}$
and $\nu_{s_{i}^{'}}|_{F_{i}[s_{i}^{'}]\cap F_{j}[s_{j}^{'}]}=\nu_{s_{j}^{'}}|_{F_{i}[s_{i}^{'}]\cap F_{j}[s_{j}^{'}]}$.
If $s_{j}^{*}$ has the weakly highest Gittins index for $j$ under
effective discount factor $0\le\delta\gamma<1$, then $s_{i}^{'}$
does not have the weakly highest Gittins index for $i$ under the
same effective discount factor.
\end{prop}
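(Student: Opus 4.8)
The plan is to transfer the comparison of Gittins indices into a comparison of stage-game best responses, using the identity $V(\tau;s_{i},\nu_{s_{i}})=U_{i}(s_{i},\alpha(\nu_{s_{i}},\tau))$ together with the hypothesis $s_{i}^{*}\succsim s_{j}^{*}$, and then realizing the relevant synthetic opponent play on both sides of the game via mimicking stopping times. Write $G_{i}^{s_{i}}$ and $G_{j}^{s_{j}}$ for the Gittins indices of $s_{i}$ for $i$ and of $s_{j}$ for $j$ under the common effective discount factor $\beta=\delta\gamma$, each being the supremum of the associated auxiliary optimal-stopping problem. It suffices to prove $G_{i}^{s_{i}^{*}}>G_{i}^{s_{i}^{'}}$, since this shows $s_{i}^{'}$ is not among $i$'s strategies with the weakly highest Gittins index. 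I will use that these auxiliary problems attain their optima (standard for bounded rewards and $\beta<1$): let $\tau_{j}^{*}$ be an optimal stopping time for $j$'s $s_{j}^{*}$-problem and $\tau_{i}^{'}$ an optimal stopping time for $i$'s $s_{i}^{'}$-problem.

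First I would construct a full-support correlated profile $\sigma\in\Delta^{\circ}(\mathbb{S})$ against which $s_{j}^{*}$ is weakly optimal for $j$. Set $j$'s synthetic play on the block $F_{j}[s_{j}^{*}]$ equal to $\alpha(\nu_{s_{j}^{*}},\tau_{j}^{*})$; on the block $F_{j}[s_{j}^{'}]$ (where $s_{j}^{'}=\varphi(s_{i}^{'})$) equal to $\alpha(\nu_{s_{j}^{'}},\tau_{j}^{'})$, where $\tau_{j}^{'}$ is $j$'s mimicking stopping time for $\tau_{i}^{'}$ obtained from Lemma~\ref{lem:mimic} applied with the roles of $i$ and $j$ exchanged, using the belief-matching hypothesis on $F_{i}[s_{i}^{'}]\cap F_{j}[s_{j}^{'}]$; on the remaining blocks $F_{j}[s_{j}]$ use any stopping times; and on the $-j$ information sets outside $\bigcup_{s_{j}}F_{j}[s_{j}]$ put arbitrary full-support play, all pieces taken independent across these blocks, which are disjoint by factorability for $j$. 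Since each synthetic profile is totally mixed, extending by any full-support play for $j$ yields $\sigma\in\Delta^{\circ}(\mathbb{S})$. By factorability for $j$ and the identity above, $U_{j}(s_{j},\sigma)=V(\tau_{j}^{s_{j}};s_{j},\nu_{s_{j}})\le G_{j}^{s_{j}}\le G_{j}^{s_{j}^{*}}=U_{j}(s_{j}^{*},\sigma)$ for every $s_{j}$, where the last equality uses optimality of $\tau_{j}^{*}$ and the middle inequality uses that $s_{j}^{*}$ has the weakly highest Gittins index for $j$; hence $s_{j}^{*}$ is weakly optimal for $j$ against $\sigma$.

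Next I would invoke $s_{i}^{*}\succsim s_{j}^{*}$: for every full-support $\tilde{\sigma}\in\Delta^{\circ}(\mathbb{S})$ with $\mathrm{marg}_{-ij}(\tilde{\sigma})=\mathrm{marg}_{-ij}(\sigma)$ we get $U_{i}(s_{i}^{*},\tilde{\sigma})>U_{i}(s_{i}^{'},\tilde{\sigma})$. I then build $\tilde{\sigma}$ to pin down both $i$-indices at once: on $F_{i}[s_{i}^{*}]$ use $\alpha(\nu_{s_{i}^{*}},\tau_{i}^{*})$, with $\tau_{i}^{*}$ the mimicking stopping time for $\tau_{j}^{*}$; on $F_{i}[s_{i}^{'}]$ use $\alpha(\nu_{s_{i}^{'}},\tau_{i}^{'})$; on the remaining $-ij$ information sets copy the joint law of $\sigma$ there; and on $i$'s and $j$'s remaining information sets put arbitrary full-support play, all pieces independent across these disjoint blocks. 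By isomorphic factorability, $F_{i}[s_{i}^{*}]\cap\mathcal{H}_{-ij}=F_{j}[s_{j}^{*}]\cap\mathcal{H}_{-ij}$ and $F_{i}[s_{i}^{'}]\cap\mathcal{H}_{-ij}=F_{j}[s_{j}^{'}]\cap\mathcal{H}_{-ij}$, and these two traces are disjoint by factorability for $i$; the two applications of Lemma~\ref{lem:mimic} (one per block, in opposite directions, each using the corresponding belief-matching hypothesis) give that $\alpha(\nu_{s_{i}^{*}},\tau_{i}^{*})$ and $\alpha(\nu_{s_{j}^{*}},\tau_{j}^{*})$ agree when marginalized to $\mathcal{H}_{-ij}$, and likewise for the primed pair, so $\mathrm{marg}_{-ij}(\tilde{\sigma})=\mathrm{marg}_{-ij}(\sigma)$, and $\tilde{\sigma}$ is full support. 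Using the identity above and factorability for $i$ once more, $G_{i}^{s_{i}^{'}}=V(\tau_{i}^{'};s_{i}^{'},\nu_{s_{i}^{'}})=U_{i}(s_{i}^{'},\tilde{\sigma})<U_{i}(s_{i}^{*},\tilde{\sigma})=V(\tau_{i}^{*};s_{i}^{*},\nu_{s_{i}^{*}})\le G_{i}^{s_{i}^{*}}$, which is the claim.

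The hard part will be the bookkeeping in the third step: one must check that $\sigma$ and $\tilde{\sigma}$ coincide as \emph{joint} distributions on $\mathcal{H}_{-ij}$, not merely information-set by information-set, and this is exactly what the two uses of Lemma~\ref{lem:mimic} deliver, provided one keeps straight that $F_{i}[s_{i}^{*}]$ and $F_{i}[s_{i}^{'}]$ have disjoint traces on $\mathcal{H}_{-ij}$ and that the remaining $-ij$ information sets lie in $F_{j}$-blocks disjoint from $F_{j}[s_{j}^{*}]$ and $F_{j}[s_{j}^{'}]$, so that $\sigma$ and $\tilde{\sigma}$ share the same independence structure there. Applying Lemma~\ref{lem:mimic} ``with the roles of $i$ and $j$ exchanged'' also relies on the symmetry of isomorphic factorability and of the mimicking construction. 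A secondary but genuine point is attainment of the suprema defining $G_{j}^{s_{j}^{*}}$ and $G_{i}^{s_{i}^{'}}$: with only near-optimal stopping times the argument would yield $G_{i}^{s_{i}^{*}}\ge G_{i}^{s_{i}^{'}}$ rather than the strict inequality, since the strict gap furnished by compatibility could otherwise be swamped, so one should either cite existence of an optimal stopping rule for the bounded-reward, geometrically discounted auxiliary problem or supply a separate argument that the gap survives.
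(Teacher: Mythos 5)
Your proposal is correct and follows essentially the same route as the paper's own proof: build synthetic correlated profiles from $j$'s optimal stopping time for $s_j^*$ and $i$'s optimal stopping time for $s_i'$, use the mimicking stopping times and Lemma \ref{lem:mimic} (in both directions) to make the $-ij$ marginals of the two completed full-support profiles agree, then apply the definition of $s_i^*\succsim s_j^*$ and compare $U$-values with Gittins indices via the sub-optimality of the mimicking rules. The only differences are cosmetic (you use arbitrary stopping times on the remaining blocks where the paper uses the optimal ones, which works equally well since only a weak inequality is needed there, and you flag explicitly the attainment of the auxiliary suprema, which the paper takes as given).
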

\begin{proof}
We begin by defining a collection of totally mixed correlated profiles
$(\alpha_{[s_{j}]})_{s_{j}\in\mathbb{S}_{j}}$ where $\alpha_{[s_{j}]}\in\Delta^{\circ}(\times_{h\in F_{j}[s_{j}]}A_{h})$.
For each $s_{j}\ne s_{j}^{'}$ the profile $\alpha_{[s_{j}]}$ is
the synthetic correlated profile $\alpha(\nu_{s_{j}},\tau_{s_{j}}^{*})$,
where $\tau_{s_{j}}^{*}$ is an optimal pure-strategy stopping time
in $j$'s auxiliary stopping problem involving $s_{j}.$ For $s_{j}=s_{j}^{'}$,
the correlated profile $\alpha_{[s_{j}^{'}]}$ is instead the synthetic
correlated profile associated with the mimicking stopping rule for
$\tau_{s_{i}^{'}}^{*}$, i.e. mimicking agent $i$'s pure-strategy
optimal stopping time in $i$'s auxiliary problem for $s_{i}^{'}.$

Next, define a profile of totally mixed correlated actions $(\alpha_{[s_{i}]})_{s_{i}\in\mathbb{S}_{i}}$
for $i$'s opponents on information sets $(F_{i}[s_{i}])_{s_{i}\in\mathbb{S}_{i}}$.
For each $s_{i}\notin\{s_{i}^{*},s_{i}^{'}\},$ just use the marginal
distribution of $\alpha_{[\varphi(s_{i})]}$ constructed before on
$F_{i}[s_{i}]\cap F_{j}[\varphi(s_{i})]$, then arbitrarily specify
play in $F_{i}[s_{i}]\backslash F_{j}[\varphi(s_{i})],$ if any. For
$s_{i}'$ the correlated profile is $\alpha(\nu_{s_{i}^{'}},\tau_{s_{i}^{'}}^{*})$,
i.e. the synthetic move associated with $i$'s optimal stopping rule
for $s_{i}^{'}.$ Finally, for $s_{i}^{*},$ the correlated profile
$\alpha_{[s_{i}^{*}]}$ is the synthetic correlated profile associated
with the mimicking stopping rule for $\tau_{s_{_{j}^{*}}}^{*}$.

From Lemma \ref{lem:mimic}, for every $s_{i}$, the profiles of correlated
actions $\alpha_{[s_{i}]}$ and $\alpha_{[\varphi(s_{i})]}$ agree
when marginalized to the information sets $F_{i}[s_{i}]\cap F_{j}[\varphi(s_{i})]$.
Therefore, $(\alpha_{[s_{i}]})_{s_{i}\in\mathbb{S}_{i}}$ and $(\alpha_{[s_{j}]})_{s_{j}\in\mathbb{S}_{j}}$
can be completed into two totally mixed correlated strategy profiles,
$\tilde{\sigma}$ and $\sigma$ (over $\mathbb{S})$, such that $\tilde{\sigma}|_{F_{i}[s_{i}]\cap F_{j}[\varphi(s_{i})]}=\sigma|_{F_{i}[s_{i}]\cap F_{j}[\varphi(s_{i})]}$
for every $s_{i}$. For each $s_{j}\ne s_{j}^{'},$ the Gittins index
of $s_{j}$ for $j$ is $u_{j}(s_{j},\sigma_{s_{j}}).$ Also, since
$\alpha_{[s_{j}^{'}]}$ is the mixed profile associated with the suboptimal
mimicking stopping time, $u_{j}(s_{j}^{'},\sigma_{s_{j}^{'}})$ is
no larger than the Gittins index of $s_{j}^{'}$ for $j.$ By the
hypothesis that $s_{j}^{*}$ has the weakly highest Gittins index
for $j$,$u_{j}(s_{j}^{*},\sigma_{s_{j}^{*}})\ge\max_{s_{j}\ne s_{j}^{*}}u_{j}(s_{j},\sigma_{s_{j}}).$
By the definition of strong player compatibility, we must also have
$u_{i}(s_{i}^{*},\sigma_{s_{i}^{*}})>\max_{s_{i}\ne s_{i}^{*}}u_{i}(s_{i},\sigma_{s_{i}}),$
so in particular $u_{i}(s_{i}^{*},\sigma)>u_{i}(s_{i}'\sigma_{s_{i}^{'}}).$
But $u_{i}(s_{i}^{*},\sigma_{s_{i}^{*}})$ is no larger than the Gittins
index of $s_{i}^{*},$ for $\alpha_{[s_{i}^{*}]}$ is the synthetic
strategy associated with a suboptimal mimicking stopping time. As
$u_{i}(s_{i}'\sigma_{s_{i}^{'}})$ is equal to the Gittins index of
$s_{i}'$ this shows $s_{i}^{'}$ cannot have even weakly the highest
Gittins index at this belief,for $s_{i}^{*}$ already has a strictly
higher Gittins index than $s_{i}^{'}$ does.
\end{proof}
To see that OPT is index compatible for $s_{i}^{*},s_{j}^{*}$ under
the conditions of Theorem \ref{thm:crossplayer_tremble_foundation},
let histories $y_{i},y_{j}$ and strategy $s_{i}^{'}\ne s_{i}^{*}$
be given with $y_{i,s_{i}^{*}}\sim y_{j,s_{j}^{*}}$, $y_{i,s_{i}^{'}}\sim y_{j,\varphi(s_{i}^{'})}$.
Since $g_{i},g_{j}$ are equivalent priors, $i,j$'s posterior beliefs
match on every $F\in F_{i}[s_{i}]\cap F_{j}[\varphi(s_{i})],$ for
$s_{i}\in\{s_{i}^{*},s_{i}^{'}\}$. After such histories, if $s_{j}^{*}$
has weakly the highest Gittins index for $j$, we use the hypothesis
of player compatibility and Proposition \ref{prop:Gittins_matching_beliefs}
to see that $s_{i}^{'}$ does not have the weakly highest Gittins
index for $i.$

\subsection{\label{subsec:Proof-of-Mimic} Proof of Lemma \ref{lem:mimic}}
\begin{proof}
Let $(\tilde{\boldsymbol{a}}_{(-i),t},\tilde{\boldsymbol{a}}_{(C),t})_{t\ge1}$
and $(\tilde{\boldsymbol{e}}_{(-j),t},\tilde{\boldsymbol{e}}_{(C),t})_{t\ge1}$
be the stochastic processes of internal and external histories for
$\tau_{i}$, with distributions $\mathbb{P}_{-i}$ and $\mathbb{P}_{e}$.
Enumerate possible profiles of actions on $F^{C}$ as $\times_{h\in F^{C}}A_{h}=\{\boldsymbol{a}_{(C)}^{(1)},...,\boldsymbol{a}_{(C)}^{(K_{C})}\},$
possible profiles of actions on $\bar{F}^{-i}$ as $\times_{h\in\bar{F}^{-i}}A_{h}=\{\boldsymbol{a}_{(-i)}^{(1)},...,\boldsymbol{a}_{(-i)}^{(K_{-i})}\},$
and possible profiles of actions on $\bar{F}^{-j}$ as $\times_{h\in\bar{F}^{-j}}A_{h}=\{\boldsymbol{a}_{(-j)}^{(1)},...,\boldsymbol{a}_{(-j)}^{(K_{-j})}\}$.

Write $p_{t,(k_{-i},k_{C})}:=\mathbb{P}_{-i}[(\tilde{\boldsymbol{a}}_{(-i),t},\tilde{\boldsymbol{a}}_{(C),t})=(\boldsymbol{a}_{(-i)}^{(k_{-i})},\boldsymbol{a}_{(C)}^{(k_{C})})]$
for $k_{-i}\in\{1,...,K_{-i}\}$ and $k_{C}\in\{1,...,K_{C}\}.$ Also
write $q_{t,(k_{-j},k_{C})}:=\mathbb{P}_{e}[(\tilde{\boldsymbol{e}}_{(-j),t},\tilde{\boldsymbol{e}}_{(C),t})=(\boldsymbol{a}_{(-j)}^{(k_{-j})},\boldsymbol{a}_{(C)}^{(k_{C})})]$
for $k_{-j}\in\{1,...,K_{-j}\}$ and $k_{C}\in\{1,...,K_{C}\}.$ Let
$p_{t,(0,0)}=q_{t,(0,0)}:=\mathbb{P}_{-i}[\tau_{i}<t]=\mathbb{P}_{e}[\tau_{i}<t]$
be the probability of having stopped before period $t.$

The distribution of external histories that $i$ expects to observe
before stopping under belief $\nu_{s_{i}}$when using the mimicking
stopping rule $\tau_{i}$ is the same as the distribution of internal
histories that $j$ expects to observe when using stopping rule $\tau_{j}$
under belief $\nu_{s_{j}}$, because $i$ simulates the data-generating
process on $\bar{F}^{-j}$ by drawing a mixed action $\alpha^{-j}$
according to $j$'s belief $\nu_{s_{j}}|_{\bar{F}^{-j}}$ and $\nu_{s_{i}}|_{F^{C}}=\nu_{s_{j}}|_{F^{C}}$.
Thus for every $k_{-j}\in\{1,...,K_{-j}\}$ and every $k_{C}\in\{1,...,K_{C}\},$
\[
\frac{\sum_{t=1}^{\infty}(\delta\gamma)^{t-1}q_{t,(k_{-j},k_{C})}}{\sum_{t=1}^{\infty}(\delta\gamma)^{t-1}(1-q_{t,(0,0)})}=\alpha(\nu_{s_{j}},\tau_{j})(\boldsymbol{a}_{(-j)}^{(k_{-j})},\boldsymbol{a}_{(C)}^{(k_{C})}).
\]
For a fixed $\bar{k}_{C}\in\{1,...,K_{C}\}$, summing across $k_{-j}$
gives 
\[
\frac{\sum_{t=1}^{\infty}(\delta\gamma)^{t-1}\sum_{k_{-j}=1}^{K_{-j}}q_{t,(k_{-j},\bar{k}_{C})}}{\sum_{t=1}^{\infty}(\delta\gamma)^{t-1}(1-q_{t,(0,0)})}=\alpha(\nu_{s_{j}},\tau_{j})(\boldsymbol{a}_{(C)}^{(\bar{k}_{C})}).
\]
By definition, the processes $(\tilde{\boldsymbol{a}}_{(-i),t},\tilde{\boldsymbol{a}}_{(C),t})_{t\ge0}$
and $(\tilde{\boldsymbol{e}}_{(-j),t},\tilde{\boldsymbol{e}}_{(C),t})_{t\ge0}$
have the same marginal distribution on the second dimension: 
\[
\sum_{k_{-j}=1}^{K_{-j}}q_{t,(k_{-j},\bar{k}_{C})}=\mathbb{P}_{-i}[\tilde{\boldsymbol{a}}_{(C),t}=\boldsymbol{a}_{(C)}^{(\bar{k}_{C})}]=\sum_{k_{-i}=1}^{K_{-i}}p_{t,(k_{-i},\bar{k}_{C})}.
\]
Making this substitution and using the fact that $p_{t,(0,0)}=q_{t,(0,0)}$,
\[
\frac{\sum_{t=1}^{\infty}(\delta\gamma)^{t-1}\sum_{k_{-i}=1}^{K_{-i}}p_{t,(k_{-i},\bar{k}_{C})}}{\sum_{t=1}^{\infty}(\delta\gamma)^{t-1}(1-p_{t,(0,0)})}=\alpha(\nu_{s_{j}},\tau_{j})(\boldsymbol{a}_{(C)}^{(\bar{k}_{C})}).
\]
But by the definition of synthetic correlated profile, the LHS is
$\sum_{k_{-i}=1}^{K_{-i}}\alpha(\nu_{s_{i}},\tau_{i})(\boldsymbol{a}_{(-i)}^{(k_{-i})},\boldsymbol{a}_{(C)}^{(\bar{k}_{C})})=\alpha(\nu_{s_{i}},\tau_{i})(\boldsymbol{a}_{(C)}^{(\bar{k}_{C})})$.

Since the choice of $\boldsymbol{a}_{(C)}^{(\bar{k}_{C})}\in\times_{h\in F^{C}}A_{h}$
was arbitrary, we have shown that the synthetic profile $\alpha(\nu_{s_{j}},\tau_{j})$
of the original stopping rule $\tau_{j}$ and the one associated with
the mimicking strategy of $i,$ $\alpha(\nu_{s_{i}},\tau_{i}),$ coincide
on $F^{C}$.
\end{proof}
\newpage{}
\begin{center}
\textbf{\Large{}Online Appendix}
\par\end{center}
\section{\label{sec:Proofs-Omitted-from}Proofs Omitted from the Appendix}

\subsection{Proof of Proposition \ref{prop:transitive}}
\begin{proof}
Suppose $s_{k}^{*}$ is weakly optimal for $k$ against some totally
mixed correlated profile $\sigma^{(k)}$. We show that $s_{i}^{*}$
is strictly optimal for $i$ against any totally mixed and correlated
$\sigma^{(i)}$ with the property that $\text{marg}_{-ik}(\sigma^{(k)})=\text{marg}_{-ik}(\sigma^{(i)})$.

To do this, we first modify $\sigma^{(i)}$ into a new totally profile
by copying how the action of $i$ correlates with the actions of $-(ik)$
in $\sigma^{(k)}$. For each $s_{-ik}\in\mathbb{S}_{-ik}$ and $s_{i}\in\mathbb{S}_{i}$,
$\sigma^{(k)}(s_{i},s_{-ik})>0$ since $\text{marg}_{-k}(\sigma^{(k)})\in\Delta^{\circ}(\mathbb{S}_{-k})$.
So write $p(s_{i}\mid s_{-ik}):=\frac{\sigma^{(k)}(s_{i},s_{-ik})}{\sum_{s_{i}^{'}\in\mathbb{S}_{i}}\sigma^{(k)}(s_{i}'s_{-ik})}>0$
as the conditional probability that $i$ plays $s_{i}$ given $-ik$
play $s_{-ik},$ in the profile $\sigma^{(k)}$. Now construct the
profile $\hat{\hat{\sigma}}\in\Delta^{\circ}(\mathbb{S}),$ where
\[
\hat{\hat{\sigma}}(s_{i},s_{-ik},s_{k}):=p(s_{i}\mid s_{-ik})\cdot\sigma^{(i)}(s_{-ik},s_{k}).
\]
 Profile $\hat{\hat{\sigma}}$ has the property that $\text{marg}_{-jk}(\hat{\hat{\sigma}})=\text{marg}_{-jk}(\sigma^{(k)})$.
To see this, note first that because $\hat{\hat{\sigma}}$ and $\sigma^{(k)}$
agree on the $-(ijk)$ marginal $\text{marg}_{-ik}(\sigma^{(k)})=\text{marg}_{-ik}(\sigma^{(i)})$.
Also, by construction, the conditional distribution of $i$'s action
given profile of $(-ijk)$'s actions is the same.

From the hypothesis that $s_{j}^{*}\succsim s_{k}^{*},$ we get $j$
finds $s_{j}^{*}$ strictly optimal against $\hat{\hat{\sigma}}$.

But at the same time, $\text{marg}_{-i}(\hat{\hat{\sigma}})=\text{marg}_{-i}(\sigma^{(i)})$
by construction, so this implies also $\text{marg}_{-ij}(\hat{\hat{\sigma}})=\text{marg}_{-ij}(\sigma^{(i)})$.
From $s_{i}^{*}\succsim s_{j}^{*}$, and the conclusion that $j$
finds $s_{j}^{*}$ strictly optimal against $\hat{\hat{\sigma}}$
just obtained, we get $i$ finds $s_{i}^{*}$ strictly optimal against
$\sigma^{(i)}$ as desired.
\end{proof}

\subsection{\label{subsec:Proof-of-Asymm} Proof of Proposition \ref{prop:asymm}}
\begin{proof}
Suppose that $s_{i}^{*}\succsim s_{j}^{*}$ and that neither (ii)
nor (iii) holds. We show that these assumptions imply $s_{j}^{*}\not\succsim s_{i}^{*}$.

Partition the set $\Delta^{\circ}(\mathbb{S})$ into three subsets,
$\Pi^{+}\cup\Pi^{0}\cup\Pi^{-}$, with $\Pi^{+}$ consisting of $\sigma\in\Delta^{\circ}(\mathbb{S})$
that make $s_{j}^{*}$ strictly better than the best alternative pure
strategy, $\Pi^{0}$ the elements of $\Delta^{\circ}(\mathbb{S})$
that make $s_{j}^{*}$ indifferent to the best alternative, and $\Pi^{-}$
the elements that make $s_{j}^{*}$ strictly worse. (These sets are
well defined because $|\mathbb{S}_{j}|\ge2,$ so $j$ has at least
one alternative pure strategy to $s_{j}^{*}$.) If $\Pi^{0}$ is non-empty,
then there is some $\sigma\in\Pi^{0}$ such that $\sum_{s\in\mathbb{S}}u_{j}(s_{j}^{*},s_{-j})\sigma(s)=\max_{s_{j}^{'}\in\mathbb{S}_{j}}\sum_{s\in\mathbb{S}}u_{j}(s_{j}^{'},s_{-j})\sigma(s).$
Because $s_{i}^{*}\succsim s_{j}^{*}$, $\sum_{s\in\mathbb{S}}u_{i}(s_{i}^{*},s_{-i})\hat{\sigma}(s)>\max_{s_{i}^{'}\in\mathbb{S}_{i}\backslash\{s_{i}^{*}\}}\sum_{s\in\mathbb{S}}u_{i}(s_{i}'s_{-i})\hat{\sigma}(s)$
for every $\hat{\sigma}\in\Delta^{\circ}(\mathbb{S})$ such that $\text{marg}_{-ij}(\sigma)=\text{marg}_{-ij}(\hat{\sigma})$.
Since at least one such $\hat{\sigma}$ exists, we do not have $s_{j}^{*}\succsim s_{i}^{*}$.

Also, if both $\Pi^{+}$ and $\Pi^{-}$ are non-empty, then $\Pi^{0}$
is non-empty. This is because both $\sigma\mapsto\sum_{s\in\mathbb{S}}u_{j}(s_{j}^{*},s_{-j})\sigma(s)$
and $\sigma\mapsto\max_{s_{j}^{'}\in\mathbb{S}_{j}\backslash\{s_{j}^{*}\}}\sum_{s\in\mathbb{S}}u_{j}(s_{j}^{'},s_{-j})\sigma(s)$
are continuous functions. If $\sum_{s\in\mathbb{S}}u_{j}(s_{j}^{*},s_{-j})\sigma(s)-\max_{s_{j}^{'}\in\mathbb{S}_{j}\backslash\{s_{j}^{*}\}}\sum_{s\in\mathbb{S}}u_{j}(s_{j}^{'},s_{-j})\sigma(s)>0$
and also $\sum_{s\in\mathbb{S}}u_{j}(s_{j}^{*},s_{-j})\tilde{\sigma}(s)-\max_{s_{j}^{'}\in\mathbb{S}_{j}\backslash\{s_{j}^{*}\}}\sum_{s\in\mathbb{S}}u_{j}(s_{j}^{'},s_{-j})\tilde{\sigma}(s)<0$,
then some mixture between $\sigma$ and $\tilde{\sigma}$ must belong
to $\Pi^{0}$.

So we have shown that if either $\Pi^{0}$ is non-empty or both $\Pi^{+}$
and $\Pi^{-}$ are non-empty, then $s_{j}^{*}\not\succsim s_{i}^{*}$.

If only $\Pi^{+}$ is non-empty, then $s_{j}^{*}$ is strictly interior
dominant for $j$. Together with $s_{i}^{*}\succsim s_{j}^{*},$ this
would imply that $s_{i}^{*}$ is strictly interior dominant for $i$,
contradicting the assumption that (iii) does not hold.

Finally suppose that only $\Pi^{-}$ is non-empty, so that for every
$\sigma\in\Delta^{\circ}(\mathbb{S})$ there exists a strictly better
pure response than $s_{j}^{*}$ against $\sigma_{-j}$. Then, from
Lemma 4 of \citet{pearce1984rationalizable}, there is s a mixed strategy
$\sigma_{j}$ for $j$ that weakly dominates $s_{j}^{*}$ against
all correlated strategy distributions. This $\sigma_{j}$ strictly
dominates $s_{j}^{*}$ against strategy distributions in $\Delta^{\circ}(\mathbb{S}_{-j}),$
so $s_{j}^{*}$ is strictly interior dominated for $j$. Since (ii)
does not hold, there is a $\sigma_{-i}\in\Delta^{\circ}(\mathbb{S}_{-i})$
against which $s_{i}^{*}$ is a weak best response. Then, the fact
that $s_{j}^{*}$ is not a strict best response against any $\sigma_{-j}\in\Delta^{\circ}(\mathbb{S}_{-j})$
means $s_{j}^{*}\not\succsim s_{i}^{*}.$
\end{proof}

\subsection{Proof of Lemma \ref{lem:epsilon_eqm_translate}}
\begin{proof}
We prove the first statement by contraposition. If $\mathfrak{\mathscr{C}}(\bar{\sigma})$
is not an $\mathfrak{\mathscr{C}}(\boldsymbol{\bar{\epsilon}})$-equilibrium
in the base game, then some $i$ assigns more than the required weight
to some $s_{i}^{'}\in\mathbb{S}_{i}$ that does not best respond to
$\mathfrak{\mathscr{C}}(\bar{\sigma})_{-i}.$ This means no $(s_{i}',n_{i})\in\bar{\mathbb{S}}_{i}$
best responds to $\bar{\sigma}_{-i}$, since all copies of a strategy
are payoff equivalent. Since $\mathfrak{\mathscr{C}}(\bar{\sigma})$
and $\mathfrak{\mathscr{C}}(\boldsymbol{\bar{\epsilon}})$ are defined
by adding up the respective extended-game probabilities, $\mathfrak{\mathscr{C}}(\bar{\sigma})_{i}(s_{i}^{'})>\mathfrak{\mathscr{C}}(\boldsymbol{\bar{\epsilon}})(s_{i}^{'})$
means $\sum_{n_{i}}\bar{\sigma}_{i}(s_{i}',n_{i})>\sum_{n_{i}}\bar{\boldsymbol{\epsilon}}(s_{i}',n_{i})$.
So for at least one $n_{i}^{'},$ $\bar{\sigma}_{i}(s_{i}',n_{i}^{'})>\bar{\boldsymbol{\epsilon}}(s_{i}',n_{i}^{'})$,
that is $\bar{\sigma}_{i}$ assigns more than required weight to the
non best response $(s_{i}',n_{i}^{'})\in\bar{\mathbb{S}}_{i}$. We
conclude $\bar{\sigma}$ is not an $\bar{\boldsymbol{\epsilon}}$-equilibrium,
as desired.

Again by contraposition, suppose $\mathfrak{\mathscr{E}}(\sigma)$
is not an $\mathfrak{\mathscr{E}}(\boldsymbol{\epsilon})$-equilibrium
in the extended game. This means some $i$ assigns more than the required
weight to some $(s_{i}',n_{i}^{'})\in\bar{\mathbb{S}}_{i}$ that does
not best respond to $\mathfrak{\mathscr{E}}(\sigma)_{-i}.$ This implies
$s_{i}^{'}$ does not best respond to $\sigma_{-i}.$ By the definition
of $\mathfrak{\mathscr{E}}(\boldsymbol{\epsilon})$ and $\mathfrak{\mathscr{E}}(\sigma)$,
if $\mathfrak{\mathscr{E}}(\sigma)_{i}(s_{i}',n_{i}^{'})>\mathfrak{\mathscr{E}}(\boldsymbol{\epsilon})(s_{i}',n_{i}^{'})$,
then also $\mathfrak{\mathscr{E}}(\sigma)_{i}(s_{i}',n_{i})>\mathfrak{\mathscr{E}}(\boldsymbol{\epsilon})(s_{i}',n_{i})$
for every $n_{i}$ such that $(s_{i}',n_{i})\in\bar{\mathbb{S}}_{i}$.
Therefore, we also have $\sigma_{i}(s_{i}^{'})>\boldsymbol{\epsilon}(s_{i}^{'})$,
so $\sigma$ is not an $\boldsymbol{\epsilon}$-equilibrium in the
base game as desired.
\end{proof}

\subsection{Proof of Proposition \ref{prop:PCE_translate}}
\begin{proof}
Suppose $\bar{\sigma}^{*}$ is a PCE in the extended game. So, we
have $\bar{\sigma}^{(t)}\to\bar{\sigma}^{*}$ where each $\bar{\sigma}^{(t)}$
is an $\boldsymbol{\bar{\epsilon}}^{(t)}$-PCE, and each $\boldsymbol{\bar{\epsilon}}^{(t)}$
is player-compatible (in the extended game sense). This means each
$\mathfrak{\mathscr{C}}(\boldsymbol{\bar{\epsilon}}^{(t)})$ is player
compatible in the base game sense, and furthermore each $\mathfrak{\mathscr{C}}(\bar{\sigma}^{(t)})$
is an $\mathfrak{\mathscr{C}}(\boldsymbol{\bar{\epsilon}}^{(t)})$-equilibrium
(by Lemma \ref{lem:epsilon_eqm_translate}), hence an $\mathfrak{\mathscr{C}}(\boldsymbol{\bar{\epsilon}}^{(t)})$-PCE.
Since $\boldsymbol{\bar{\epsilon}}^{(t)}\to\boldsymbol{0}$, $\mathfrak{\mathscr{C}}(\boldsymbol{\bar{\epsilon}}^{(t)})\to\boldsymbol{0}$
as well. Since $\bar{\sigma}^{(t)}\to\bar{\sigma}^{*},$ $\mathfrak{\mathscr{C}}(\bar{\sigma}^{(t)})\to\mathfrak{\mathscr{C}}(\bar{\sigma}^{*}).$
We have shown $\mathfrak{\mathscr{C}}(\bar{\sigma}^{*})$ is a PCE
in the base game.

The proof of the other statement is exactly analogous.
\end{proof}

\subsection{Proof of Proposition \ref{prop:learning_with_duplicates}}
\begin{proof}
We have $r_{i}=\text{OPT}_{i}$ if and only if for every $\tilde{y}_{i}\in\tilde{Y}_{i},$
$r_{i}(\psi(\tilde{y}_{i}))$ has the (weakly) higher Gittins index.
Since $r_{i},\tilde{r}_{i}$ are equivalent up to duplicates, this
means for any $\tilde{y}_{i}\in\tilde{Y}_{i},$ $\tilde{r}_{i}(\tilde{y}_{i})$
either puts probability 1 on \textbf{Out} or probability 1 on \textbf{In}
and \textbf{In-d}. Since \textbf{In} and \textbf{In-d} can be viewed
as two identical ways of pulling the risky arm in a two-armed bandit
with one safe arm and one risky arm, $\tilde{r}_{i}$ is optimal if
and only if $\tilde{r}_{i}(\tilde{y}_{i})$ assigns positive probability
1 to \textbf{In} and \textbf{In-d }when the risky arm has a (weakly)
higher Gittins index than the safe one. These two statements are equivalent
when $\tilde{r}_{i},r_{i}$ are equivalent up to duplicates, since
the Gittins index of the risky arm is the same under $\tilde{y}_{i}$
and $\psi(\tilde{y}_{i})$. Similarly, $r_{i}=\text{WFP}_{i}$ if
and only if for every $\tilde{y}_{i}\in\tilde{Y}_{i},$ $r_{i}(\psi(\tilde{y}_{i}))$
has the (weakly) higher ``WFP'' index, defined as the one-period
expected payoff of playing a certain strategy against the weighted
fictitious play conjecture of $-i$'s play. These indices are the
same after history $\tilde{y}_{i}$ in the extended stage game and
after $\psi(\tilde{y}_{i})$ in the original stage game.

Finally, let $X_{i}^{t}$ be the random variable representing $i$'s
play in period $t$ in the base game under policy $r_{i}$ and social
distribution $\sigma_{-i}.$ Let $\tilde{X}_{i}^{t}$ be the random
variable representing $i$'s play in period $t$ in the extended game
under policy $\tilde{r}_{i}$ and social distribution $\tilde{\sigma}_{-i}.$
Because $r_{i},\tilde{r}_{i}$ are equivalent up to duplicates to
the empty history, $\mathbb{P}_{r_{i},\sigma_{-i}}[X_{i}^{1}=\text{\textbf{Out}}]=\mathbb{P}_{\tilde{r}_{i},\tilde{\sigma}_{-i}}[\tilde{X}_{i}^{1}=\text{\textbf{Out}}]$.
Since $\sigma_{-i}$ and $\tilde{\sigma}_{-i}$ are $-i$ equivalent,
$(r_{i},\sigma_{-i})$ and $(\tilde{r}_{i},\tilde{\sigma}_{-i})$
generate the same distribution over length-1 histories (up to duplicates),
i.e. $\mathbb{P}_{r_{i},\sigma_{-i}}[y_{i}]=\mathbb{P}_{\tilde{r}_{i},\tilde{\sigma}_{-i}}[\psi^{-1}(y_{i})]$
for all $y_{i}\in(\{\text{\textbf{In}},\text{\textbf{Out}}\}\times\mathbb{R})$.
By induction suppose $\mathbb{P}_{r_{i},\sigma_{-i}}[y_{i}]=\mathbb{P}_{\tilde{r}_{i},\tilde{\sigma}_{-i}}[\psi^{-1}(y_{i})]$
for all $y_{i}\in(\{\text{\textbf{In}},\text{\textbf{Out}}\}\times\mathbb{R})^{t}$,
for some $t\ge1.$ If $r_{i}(y_{i})=\text{\text{\textbf{Out}}}$,
then using the fact that $r_{i},\tilde{r}_{i}$ are equivalent up
to duplicates, $\tilde{r}_{i}(\tilde{y}_{i})(\text{\text{\textbf{Out}}})=1$
for all $\tilde{y}_{i}\in\psi^{-1}(y_{i}).$ Thus, for all $x\in\mathbb{R},$
by the inductive hypothesis $\mathbb{P}_{r_{i},\sigma_{-i}}[(y_{i},\text{\text{\textbf{Out}}},x)]=\mathbb{P}_{\tilde{r}_{i},\tilde{\sigma}_{-i}}[\psi^{-1}(y_{i})\times(\text{\text{\textbf{Out}}},x)]$,
and $\mathbb{P}_{r_{i},\sigma_{-i}}[(y_{i},\text{\text{\textbf{In}}},x)]=\mathbb{P}_{\tilde{r}_{i},\tilde{\sigma}_{-i}}[\psi^{-1}(y_{i})\times(\text{\text{\textbf{In}}},x)]=\mathbb{P}_{\tilde{r}_{i},\tilde{\sigma}_{-i}}[\psi^{-1}(y_{i})\times(\text{\text{\textbf{In-d}}},x)]=0.$
On the other hand, if $r_{i}(y_{i})=\text{\text{\textbf{In}}}$, then
using the fact that $r_{i},\tilde{r}_{i}$ are equivalent up to duplicates,
$\tilde{r}_{i}(\tilde{y}_{i})(\text{\text{\textbf{In}}})+\tilde{r}_{i}(\tilde{y}_{i})(\text{\text{\textbf{In-d}}})=1$
for all $\tilde{y}_{i}\in\psi^{-1}(y_{i}).$ Thus, for all $x\in\mathbb{R},$
by the inductive hypothesis, $\mathbb{P}_{r_{i},\sigma_{-i}}[(y_{i},\text{\text{\textbf{Out}}},x)]=\mathbb{P}_{\tilde{r}_{i},\tilde{\sigma}_{-i}}[\psi^{-1}(y_{i})\times(\text{\text{\textbf{Out}}},x)]=0$,
and $\mathbb{P}_{r_{i},\sigma_{-i}}[(y_{i},\text{\text{\textbf{In}}},x)]=\mathbb{P}_{\tilde{r}_{i},\tilde{\sigma}_{-i}}[\psi^{-1}(y_{i})\times(\text{\text{\textbf{In}}},x)]+\mathbb{P}_{\tilde{r}_{i},\tilde{\sigma}_{-i}}[\psi^{-1}(y_{i})\times(\text{\text{\textbf{In-d}}},x)].$
In either case, we get $\mathbb{P}_{r_{i},\sigma_{-i}}[y_{i}]=\mathbb{P}_{\tilde{r}_{i},\tilde{\sigma}_{-i}}[\psi^{-1}(y_{i})]$
for all $y_{i}\in(\{\text{\textbf{In}},\text{\textbf{Out}}\}\times\mathbb{R})^{t+1}$,
and also $\mathbb{P}_{r_{i},\sigma_{-i}}[X_{i}^{t}=\text{\textbf{Out}}]=\mathbb{P}_{\tilde{r}_{i},\tilde{\sigma}_{-i}}[\tilde{X}_{i}^{t}=\text{\textbf{Out}}]$.
By induction we get $\mathbb{P}_{r_{i},\sigma_{-i}}[X_{i}^{t}=\text{\textbf{Out}}]=\mathbb{P}_{\tilde{r}_{i},\tilde{\sigma}_{-i}}[\tilde{X}_{i}^{t}=\text{\textbf{Out}}]$
for every $t\ge1,$ thus $\phi_{i}(\text{\textbf{In}};r_{i},\sigma_{-i})=\phi_{i}(\text{\textbf{In}};\tilde{r}_{i},\tilde{\sigma}_{-i})+\phi_{i}(\text{\textbf{In-d}};\tilde{r}_{i},\tilde{\sigma}_{-i})$.
\end{proof}

\section{\label{sec:Refinements-in-the}Refinements in the Link-Formation
Game}
\begin{prop}
Each of the following refinements selects the same subset of pure
Nash equilibria when applied to the anti-monotonic and co-monotonic
versions of the link-formation game: extended proper equilibrium,
proper equilibrium, trembling-hand perfect equilibrium, $p$-dominance,
Pareto efficiency, and strategic stability. Pairwise stability does
not apply to the link-formation game. Finally, the link-formation
game is not a potential game.
\end{prop}
\begin{proof}
\textbf{Step 1. Extended proper equilibrium, proper equilibrium, and
trembling-hand perfect equilibrium allow the ``no links'' equilibrium
in both versions of the game. }For $(q_{i})$ anti-monotonic with
$(c_{i}),$ for each $\epsilon>0$ let N1 and S1 play \textbf{Active}
with probability $\epsilon^{2}$, N2 and S2 play \textbf{Active} with
probability $\epsilon$. For small enough $\epsilon$, the expected
payoff of \textbf{Active} for player $i$ is approximately $(10-c_{i})\epsilon$
since terms with higher order $\epsilon$ are negligible. It is clear
that this payoff is negative for small $\epsilon$ for every player
$i$, and that under the utility re-scalings $\beta_{N1}=\beta_{S1}=10,$
$\beta_{N2}=\beta_{S2}=1,$ the loss to playing \textbf{Active} smaller
for N2 and S2 than for N1 and S1. So this strategy profile is a $(\boldsymbol{\beta},\epsilon)$-extended
proper equilibrium. Taking $\epsilon\to0$, we arrive at the equilibrium
where each player chooses \textbf{Inactive} with probability 1.

For the version with $(q_{i})$ co-monotonic with $(c_{i}),$ consider
the same strategies without re-scalings, i.e. $\boldsymbol{\beta}=\boldsymbol{1}$.
Then already the loss to playing \textbf{Active} smaller for N2 and
S2 than for N1 and S1, making the strategy profile a $(\boldsymbol{1},\epsilon)$-extended
proper equilibrium.

These arguments show that the ``no links'' equilibrium is an extended
proper equilibrium in both versions of the game. Every extended proper
equilibrium is also proper and trembling-hand perfect, which completes
the step.

\textbf{Step 2. $p-$dominance eliminates the ``no links'' equilibrium
in both versions of the game.} Regardless of whether $(q_{i})$ are
co-monotonic or anti-monotonic with $(c_{i})$, under the belief that
all other players choose \textbf{Active} with probability $p$ for
$p\in(0,1)$, the expected payoff of playing \textbf{Active} (due
to additivity across links) is $(1-p)\cdot0+p\cdot(10+30-2c_{i})>0$
for any $c_{i}\in\{14,19\}$.

\textbf{Step 3. Pareto eliminates the ``no links'' equilibrium in
both versions of the game.} It is immediate that the no-links equilibrium
outcome is Pareto dominated by the all-links equilibrium outcome under
both parameter specifications, so Pareto efficiency would rule it
out whether $(c_{i})$ is anti-monotonic or co-monotonic with $(q_{i})$.

\textbf{Step 4. Strategic stability} \citep{kohlberg_strategic_1986}\textbf{
eliminates the ``no links'' equilibrium in both versions of the
game.} First suppose the $(c_{i})$ are anti-monotonic with $(q_{i}).$
Let $\eta=1/100$ and let $\epsilon^{'}>0$ be given. Define $\epsilon_{N1}(\text{\textbf{Active}})=\epsilon_{S1}(\text{\textbf{Active}})=2\epsilon^{'},$
$\epsilon_{N2}(\text{\textbf{Active}})=\epsilon_{S2}(\text{\textbf{Active}})=\epsilon^{'}$
and $\epsilon_{i}(\text{\textbf{Inactive}})=\epsilon^{'}$ for all
players $i$. When each $i$ is constrained to play $s_{i}$ with
probability at least $\epsilon_{i}(s_{i}),$ the only Nash equilibrium
is for each player to choose \textbf{Active} with probability $1-\epsilon^{'}.$
(To see this, consider N2's play in any such equilibrium $\sigma.$
If N2 weakly prefers \textbf{Active}, then N1 must strictly prefer
it, so $\sigma_{N1}(\mathbf{Active})=1-\epsilon^{'}\ge\sigma_{N2}(\mathbf{Active}).$
On the other hand, if N2 strictly prefers \textbf{Inactive}, then
$\sigma_{N2}(\mathbf{Active})=\epsilon^{'}<2\epsilon^{'}\le\sigma_{N1}(\mathbf{Active})$.
In either case, $\sigma_{N1}(\mathbf{Active})\ge\sigma_{N2}(\mathbf{Active})$.)
When both North players choose \textbf{Active} with probability $1-\epsilon^{'}$,
each South player has \textbf{Active} as their strict best response,
so $\sigma_{S1}(\mathbf{Active})=\sigma_{S2}(\mathbf{Active})=1-\epsilon^{'}$.
Against such a profile of South players, each North player has \textbf{Active}
as their strict best response, so $\sigma_{N1}(\mathbf{Active})=\sigma_{N2}(\mathbf{Active})=1-\epsilon^{'}$.

Now suppose the $(c_{i})$ are co-monotonic with $(q_{i})$. Again
let $\eta=1/100$ and let $\epsilon^{'}>0$ be given. Define $\epsilon_{N1}(\text{\textbf{Active}})=\epsilon_{S1}(\text{\textbf{Active}})=\epsilon^{'},$
$\epsilon_{N2}(\text{\textbf{Active}})=\epsilon^{'}/1000,$ $\epsilon_{S2}(\text{\textbf{Active}})=\epsilon^{'}$
and $\epsilon_{i}(\text{\textbf{Inactive}})=\epsilon^{'}$ for all
players $i$. Suppose by way of contradiction there is a Nash equilibrium
$\sigma$ of the constrained game which is $\eta$-close to the \textbf{Inactive}
equilibrium. In such an equilibrium, N2 must strictly prefer \textbf{Inactive},
otherwise N1 strictly prefers \textbf{Active} so $\sigma$ could not
be $\eta$-close to the \textbf{Inactive} equilibrium. Similar argument
shows that S2 must strictly prefer \textbf{Inactive}. This shows N2
and S2 must play \textbf{Active} with the minimum possible probability,
that is $\sigma_{N2}(\text{\textbf{Active}})=\epsilon^{'}/1000$ and
$\sigma_{S2}(\text{\textbf{Active}})=\epsilon^{'}$ . This implies
that, even if $\sigma_{N1}(\text{\textbf{Active}})$ were at its minimum
possible level of $\epsilon^{'}$, S1 would still strictly prefer
playing \textbf{Inactive} because S1 is 1000 times as likely to link
with the low-quality opponent as the high-quality opponent. This shows
$\sigma_{S1}(\text{\textbf{Active}})=\epsilon^{'}$. But when $\sigma_{S1}(\text{\textbf{Active}})=\sigma_{S2}(\text{\textbf{Active}})=\epsilon^{'}$,
$N1$ strictly prefers playing \textbf{Active}, so $\sigma_{N1}(\text{\textbf{Active}})=1-\epsilon^{'}$.
This contradicts $\sigma$ being $\eta$-close to the no-links equilibrium.

\textbf{Step 5. Pairwise stability }\citep{jackson1996strategic}\textbf{
does not apply to this game}. This is because each player chooses
between either linking with every player on the opposite side who
plays \textbf{Active}, or linking with no one. A player cannot selectively
cut off one of their links while preserving the other.

\textbf{Step 6. The game does not have an ordinal potential, so refinements
of potential games }\citep{monderer1996potential}\textbf{ do not
apply}. To see that this is not a potential game, consider the anti-monotonic
parameterization. Suppose a potential $P$ of the form $P(a_{N1},a_{N2},a_{S1},a_{S2})$
exists, where $a_{i}=1$ corresponds to $i$ choosing \textbf{Active},
$a_{i}=0$ corresponds to $i$ choosing \textbf{Inactive}. We must
have 
\[
P(0,0,0,0)=P(1,0,0,0)=P(0,0,0,1),
\]
since a unilateral deviation by one player from the \textbf{Inactive}
equilibrium does not change any player's payoffs. But notice that
$u_{N1}(1,0,0,1)-u_{N1}(0,0,0,1)=10-14=-4,$ while $u_{S2}(1,0,0,1)-u_{S2}(1,0,0,0)=30-19=11$.
If the game has an ordinal potential, then both of these expressions
must have the same sign as $P(1,0,0,1)-P(1,0,0,0)=P(1,0,0,1)-P(0,0,0,1)$,
which is not true. A similar argument shows the co-monotonic parameterization
does not have a potential either.
\end{proof}

\end{document}